\DeclareMathOperator{\tr}{tr}
\DeclareMathOperator{\diag}{diag}
\DeclareMathOperator{\var}{var}   
\theoremstyle{plain}
\newtheorem{theorem}{Theorem}
\newtheorem{assumption}{Assumption}
\newtheorem{lemma}[theorem]{Lemma}
\newtheorem{cor}{Corollary}
\theoremstyle{definition}
\newtheorem{remark}{Remark}
\newtheoremstyle{specialcasestyle}{1mm}{1mm}{\upshape}{}{\bfseries\upshape}{.}{0mm}{}
\theoremstyle{specialcasestyle}
\newcommand{\figref}[1]{Fig.~\protect\ref{#1}}
\newcommand{\bh}{{\bf h}}
\newcommand{\bD}{{\bf D}}
\newcommand{\bx}{{\bf x}}
\newcommand{\bv}{{\bf v}}
\newcommand{\by}{{\bf y}}
\newcommand{\bz}{{\bf z}}
\newcommand{\bX}{{\bf X}}
\newcommand{\bA}{{\bf A}}
\newcommand{\bR}{{\bf R}}
\newcommand{\bI}{{\bf I}}
\newcommand{\bC}{{\bf C}}
\newcommand{\ex}{{\mathbb E}}
\newcommand{\go}{{\mathcal{O}}}
\newcommand{\bnu}{{\boldsymbol\nu}}
\newcommand{\btt}{{\boldsymbol\theta}}
\newcommand{\bxi}{{\boldsymbol\xi}}
\newcommand{\bzero}{{\boldsymbol 0}}
\pgfplotsset{compat=1.16}
\begin{document}

\title{Over-The-Air Federated Learning Over Scalable Cell-free Massive MIMO
}
%\thanks{Identify applicable funding agency here. If none, delete this.}

\author{Houssem Sifaou,~\IEEEmembership{Member,~IEEE}, Geoffrey Ye Li,~\IEEEmembership{Fellow,~IEEE}}
%
%\IEEEauthorblockA{{Department of Electrical and Electronic Engineering,}\\
%{Imperial College London, UK}\\
%\textit{emails:  \{h.sifaou,geoffrey.li\}@imperial.ac.uk}
%}}

\maketitle

\begin{abstract}

Cell-free massive MIMO is emerging as a promising technology for future wireless communication systems, which is expected to offer uniform coverage and high spectral efficiency compared to classical cellular systems. We study in this paper how cell-free massive MIMO can support federated edge learning. Taking advantage of the additive nature of the wireless multiple access channel, over-the-air computation is exploited, where the clients send their local updates simultaneously over the same communication resource. Such an approach, known as over-the-air federated learning (OTA-FL), is proven to alleviate the communication overhead of federated learning over wireless networks. Considering channel correlation and only imperfect channel state information available at the central server, we propose a practical implementation of OTA-FL over cell-free massive MIMO. The convergence of the proposed implementation is studied analytically and experimentally, confirming the benefits of cell-free massive MIMO for OTA-FL.  
\end{abstract}
\begin{IEEEkeywords}
Federated learning, over-the-air computation, cell-free massive MIMO, imperfect CSI.
\end{IEEEkeywords}

\section{Introduction}
With the exponential growth in data collection from distributed devices and remarkable advancements in AI applications, there is a growing interest in distributed learning, which leverages the computing capabilities of these devices. Distributed learning offers two primary benefits. Firstly, it eliminates the need to transfer massive volumes of high-dimensional data from the collecting devices to central servers (CSs) for further processing. Secondly, it inherently ensures privacy since local data is not shared. This work specifically focuses on federated learning (FL), a popular technique in distributed learning, where a CS coordinates global model training by communicating with multiple distributed clients \cite{pmlr-v54-mcmahan17a,kairouz2021advances,9530714}. Instead of sharing local data, the parameters of the machine learning (ML) model are transferred. The FL process involves multiple training rounds, during which the global model is shared with the clients. The clients then perform several local iterations using their respective local data and send their model updates back to the CS. These model updates are subsequently aggregated to obtain the new global model.

In this work, we focus specifically on over-the-air FL (OTA-FL), which has been recently proposed to alleviate the communication overhead of the FL process \cite{Kai2020,amiri2020federated}. By exploiting the additive nature of the multiple access channel, OTA-FL allows the clients to send their local updates simultaneously and the CS can obtain the sum directly from the received signal. Such an approach has been extensively studied in literature and has been proven to enhance the communication efficiency of FL over wireless networks. The main advantage of OTA-FL is that the communication resources needed to reach convergence do not scale with the number of clients. In contrast, the communication resources of digital transmission based FL scale with the number of clients \cite{harnessing}.

In this paper, we consider implementing OTA-FL over cell-free massive MIMO. The latter is a promising technique for future wireless communication systems, which is expected to bring several benefits such as ultra-high reliability, high energy efficiency, ultra-low latency, and ubiquitous and uniform coverage \cite{matthaiou2021road,interdonato2019ubiquitous,he2021cell}. The main idea of cell-free massive MIMO is deploying a large number of distributed APs, connected to a central processing unit (CPU), that serve all users in a wide coverage area~\cite{he2021cell}. Specifically, we consider in this work the scalable cell-free architecture recently proposed in \cite{Emilcellfree}. It is a user-centric architecture where a cluster composed of a finite number of APs is formed around each user terminal (UT). The latter will only communicate with the APs in its cluster. This is the main difference compared to the early architectures proposed for cell-free massive MIMO, where each UT is served by all the APs in the network \cite{venkatesan2007network,5594708}. The scalable architecture \cite{Emilcellfree} allows the number of UTs and APs to be very large while keeping all the signal processing algorithms and the fronthaul signaling practically implementable. The main objective of our work is to explore the benefits that cell-free massive MIMO can bring for OTA-FL.

Significant efforts have been recently dedicated for the design and performance analysis of FL over wireless networks \cite{tran2019federated,zhu2019broadband,amiri2021blind,harnessing,9042352,Kai2020,Liu2021,amiri2020federated,vu2020cell,jeon2020compressive,Sery2020,Sery2020a, sifaou2021robust,sifaou2022over,chang2020communication,zhu2020one,yang2020age,shi2020device, yang2019scheduling,sun2020energy,amiri2020update,ye2022decentralized}, referred to as federated edge earning (FEEL). Some of the studies focused on exploring over-the-air computation to reduce the communication overhead of the FL process \cite{Kai2020,amiri2021blind,amiri2020federated,sifaou2021robust,sifaou2022over,Sery2020,Sery2020a,harnessing,zhu2020one}. While most of the efforts considered first-order optimization methods, namely stochastic gradient descent (SGD), some works have studied the convergence of second-order methods such as the alternating direction method of multipliers (ADMM) \cite{harnessing,zhou2021communication}. Moreover, different scheduling techniques have been proposed for FEEL \cite{yang2020age,shi2020device, yang2019scheduling,sun2020energy,amiri2020update}, where a subset of clients is selected at each communication round considering the limited communication resources. Resource allocation has been also an active research direction in FEEL \cite{vu2020cell,ren2020accelerating,zeng2021energy,chen2020joint,dinh2020federated,shi2020joint}, with the aim of accelerating convergence and reducing the effect of the wireless channel. Furthermore, assuming multiple antennas at the CS and designing beamforming and receiving techniques have been considered in \cite{amiri2021blind,Kai2020,jeon2020compressive}. In \cite{amiri2021blind} a blind FEEL scheme was proposed assuming the CS is equipped with a large number of antennas and imperfect channel state information (CSI) available at the CS. The settings considered in \cite{amiri2021blind} are such that the CS is connected to a base station (BS) equipped with a large number of antennas, communicating with a finite number of clients participating in the FL process. Our work differs from \cite{amiri2021blind} in two aspects. Particularly, we consider in our work cell-free massive MIMO instead of cellular MIMO and the more practical scenario where the number of APs (and thus the number of antennas) and the number of clients are both large. To the best of our knowledge, our work in this paper is the first effort to study the performance of OTA-FL in the context of cell-free massive MIMO.

  OTA-FL has been studied extensively in both point-to-point and MIMO settings. However, these studies cannot be directly applied to cell-free massive MIMO scenarios. In Section III, we thoroughly demonstrate the vanishing interference terms and establish the necessary conditions specific to cell-free massive MIMO systems. These conditions differ significantly from those studied in the context of cellular massive MIMO, as outlined in \cite{amiri2021blind}. Particularly, while the conditions for cellular massive MIMO focus on a large number of antennas and a finite number of clients, our findings highlight the importance of both a large number of clients and a large number of antennas for cell-free massive MIMO. Moreover, proving the interference vanishing necessitates some practical assumptions specific to this setting. These assumptions, detailed in (\ref{assump}), are essential to justify the feasibility of OTA-FL in cell-free MIMO systems. 
Our proposed analysis also addresses a critical challenge often overlooked in existing literature, which is the scalability issue. Previous studies on OTA-FL in cellular MIMO predominantly focus on single-cell scenarios with a limited number of clients, neglecting inter-cell interference. In contrast, our analysis considers the entire cell-free network with a substantial number of clients, presenting a more scalable approach. By investigating OTA-FL within this comprehensive network framework, we contribute to the understanding of its applicability and performance in large-scale deployments. Finally, a strong motivation for this work is to study the advantages of implementing OTA-FL in terms of energy efficiency compared with cellular MIMO. This is an important research question and our work is a first attempt to answer it. Our experiments show clearly the advantage of cell-free massive MIMO systems in terms of energy efficiency, which is an essential factor in FL taking into account the limited power at the clients' devices.

The contributions of our work can be summarized as follows:
\begin{itemize}
\item We study, for the first time, the performance of OTA-FL over cell-free massive MIMO and investigate the benefits that the latter can bring for OTA-FL in terms of energy efficiency.
\item We propose a practical implementation of OTA-FL over cell-free massive MIMO under reasonable assumptions and a practical channel model. Specifically, we propose a receiver based on maximum ratio combining (MRC) that is shown to mitigate the effect of the fading channel as well as the interference.
\item We provide the convergence rate of the proposed implementation and test its performance via extensive experiments. Numerical results show that OTA-FL over cell-free massive MIMO requires lower power constraints at the clients to reach convergence compared to cellular massive MIMO. 
\end{itemize}

Throughout the paper, boldface lowercase is used for denoting column vectors, $\bx$, and upper case for matrices, $\bX$. The superscripts $^T$, $^H$, and $^*$ denote the transpose, the conjugate transpose and the conjugate, respectively. $\bI_p$ denotes the $p \times p$ identity matrix and $\|.\|$ is used to denote the $\ell_2$ norm. The trace of a matrix $\bA$ is denoted by $\tr \bA$ and $\diag(\bA_1, \cdots,\bA_n)$ stands for the block-diagonal matrix with the square matrices  $\bA_1, \cdots,\bA_n$ on the diagonal. $\mathcal{NC}(\bzero, \bR)$ denotes the multivariate circularly symmetric complex Gaussian distribution with correlation matrix $\bR$. $|\mathcal{S}|$ denotes the cardinality of a set $\mathcal{S}$. The expected value and the variance of the random quantity $x$ are denoted as $\mathbb{E}\{x\}$ and $\var\{x\}$, respectively. $Re \ x$ and $Im \ x$ are used to denote the real and imaginary parts of the complex number $x$. $[n]$ denotes the set $\{1,\cdots,n\}$. The notations $f(n) = \mathcal{O}(g(n))$ and $f(n) = o(g(n))$  mean that $\left|\frac{f(n)}{g(n)}\right|$ is bounded or approaches zero, respectively, as $n\to \infty$.

The remainder of the paper is structured as follows. In the next section, we present the system model and introduce  OTA-FL and cell-free massive MIMO. In Section \ref{Proposed_implementation}, the proposed FEEL scheme is detailed, while in Section \ref{conv}, its convergence is studied. We provide, in Section \ref{num}, extensive numerical experiments to test the performance of the proposed implementation. Finally, concluding remarks are drawn in Section \ref{conc}.

\section{System Model}
\label{system_model}
In this section, we give a brief overview of OTA-FL and cell-free massive MIMO. Particularly, we introduce the scalable user-centric architecture of cell-free massive MIMO that will be considered in our proposed implementation.

\subsection{Federated learning}
We consider a FL system, where $N$ clients, each with a local dataset $\mathcal{D}_n$ of size $D_n=|\mathcal{D}_n|$, communicate with a CS to collaboratively train a global model. The result of the FL process is the optimal parameter vector, $\btt^\star\in\mathbb{R}^p$, that minimizes a global loss function $F(\btt)$ given by
\begin{equation}
F(\btt)=\sum_{n=1}^N \frac{D_n}{D}F_n(\btt),
\end{equation}
where $D=\sum_{n=1}^N D_n $ and $F_n(\btt)$ is the local empirical loss function at client $k$ given by
\begin{equation}
F_n(\btt)=\frac{1}{D_n}\sum_{\bxi\in \mathcal{D}_n}\mathcal{L}(\btt; \bxi),
\end{equation}
where $\mathcal{L}(\btt; \bxi)$ denotes the loss function at data sample $\bxi$, which depends on the learning task.
One of the main challenges of FL is the communication overhead. In fact, the local updates are sent at every global training round to the CS usually over limited bandwidth wireless channels. To reduce the communication overhead, over-the-air communication has been proposed as a promising solution, where the model updates are sent simultaneously over the multiple access channel \cite{Kai2020,Sery2020}. This approach, referred to as OTA-FL, will be introduced hereafter.

OTA-FL consists of multiple global training rounds where the CS sends the model parameter vector, $\btt_t$, to the clients at each round $t$. Due to the high power available at the CS, the downlink communication is usually assumed to be perfect and the clients receive the global model without distortion \cite{Sery2020}. Then, client $n$  sets its local model as $\btt_{0}^n(t)=\btt(t)$ and runs its local SGD for $\tau$ iterations based on its local dataset 
\begin{equation}
\btt_{i+1}^n(t)=\btt_{i}^n(t)-\eta_t  F_{n,\xi_{n,i}^t}'(\btt_{i}^n(t)), \ \ {\rm for} \ \ i = 0,1,\cdots, \tau-1,
\label{SGD}
\end{equation}
where $\eta_t$ is the SGD step size at round $t$ and $F_{n,\xi_{n,i}^t}'(\btt_{i}^n(t))$ denotes the stochastic gradient computed using a local mini-batch sample $\xi_{n,i}^t$ chosen uniformly at random from the local dataset of client $n$. 
The clients then send their model updates,
\begin{equation}
\Delta\btt^n(t)= \btt_{\tau}^n(t) - \btt({t}),
\end{equation}
simultaneously to the CS via analog OTA. The CS recovers the new global model from the received signals. Further details will be given in Section \ref{Proposed_implementation}.

\subsection{Cell-free massive MIMO}
One of the promising technologies for 6G is cell-free massive MIMO, where distributed APs are deployed over the coverage area \cite{ngo2017cell,nayebi2017precoding}. Each AP can serve multiple UTs and a UT can communicate with several APs, which is different from classical cellular massive MIMO where each UT is served by a single BS. Cell-free massive MIMO is shown to provide better spectral efficiency than classical cellular massive MIMO \cite{Emilcellfree}.

We adopt in this work the scalable cell-free massive MIMO network proposed in \cite{Emilcellfree}. Such a network can support a large number of UTs while the signal processing algorithms and the fronthaul signaling remain practically implementable. The network is composed of $N$ single antenna UTs and $L$ APs, each equipped with $M$ antennas. Each subset of APs is connected to a CPU via high-capacity backhaul links. A user-centric network architecture is assumed where each UT is served by a subset of the APs providing the best channel conditions as illustrated in~\figref{cell_free}. We denote by $\mathcal{S}_\ell$ the subset of UTs served by AP $\ell$.
\begin{figure}[h]
\begin{center}
  \includegraphics[width=0.5\linewidth]{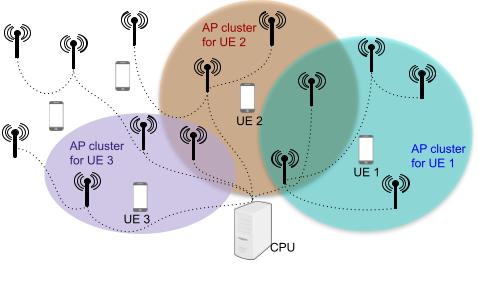}
  \caption{Scalable cell-free massive MIMO network.}
  \label{cell_free}
  \end{center}
\end{figure}

Since the model updates of a client will be received by multiple APs, we think that implementing OTA-FL over cell-free massive MIMO will be more robust to poor channel conditions and CSI errors. The objective of this work is to propose a practical implementation of OTA-FL over cell-free massive MIMO, investigate its advantages, and study its convergence.

\section{OTA-FL over cell-free massive MIMO}
\label{Proposed_implementation}
The UTs and the CPU\footnote{In practice, several CPUs are deployed in the network connected to a master CPU. In this case, the latter will be the CS coordinating the FL process.} are considered as the clients and the CS, respectively. We assume that the number of APs and the number of UTs in the network are large, while the number of antennas per AP is finite. These are typical assumptions in cell-free massive MIMO \cite{Emilcellfree}.

 At each global round, the local updates with dimension $d$ are sent over $S$ OFDM subchannels during $K$ transmissions, where $K=\lceil\frac{d}{2S}\rceil$. For transmission $k$ over OFDM subcarrier $s$, two entries of $\Delta\btt^n(t)$ are sent as complex symbol as follows
$$
x^{k,s}_n (t)= x^{k,s}_{n,re}(t) +j x^{k,s}_{n,im} (t), \ \  k\in[K],
$$
where 
\begin{align*}
&x^{k,s}_{n,re}(t)=\Delta\btt^n_{(2k-1)+2(s-1)K}(t) \\
&x^{k,s}_{n,im}(t)=\Delta\btt^n_{2k+2(s-1)K}(t).
\end{align*}
%The model updates are sent without channel coding as follows. The vector $\Delta\btt_t^n $ are 
where $\Delta\btt^n_{i}(t)$ is the $i$-th entry of $\Delta\btt^n(t)$. Each client sends the following signal at the $k$-th OFDM symbol on subcarrier $s$:
\begin{align*}
\tilde x^{k,s}_n(t)={\alpha_t}x^{k,s}_n (t),
\end{align*}
where $\alpha_t$ is a scaling factor to satisfy a power budget constraint at the clients, defined as
\begin{align}
\frac{1}{TK}\sum_{t=1}^T\sum_{k=1}^K\ex \| \tilde \bx^{k}_n(t)\|^2\leq P,  \ \forall n
\label{pow_cons}
\end{align}
where $\tilde \bx^{k}_n(t) =[\tilde x^{k,1}_n(t), \cdots, \tilde x^{k, S}_n(t)]^T\in \mathbb{C}^{S\times 1} $ is the vector of signals transmitted on different subcarriers at transmission $k$ from client $n$. It is important to note that the total number of global rounds is not known in practice, so the scaling factor should not be determined using the power constraint. A more reasonable assumption is that $\alpha_t$ is determined based on the power budgets of the clients. Such information can be exchanged between the clients and the CS before or during the learning process and the CS can choose $\alpha_t$ accordingly. However, we adopt the constraint in \eqref{pow_cons} to have an idea about the amount of power needed to reach convergence and compare different implementation schemes of OTA-FL in terms of energy efficiency.

The received signal at AP $\ell$ during OFDM symbol transmission $k$ is given by
\begin{align}
\by^{k,s}_{\ell}(t)=\alpha_t\sum_{n=1}^N \bh^{k,s}_{n,\ell}(t)  x^{k,s}_n(t)+\bz^{k,s}_{\ell}(t), \ \  s\in[S],  \ \ \ell \in [L],\ \  k\in[K],
\label{eqy}
\end{align}
where $\bh^{k,s}_{n,\ell}(t) \in \mathbb{C}^M$ is the channel vector between UT $n$ and AP $\ell$ for transmission $k$ over subcarrier $s$, and $\bz^{k,s}_{\ell}(t)\sim \mathcal{CN}(\bzero,\sigma_z^2\bI_M)$ stands for the additive noise.

A Rayleigh fading channel model is assumed, where the channel vector between UT $n$ and AP $\ell$, at global iteration $t$, subcarrier $s$, and transmission $k$ is distributed as \cite{Emilcellfree}
\begin{align*}
\bh_{n,\ell}^{k,s}(t)\sim \mathcal{CN}(\bzero, \bR_{n,\ell}),
\end{align*}
where $ \bR_{n,\ell} \in \mathbb{C}^{M\times M}$ is a correlation matrix describing the large-scale fading, including geometric pathloss, shadowing, antenna gains, and spatial channel correlation~\cite{bjornson2017massive}. We assume that $\bR_{n,\ell}$ is available at the APs, which can be obtained by one of the existing correlation estimation methods \cite{neumann2018covariance,upadhya2018covariance,bjornson2016massive}.

Since the UTs far away from AP $\ell$ will experience a severe large-scale fading and a few UTs are close to a particular AP in practice, we will make the following assumption on the correlation matrix $\bR_{n,\ell}$ 
 
\begin{align}
&\frac{1}{M}\tr\bR_{n,\ell}=\go(1) \quad \text{ for} \quad n\in  \mathcal{S}_\ell\cup\tilde{\mathcal{S}}_\ell,\nonumber \\
&\frac{1}{M}\tr\bR_{n,\ell}=o(1) \quad \text{ for} \quad n\notin  \mathcal{S}_\ell\cup\tilde{\mathcal{S}}_\ell,
\label{assump}
\end{align}
%&\frac{1}{M}\tr\bR_{n,\ell}\bR_{n',\ell}=o(1) 
%\quad \text{ for} \quad n'\notin  \mathcal{S}_\ell\cup\tilde{\mathcal{S}}_\ell, \quad n\in [N]
where $\mathcal{S}_\ell$ is the set of UTs served by AP $\ell$ while $\tilde{\mathcal{S}}_\ell$ is a set of UTs close to AP $\ell$ and not served by this AP. In a practical scenario, one can assume that the cardinalities of ${\mathcal{S}}_\ell$ and $\tilde{\mathcal{S}}_\ell$ are finite while the total number of UTs is large, $N\to \infty$. These assumptions are needed for theoretical analysis purposes, in this section, to justify that the interference will vanish as will be detailed later. When implementing our approach in practice, such assumptions are not used. We note also that \eqref{assump} implies 
\begin{align}
&\frac{1}{M}\tr\bR_{n,\ell}\bR_{n',\ell}=o(1)\quad \text{ for} \quad n\in  \mathcal{S}_\ell\cup\tilde{\mathcal{S}}_\ell \ {\rm and} \ n'\notin  \mathcal{S}_\ell\cup\tilde{\mathcal{S}}_\ell, \nonumber \\
&\frac{1}{M}\tr\bR_{n,\ell}\bR_{n',\ell}=\mathcal{O}(1)\quad \text{ for} \quad n\in  \mathcal{S}_\ell\cup\tilde{\mathcal{S}}_\ell \ {\rm and} \ n'\in  \mathcal{S}_\ell\cup\tilde{\mathcal{S}}_\ell.
\label{assump2}
\end{align}

The channel vector from all APs to UT $n$ can be written as  $\bh_{n}^{k,s}(t)\!=\!\left[\bh^{k,s}_{n,1}(t)^T,\cdots,\bh^{k,s}_{n,L}(t)^T\right]^T$, which has the following correlation matrix $\bR_{n}  =\diag(\bR_{n,1},\dots,\bR_{n,L})\in \mathbb{R}^{LM\times LM}$. AP $\ell$ has access to an estimate of the channels of its corresponding UTs, that is, $\hat\bh^{k,s}_{n,\ell}(t) $ is available at AP $\ell$ for $n \in \mathcal{S}_{\ell}$. The channel estimates are such that
\begin{align}
\hat\bh^{k,s}_{n,\ell}(t)  = \bh^{k,s}_{n,\ell}(t) +\tilde\bh^{k,s}_{n,\ell}(t) 
\end{align}
where $\tilde\bh^{k,s}_{n,\ell}(t) \sim \mathcal{CN}(\bzero, \bC_{n,\ell})$ is the channel estimation error with correlation $\bC_{n,\ell}$ and independent of $\bh^{k,s}_{n,\ell}(t) $. The channel estimation can be done in a distributed manner where each AP estimates the channels of its corresponding UTs or in a centralized manner where the received pilot signals are conveyed to the CPU, which will be in charge of channel estimation. In our experiments, we have used the MMSE channel estimation technique as in \cite{Emilcellfree}. 

Each AP sends the received signal $\by_{\ell}^{k,s}(t)$ and the channel estimates, $\hat\bh^{k,s}_{n,\ell}(t)$, of its corresponding UTs to the CPU. The vector of all the received signals from all APs,  $\by^{k,s}(t)=[\by_1^{k,s}(t)^T,\cdots,\by_L^{k,s}(t)^T]^T$, is combined at the CPU using the following receive vector
$$
\bv^{k,s}(t)=\frac{1}{N}\sum_{n=1}^N\frac{\bD_n\hat\bh_{n}^{k,s}(t)}{\tr(\bD_n \bR_n)},
$$
where $\bD_n=\diag(\bD_{n,1},\cdots, \bD_{n,L})$ and $$
\bD_{n,\ell} = \begin{cases}\bI_M \ {\rm if} \ n\in\mathcal{S}_\ell\\
\bzero_M \ \ {\rm otherwise}.\end{cases}
$$
For notation convenience, we define $c_n\triangleq \tr(\bD_n \bR_n) = \sum_{\ell\in \mathcal{K}_n}\tr( \bR_{n,\ell})$, where $\mathcal{K}_n$ denotes the set of indexes of APs serving UT $n$. Combining the received signal vector $\by^{k,s}(t)$ using $\bv^{k,s}(t)$, the CS obtains,
\begin{align}
y^{k,s}(t)&=\bv^{k,s}(t)^H\by^{k,s}(t)\nonumber
\\&=\frac{\alpha_t}{N}\sum_{n=1}^N \sum_{\substack{n'=1}}^N \frac{\hat\bh^{k,s}_{n}(t)^H\bD_n \bh^{k,s}_{n'}(t)}{c_n}  x_{n'}^{k,s}(t)+\frac{1}{N}\sum_{n=1}^N \frac{\hat\bh^{k,s}_{n}(t)^H\bD_n\bz^{k,s}(t)}{c_n }\nonumber\\
&=\underbrace{\frac{\alpha_t}{N}\sum_{n=1}^N \frac{\bh^{k,s}_{n}(t)^H\bD_n \bh^{k,s}_{n}(t)}{c_n } x_n^{k,s}(t)}_{\text{signal term}}+\underbrace{\frac{\alpha_t}{N}\sum_{n=1}^N \sum_{\substack{n'=1\\n'\neq n}}^N \frac{\bh^{k,s}_{n}(t)^H\bD_n \bh^{k,s}_{n'}(t)}{c_n }  x_{n'}^{k,s}(t)}_{\text{interf 1}}\nonumber\\&+ \underbrace{\frac{\alpha_t}{N}\sum_{n=1}^N \sum_{\substack{n'=1}}^N \frac{ \tilde\bh^{k,s}_{n}(t)^H\bD_n \bh^{k,s}_{n'}(t)}{c_n} x_{n'}^{k,s}(t)}_{\text{interf 2}}  +\underbrace{\frac{1}{N}\sum_{n=1}^N \frac{\left(\bh^{k,s}_{n}(t)+\tilde\bh^{k,s}_{n}(t) \ \right)^H\bD_n\bz^{k,s}(t)}{c_n }}_{\text{noise term}}.
 \label{signal-y_imperfect1}
\end{align}
The combined signal is composed of a signal term, two interference terms, and a noise term. In what follows, we will show that the signal term converges to the average of the model updates and the two interference terms vanish as the number of clients grows large, $N\to\infty$.
The signal term in \eqref{signal-y_imperfect1} can be rewritten as
$$
y_{sig}^{k,s}(t)= \frac{\alpha_t}{N} \sum_{n=1}^N \frac{1}{c_n }x_n^{k,s}(t)\sum_{\ell\in \mathcal{K}_n}\bh_{n,l}^{k,s}(t)^H \bh_{n,l}^{k,s}(t).
$$
Using the law of large numbers, when $N\to \infty$, the signal term approaches 
$$
\bar y_{sig}^{k,s}(t)=\frac{\alpha_t}{N} \sum_{n=1}^N \frac{1}{c_n }x_n\sum_{\ell\in \mathcal{K}_n}\ex \bh_{n,l}^{k,s}(t)^H \bh_{n,l}^{k,s}(t) =\frac{\alpha_t}{N} \sum_{n=1}^Nx_n^{k,s}(t).
$$
Let us now analyze the interference terms. The expectation and the variance of $\bh^H_{n}\bD_n \bh_{n'}$ are given by
\begin{align}
\ex \bh_{n}^{k,s}(t)^H\bD_n \bh_{n'}^{k,s}(t) &=0,\\
\var\bh_{n}^{k,s}(t)^H\bD_n \bh_{n'}^{k,s}(t)  &= \tr (\bR_n \bD_n \bR_{n'}\bD_n)=\sum_{\ell\in \mathcal{K}_n}^L\tr (\bR_{n,\ell}  \bR_{n',\ell}).
\end{align}
Thus, the expectation of the first interference term is zero and its variance with respect to the randomness of the channel is given by
\begin{align}
&\var( \text{interf 1})=\var \frac{\alpha_t}{N}\sum_{n=1}^N \sum_{\substack{n'=1\\n'\neq n}}^N \frac{\bh_{n}^{k,s}(t)^H\bD_n \bh_{n'}^{k,s}(t)}{c_n} x_{n'}^{k,s}(t)\\&\quad =\frac{\alpha_t^2}{N^2}\sum_{n=1}^N \sum_{\substack{n'=1\\n'\neq n}}^N\sum_{\ell\in \mathcal{K}_n}\frac{\tr (\bR_{n,\ell}  \bR_{n',\ell})}{c_n^2}|x_{n'}^{k,s}(t)|^2\\
&\quad = \frac{\alpha_t^2}{N^2}\sum_{n=1}^N \sum_{\ell\in \mathcal{K}_n} \sum_{\substack{n'\neq n\\ n'\in  \mathcal{S}_\ell\cup\tilde{\mathcal{S}}_\ell}}^N\frac{\tr (\bR_{n,\ell}  \bR_{n',\ell})}{c_n^2}|x_{n'}^{k,s}(t)|^2+  \frac{\alpha_t^2}{N^2}\sum_{n=1}^N \sum_{\ell\in \mathcal{K}_n} \sum_{\substack{n'\neq n \\ n'\notin  \mathcal{S}_\ell\cup\tilde{\mathcal{S}}_\ell}}^N\frac{\tr (\bR_{n,\ell}  \bR_{n',\ell})}{c_n^2}|x_{n'}^{k,s}(t)|^2
 \label{interf1}
\end{align}
 Using the assumptions in \eqref{assump} and \eqref{assump2}, the first term of \eqref{interf1} is the sum of $N |\mathcal{K}_n| | \mathcal{S}_\ell \cup \tilde{\mathcal{S}}_\ell|$ terms of order $ \mathcal{O}(1)$ normalized by $N^2$, and since  $|\mathcal{K}_n| | \mathcal{S}_\ell \cup \tilde{\mathcal{S}}_\ell|$ is finite compared to $N$, this first term is $o(1)$ when $ N\to \infty$. For the second term, it is a sum of $N |\mathcal{K}_n| (N-| \mathcal{S}_\ell \cup \tilde{\mathcal{S}}_\ell|)$ terms of order $o(1)$ normalized by $N^2$, it is thus $o(1)$.  Based on the above analysis, the variance of the first interference term satisfies
 \begin{align}
\var( \text{interf 1})= o(1), \ \ \  \text{as} \ \  N\to \infty.
\end{align}
Similarly, the expectation of the second interference term is zero and its variance with respect to the randomness of the channel can be obtained as 
\begin{align}
&\var(\text{interf 2})= \var\frac{\alpha_t}{N}\sum_{n=1}^N \sum_{\substack{n'=1}}^N \frac{ \tilde\bh^{k,s}_{n}(t)^H\bD_n \bh^{k,s}_{n'}(t)}{c_n} x_{n'}^{k,s}(t)\\
&\quad =\frac{\alpha_t^2}{N^2}\sum_{n=1}^N \sum_{\substack{n'=1}}^N \frac{ \tr (\bR_{n'}\bD_n\bC_{n}\bD_n)}{c_n^2} |x_{n'}^{k,s}(t)|^2.    
\end{align}
Using similar arguments as in the analysis of the first interference term, one can easily show
 \begin{align}
\var( \text{interf 2})= o(1), \ \ \  \text{as} \ \  N\to \infty.
\end{align}
The above analysis justifies the estimation of the global model update entries as
\begin{align}
&\Delta\hat\btt_{(2k-1)+2(s-1)K}(t)=\frac{\text{Re}\left\{y^{k,s}(t)\right\} }{\alpha_t }\\
&\Delta\hat\btt_{2k+2(s-1)K}(t)=\frac{\text{Im}\left\{y^{k,s}(t)\right\} }{\alpha_t }.
\end{align}

\begin{remark}
We have shown in this section that the interference terms vanish and determined the required conditions for that. These conditions turn out to be completely different than the case of cellular massive MIMO studied in \cite{amiri2021blind}. Specifically, both the number of clients and the total number of antennas should be large while the conditions in the case of a cellular massive MIMO are such that the number of antennas is large while the number of clients is finite. 
\end{remark}

\section{Convergence analysis}
\label{conv}
In this section, we study the convergence of the proposed FEEL scheme. We assume without loss of generality that $d=2S$, which implies $K=1$ and thus we drop the dependency on $k$ of all variables for ease of the notations.
\subsection{Preliminaries}

Let $\btt^*$ be the optimal parameter vector that minimizes $F(\btt)$ and define 
$F^* = F(\btt^*)$. We define also $\Gamma = F^* -\sum_{n=1}^N \frac{D_n}{D}F_n^*$, which measures the amount of bias in the data distribution. Larger $\Gamma$ indicates higher data heterogeneity among clients. In the sequel, we consider the following assumptions, which are classical when it comes to studying the convergence of FL \cite{amiri2021blind,Sery2020a}.
\begin{assumption}
We assume the following regularities for the loss functions:
\begin{itemize}
\item[(i)] The loss functions $F_1,\cdots,F_N$ are $\mu$-strongly convex, that is, for all $\bx,\by \in \mathbb{R}^d$,
$$
F_n(\bx)-F_n(\by) \geq\langle F_n'(\by), \bx-\by\rangle+\frac{\mu}{2}\|\bx-\by\|^2.
$$
\item[(ii)] The loss functions $F_1,\cdots,F_N$ are all $\mathcal{L}$-smooth, that is, for all $\bx,\by \in \mathbb{R}^d$,
$$
F_n(\bx)-F_n(\by) \leq\langle F_n'(\by), \bx-\by\rangle+\frac{\mathcal{L}}{2}\|\bx-\by\|^2.
$$
\item[(iii)] The stochastic gradient $F'_{n,xi_{n,i}^t}(\bx)$ satisfies
$$
\mathbb{E}\|F_{n,\xi_{n,i}^t}'(\bx)\|^2\leq G^2, \  {\rm for   \ all} \  \bx \in \mathbb{R}^d.
$$
\end{itemize}
\label{assump1}
\end{assumption}

\subsection{Convergence results}
In this section, we establish the convergence rate of the proposed OTA-FL scheme. The proof is provided in Appendix A.
\begin{theorem}
Under Assumption \ref{assump1}, when the step size $\eta_t$ verifies $0<\eta_t \leq\min\left\{1,\frac{1}{\mu\tau}\right\}$, we have 
\begin{align}
\left\|\btt(T) -\btt^*\right\|^2 \leq \left( \prod_{t=0}^{T-1}A(t)\right)\left\|\btt(0) -\btt^*\right\|^2 + \sum_{t=0}^{T-1}B(t) \prod_{i=t+1}^{T-1}A(i),
\label{res11}
\end{align}
where
\begin{align}
A(t) &= (1-\mu\eta_t(\tau-\eta_t(\tau-1))),\\
B(t) &= 2\eta_t(\tau-1)\Gamma+(1+\mu(1-\eta_t))\eta_t^2G^2\frac{\tau(\tau-1)(2\tau-1)}{6}+\eta_t^2(\tau^2+\tau-1)G^2\nonumber\\& + \left(\frac{2}{N} +\frac{(\tilde\gamma+\gamma)}{2N} -\frac{\gamma}{2N^2} \right)\eta_t^2\tau^2G^2+ \frac{d}{2N\alpha_t^2}(\kappa +\tilde\kappa)\sigma_z^2,
\end{align}
with
\begin{align*}
\gamma &= \max_{n,n'}\frac{1}{c_n^2}\tr(\bR_n\bD_{n} \bR_{n'} \bD_{n})\\
\tilde\gamma &= \max_{n,n'}\frac{1}{c_n^2}\tr(\bC_n\bD_{n} \bR_{n'} \bD_{n})\\
\kappa &=  \frac{1}{N}\sum_{n=1}^N \frac{1}{c_n}\\
\tilde\kappa &= \frac{1}{N}\sum_{n=1}^N\frac{1}{c_n^2}  \tr(\bC_n\bD_n).
\end{align*}
\end{theorem}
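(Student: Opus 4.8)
The plan is to reduce \eqref{res11} to a one-round contraction $\ex\|\btt(t+1)-\btt^*\|^2\le A(t)\,\ex\|\btt(t)-\btt^*\|^2+B(t)$ and then unroll it. First I would write down the effective global recursion. Since each coordinate of the update is recovered by dividing the combined signal of \eqref{signal-y_imperfect1} by $\alpha_t$, the recovered update equals the true average update plus the de-scaled interference and noise, i.e.
\[
\btt(t+1)=\frac1N\sum_{n=1}^N\btt_\tau^n(t)+\be(t),
\]
where $\be(t)$ collects the two interference terms and the noise term of \eqref{signal-y_imperfect1}. Substituting $\btt_\tau^n(t)=\btt(t)-\eta_t\sum_{i=0}^{\tau-1}\bg_{n,i}(t)$ with $\bg_{n,i}(t)=F_{n,\xi_{n,i}^t}'(\btt_i^n(t))$, and noting that $\be(t)$ has zero conditional mean and is independent of the SGD randomness (the channels and noise are independent of the data), the cross term vanishes in expectation, so
\[
\ex\|\btt(t+1)-\btt^*\|^2=\ex\Big\|\btt(t)-\btt^*-\tfrac{\eta_t}{N}\textstyle\sum_{n=1}^N\sum_{i=0}^{\tau-1}\bg_{n,i}(t)\Big\|^2+\ex\|\be(t)\|^2,
\]
and the two pieces are bounded separately.

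For the first piece I would run the standard local-SGD analysis. Expanding the square and taking expectation over the mini-batches (conditioning on the iterate history) replaces $\bg_{n,i}(t)$ by $F_n'(\btt_i^n(t))$ in the inner-product term and splits the averaged stochastic gradient into its mean and a zero-mean fluctuation; the fluctuation, being independent across the $N$ clients, contributes the variance-reduction term $\tfrac{2}{N}\eta_t^2\tau^2G^2$. I would then decompose each $\langle\btt(t)-\btt^*,F_n'(\btt_i^n(t))\rangle$ as $\langle\btt(t)-\btt_i^n(t),F_n'\rangle+\langle\btt_i^n(t)-\btt^*,F_n'\rangle$ and apply $\mu$-strong convexity (Assumption \ref{assump1}(i)) to the second inner product, which yields the contraction term $-\tfrac{\mu\eta_t}{N}\sum_{n,i}\|\btt_i^n(t)-\btt^*\|^2$ together with the function-value gaps $F_n(\btt_i^n(t))-F_n(\btt^*)$. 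Converting $\|\btt_i^n(t)-\btt^*\|^2$ back to $\|\btt(t)-\btt^*\|^2$---the $i=0$ term exactly, and the $i\ge1$ terms through a weighted splitting with weight $\eta_t$---produces precisely the factor $A(t)=1-\mu\eta_t(\tau-\eta_t(\tau-1))$, while writing $F_n(\btt_i^n(t))-F_n(\btt^*)=\big(F_n(\btt_i^n(t))-F_n^*\big)-\big(F_n(\btt^*)-F_n^*\big)$, discarding the nonnegative first part and averaging the second, gives the heterogeneity term $2\eta_t(\tau-1)\Gamma$. The remaining drift and gradient-norm terms are controlled by the local-drift bound $\ex\|\btt_i^n(t)-\btt(t)\|^2\le\eta_t^2i^2G^2$ and $\ex\|\tfrac1N\sum_{n,i}\bg_{n,i}(t)\|^2\le\tau^2G^2$ from Assumption \ref{assump1}(iii); summing the drift bound over $i$ yields the factor $\sum_{i=0}^{\tau-1}i^2=\tfrac{\tau(\tau-1)(2\tau-1)}{6}$, and collecting everything produces the $(1+\mu(1-\eta_t))\eta_t^2G^2\tfrac{\tau(\tau-1)(2\tau-1)}{6}$ and $\eta_t^2(\tau^2+\tau-1)G^2$ terms of $B(t)$.

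For the second piece I would reuse the variance computations of Section \ref{Proposed_implementation}. Since the two interference terms and the noise term are mutually uncorrelated and zero-mean, $\ex\|\be(t)\|^2$ is the sum over the $d$ coordinates of their per-entry variances divided by $\alpha_t^2$. Bounding the transmitted symbols through $\ex\|\Delta\btt^n(t)\|^2\le\eta_t^2\tau^2G^2$, the interference variances $\tfrac1{c_n^2}\tr(\bR_n\bD_n\bR_{n'}\bD_n)$ and $\tfrac1{c_n^2}\tr(\bC_n\bD_n\bR_{n'}\bD_n)$ are dominated by their maxima $\gamma$ and $\tilde\gamma$, giving the $\big(\tfrac{\tilde\gamma+\gamma}{2N}-\tfrac{\gamma}{2N^2}\big)\eta_t^2\tau^2G^2$ contribution, which combines with the $\tfrac{2}{N}\eta_t^2\tau^2G^2$ term above into the single bracket of $B(t)$, while the additive-noise variance $\sigma_z^2$ yields $\tfrac{d}{2N\alpha_t^2}(\kappa+\tilde\kappa)\sigma_z^2$ through the averages $\kappa$ and $\tilde\kappa$.

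Combining the two bounds gives the one-round recursion, and iterating it from $t=0$ to $T-1$ produces \eqref{res11}, where the product $\prod_{t}A(t)$ multiplies the initial error and each per-round residual $B(t)$ is propagated by the remaining contraction factors $\prod_{i=t+1}^{T-1}A(i)$. I expect the main obstacle to be the first piece: pinning down the \emph{exact} constants $A(t)$, $2\eta_t(\tau-1)\Gamma$ and the $G^2$ factors requires careful bookkeeping of the local drift across the $\tau$ inner iterations---especially isolating the $i=0$ step and choosing the splitting weight equal to $\eta_t$ so that the contraction and the drift residuals close up into the stated forms---and the step-size condition $\eta_t\le\min\{1,1/(\mu\tau)\}$ is exactly what keeps $A(t)\in(0,1)$. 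By contrast, the communication-error piece is essentially mechanical once the variances derived in Section \ref{Proposed_implementation} are inserted.
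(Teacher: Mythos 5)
Your overall strategy coincides with the paper's: introduce the ideal iterate $\bnu(t+1)=\btt(t)+\frac{1}{N}\sum_{n}\Delta\btt_n(t)$, expand $\|\btt(t+1)-\btt^*\|^2$ around it, kill the cross term using the zero mean of the communication error and its independence from the round-$t$ local updates, bound the ideal local-SGD part (producing $A(t)$, the $2\eta_t(\tau-1)\Gamma$ term and the drift terms) and the communication-error part (producing the $\gamma,\tilde\gamma,\kappa,\tilde\kappa$ terms), then unroll the one-round recursion. This is exactly the structure of Appendix A and Appendix B; the local-SGD piece you sketch in detail is the part the paper outsources to the literature as classical.

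There is, however, one concrete gap in your error decomposition. You write $\btt(t+1)=\frac1N\sum_{n}\btt^n_\tau(t)+\be(t)$ with $\be(t)$ consisting only of the two interference terms and the noise. That identity is false: the signal term of \eqref{signal-y_imperfect1}, after de-scaling by $\alpha_t$, is $\frac1N\sum_n \frac{1}{c_n}\bh_n(t)^H\bD_n\bh_n(t)\,x_n(t)$, not $\frac1N\sum_n x_n(t)$; the random gains $\frac{1}{c_n}\bh_n(t)^H\bD_n\bh_n(t)$ equal $1$ only in expectation. Hence $\be(t)$ must also contain the zero-mean fluctuation $\frac1N\sum_n\bigl(\frac{1}{c_n}\bh_n(t)^H\bD_n\bh_n(t)-1\bigr)x_n(t)$. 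The cross term still vanishes (this fluctuation is conditionally zero-mean given the data), but its variance is a genuine contribution: the paper devotes the first lemma of Appendix B precisely to it, using $\ex\bigl(\frac{1}{c_n}\bh_n^H\bD_n\bh_n-1\bigr)^2=\frac{1}{c_n^2}\tr(\bR_n\bD_n\bR_n\bD_n)\le\gamma$ to obtain a bound of the form $\frac{\gamma}{N^2}\sum_n\ex\|\Delta\btt_n(t)\|^2$. Since you attribute the entire $\frac2N$ term of the bracket to SGD sampling noise and the remaining $\frac{\gamma+\tilde\gamma}{2N}-\frac{\gamma}{2N^2}$ exactly to the two interference variances, your accounting leaves no room for this extra term, so $B(t)$ as stated cannot be recovered from your decomposition as written. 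The fix is mechanical---add this fourth error component and bound it with the same second-moment computation you already apply to the interference---but it is a required step, not an optional refinement.
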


The term $\prod_{t=0}^{T-1}A(t)$ in the bound in \eqref{res11} measures the decay of the distance to the optimal solution after $T$ rounds with respect to the initial distance. On the other hand, the term $\sum_{t=0}^{T-1}B(t) \prod_{i=t+1}^{T-1}A(i)$ measures the residual distance to the optimal solution at round $T$ with respect to that of the initial model to the optimal solution. Reducing $A(t)$ may lead to faster convergence but it may result in increasing the value of $B(t)$ and thus higher residual distance. For instance, increasing $\tau$ will result in a lower value of $A(t)$ and a higher value of $B(t)$, thus higher values of $\tau$ provide faster convergence but increased residual error. Moreover, the effect of the wireless channel appears in the last two terms of $B(t)$ via the noise variance $\sigma_z^2$ and the channel statistics captured by the parameters $\kappa$, $\tilde\kappa$, $\gamma$, and $\tilde\gamma$.                                  

We highlight that the effect of the channel estimation error appears in the parameters $\tilde\gamma$ and  $\tilde\kappa$. Specifically, these parameters vanish when perfect CSI is available at the APs since $\bC_n=\bzero$ in this case. Moreover, the parameters $\gamma$ and $\kappa$ are directly related to the quality of the wireless channels between the clients and their corresponding APs. For instance, $\kappa$  which can be rewritten as
\begin{align}
\kappa = \frac{1}{N}\sum_{n=1}^N  \frac{1}{\tr(\bD_n \bR_n) } =   \frac{1}{N}\sum_{n=1}^N  \frac{1}{\sum_{\ell \in \mathcal{S}_n}\tr(\bR_{n,\ell}) },
\label{kappa_rewritten}
\end{align}
increases when the quality of channel conditions of the clients decreases. We note also that the effect of the AP-UT association scheme is reflected in the parameters $\gamma$, $\tilde\gamma$, $\kappa$ and $\tilde{\kappa}$ through the association matrices $\bD_n$.

\begin{cor}
\begin{itemize}
\item When the step size $\eta_t$ verifies $0<\eta_t \leq\min\left\{1,\frac{1}{\mu\tau}\right\}$, we have
$$
\ex\left[F(\btt(T))\right] -F(\btt^*) \leq \frac{\mathcal{L}}{2}\left( \prod_{t=0}^{T-1}A(t)\right)\left\|\btt(0) -\btt^*\right\|^2 + \frac{\mathcal{L}}{2} \sum_{t=0}^{T-1}B(t) \prod_{i=t+1}^{T-1}A(i).
$$
\item When $\eta_t=\eta, \ \forall t$ and $\tau = 1$, we have for $0<\eta \leq\min\left\{1,\frac{1}{\mu\tau}\right\}$,
$$
\ex\left[F(\btt(T))\right] -F(\btt^*) \leq \frac{\mathcal{L}}{2}\left(1-\mu\eta\right)^T \left(\left\|\btt(0) -\btt^*\right\|^2-A_1 \right)+  \frac{\mathcal{L}}{2} A_1,
$$
where
\begin{align*}
A_1 &= \frac{1}{\mu\eta} \left(\eta^2G^2 + \left(\frac{2}{N} +\frac{(\tilde\gamma+\gamma)}{2N} -\frac{\gamma}{2N^2} \right)\eta^2 G^2+ \frac{d(\kappa +\tilde\kappa)}{2N\alpha_0^2}\sigma_z^2\right).
\label{asymp_error}
\end{align*}
\end{itemize}
\label{cor2}
\end{cor}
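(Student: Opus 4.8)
The plan is to obtain both bullets as direct corollaries of the theorem's bound \eqref{res11} combined with the smoothness hypothesis in Assumption~\ref{assump1}(ii). For the first bullet, I would begin by noting that $F=\sum_n\frac{D_n}{D}F_n$ is a convex combination of the $\mathcal{L}$-smooth functions $F_1,\dots,F_N$, hence is itself $\mathcal{L}$-smooth. Evaluating the smoothness inequality of Assumption~\ref{assump1}(ii) at the minimizer $\btt^*$ and using the first-order optimality condition $F'(\btt^*)=\bzero$ yields the quadratic upper bound $F(\btt)-F(\btt^*)\leq\frac{\mathcal{L}}{2}\|\btt-\btt^*\|^2$ for every $\btt\in\mathbb{R}^d$. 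Setting $\btt=\btt(T)$, taking expectations over the stochastic gradients, channel noise, and fading, and invoking the bound \eqref{res11} on $\ex\|\btt(T)-\btt^*\|^2$ produces the first bullet immediately, since the right-hand side of \eqref{res11} is already expressed through the statistical parameters $G^2$, $\sigma_z^2$, $\kappa$, $\tilde\kappa$, $\gamma$, $\tilde\gamma$.

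For the second bullet, the work is to specialize $A(t)$ and $B(t)$ to the case $\eta_t=\eta$ and $\tau=1$. With $\tau=1$ the factor $(\tau-1)$ kills the heterogeneity term $2\eta_t(\tau-1)\Gamma$, and $\tau(\tau-1)(2\tau-1)/6=0$ kills the second summand of $B(t)$, so these two terms drop out. What survives collapses to constants in $t$: $A(t)\equiv A=1-\mu\eta$, and $B(t)\equiv B=\eta^2G^2+\left(\frac{2}{N}+\frac{\tilde\gamma+\gamma}{2N}-\frac{\gamma}{2N^2}\right)\eta^2G^2+\frac{d(\kappa+\tilde\kappa)}{2N\alpha_0^2}\sigma_z^2$, where I take $\alpha_t=\alpha_0$ constant. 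The key algebraic observation, which I would record explicitly, is that this surviving constant satisfies exactly $B=\mu\eta A_1$ with $A_1$ as defined in the statement.

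With $A(t)\equiv A$ and $B(t)\equiv B$, the residual sum in \eqref{res11} becomes a geometric series. Since the product $\prod_{i=t+1}^{T-1}A(i)$ contributes $T-1-t$ factors, it equals $A^{T-1-t}$, so reindexing by $j=T-1-t$ gives $\sum_{t=0}^{T-1}BA^{T-1-t}=B\sum_{j=0}^{T-1}A^{j}=B\,\frac{1-A^{T}}{1-A}$. Using $1-A=\mu\eta$ together with $B=\mu\eta A_1$, this simplifies to $A_1(1-A^T)$. Adding the leading term $A^{T}\|\btt(0)-\btt^*\|^2$ and regrouping yields $\ex\|\btt(T)-\btt^*\|^2\leq (1-\mu\eta)^{T}\!\left(\|\btt(0)-\btt^*\|^2-A_1\right)+A_1$, and a final application of the smoothness bound from the first bullet delivers the claimed inequality.

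I do not expect a genuine conceptual obstacle here; the difficulty is purely one of careful bookkeeping. The three points that require attention are: correctly identifying which summands of $B(t)$ vanish at $\tau=1$, verifying the identity $B=\mu\eta A_1$ that makes the geometric sum telescope cleanly, and tracking the geometric-series index so that the product $\prod_{i=t+1}^{T-1}$ is recognized as contributing $T-1-t$ (not $T-t$) factors. Each is a routine substitution once the structure of $A(t)$ and $B(t)$ is read off correctly.
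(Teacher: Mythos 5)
Your proof is correct and follows exactly the route the paper intends: the corollary is stated without a separate proof precisely because it follows from Theorem~1 via the $\mathcal{L}$-smoothness bound $F(\btt)-F(\btt^*)\leq\frac{\mathcal{L}}{2}\|\btt-\btt^*\|^2$ (using $F'(\btt^*)=\bzero$), followed by the specialization $\tau=1$, $\eta_t=\eta$, $\alpha_t=\alpha_0$ and the geometric-series identity $B=\mu\eta A_1$ that you verify. Your bookkeeping (which terms of $B(t)$ vanish at $\tau=1$, the $T-1-t$ exponent, and the telescoping to $(1-\mu\eta)^T(\|\btt(0)-\btt^*\|^2-A_1)+A_1$) is all accurate.
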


Corollary \ref{cor2} states that the proposed implementation converges at a linear rate towards a neighborhood of the optimal solution with an asymptotic learning error, $A_1$, that depends on the noise variance, the UTs channel conditions, and the AP-UT association scheme. The latter is reflected in the parameters $\gamma$, $\tilde\gamma$, $\kappa$ and $\tilde{\kappa}$. The above result can be leveraged to reduce the learning error of the FL process, also known as the optimality gap, by optimizing the AP-UT association scheme. Reducing the values of $\gamma+\tilde{\gamma}$ and $\kappa+\tilde{\kappa}$ will result in a lower learning error.
\begin{remark}
Various techniques, including user scheduling \cite{li2019convergence}, partial device participation \cite{hu2021device}, and gradient sparsification or compression \cite{han2020adaptive}, among others, can be employed in the context of cell-free massive MIMO for optimizing the FL process. The convergence analysis can be adapted depending on the specific technique but this is out of the scope of this work.
\end{remark}

\section{Numerical experiments}
\label{num}

In this section, we study the convergence of the proposed OTA-FL scheme experimentally and compare the performance of OTA-FL over cell-free and cellular massive MIMO. We consider two popular classification datasets, MNIST and CIFAR10. For both datasets, convolutional neural network models are deployed as detailed in Table \ref{tab1}. The performance is measured in terms of the test accuracy computed using the test dataset.

\begin{table}[h]
\caption{CNN models for classification on MNIST and CIFAR10.}
\begin{center}
\begin{tabular}{|c  | c |} 
 \hline
 MNIST  & CIFAR10  \\ 
\Xhline{3\arrayrulewidth}
 \multirow{3}{*}{\makecell{$5 \times 5$ conv layer, 32 channels,\\ ReLU activation, same padding}}
 & \makecell{$3 \times 3$ conv layer, 32 channels, ReLU activation, same padding } \\ \cline{2-2} 
  & \makecell{$3 \times 3$ conv layer, 32 channels, ReLU activation, same padding }\\ \cline{2-2} 
   & \makecell{$2 \times 2$ max pooling  }
 \\ 
 \hline
 \multirow{2}{*}{ \makecell{$2 \times 2$ max pooling}  } & \makecell{  dropout with probability 0.2}\\ \cline{2-2} 
 & \makecell{$3 \times 3$ conv layer, 64 channels, ReLU activation, same padding }\\
  \hline
 \multirow{3}{*}{  \makecell{$5\times 5$ conv layer, 64 channels,\\ ReLU activation, same padding  } } & \makecell{$3 \times 3$ conv layer, 64 channels, ReLU activation, same padding }\\ \cline{2-2} 
  & \makecell{ $2 \times 2$ max pooling  }\\ \cline{2-2} 
   & \makecell{  dropout with probability 0.3 } \\ \hline
  \multirow{2}{*}{ \makecell{  $2 \times 2$ max pooling}} & \makecell{$3 \times 3$ conv layer, 128 channels, ReLU activation, same padding }\\ \cline{2-2} 
& \makecell{$3 \times 3$ conv layer, 128 channels, ReLU activation, same padding }
 \\ \hline
   \multirow{2}{*}{  \makecell{  fully connected layer with 128 units,\\ ReLU activation } } &  \makecell{ $2 \times 2$ max pooling  }\\ \cline{2-2} 
     & \makecell{ dropout with probability 0.4} 
    \\ \hline
10 units softmax output layer & 10 units softmax output layer
    \\ \hline
\end{tabular}
\end{center}
\label{tab1}
\end{table}
We consider both i.i.d. and non-i.i.d. data distribution scenarios. For the i.i.d. data distribution, the training set is randomly splitted into $N$ disjoint datasets, and each of them is assigned to one client. As for the non-i.i.d. data distribution, the samples from the same class are splitted into $N/10$ disjoint groups (assuming $N \geq 10$ is divisible by 10) and each group is assigned to a client. Thus, each client has samples from one class only.

In all experiments, we assume that the number of OFDM subcarriers is $S=d/2$, and thus $K=1$ transmission time slot is needed at every global training round. We note that this assumption has no effect on the learning performance since changing the values of $S$ and $K$ will only change the value of the average power constraint $P$. We adopt the cell-free massive MIMO architecture proposed in \cite{Emilcellfree}, including the algorithms proposed for pilot assignment,  UT-AP association, and channel estimation. The values of the channel parameters are summarized in Table \ref{tab2}. 

In all experiments, the convergence is reported for $T=300$ global iterations. For the MNIST dataset, the learning rate is fixed to $\eta_t= 0.01$ and the batch size to $|\xi_{n,i}^t|=50$ while the learning rate is set to $\eta_t= 0.005$ and the batch size to $|\xi_{n,i}^t|=500$ for the CIFAR10 dataset. For the sake of comparison, we plot the performance of the FL process when perfect communication links between the clients and the CS are available. In such a scenario, the CS receives $\Delta\btt(t) = \frac{1}{N}\sum_{n=1}^N\Delta\btt_n(t) $ without distortions. This will serve as a benchmark to assess the proposed OTA-FL scheme. Moreover, in all experiments, we assume that the clients use single-antenna wireless devices.

\begin{table}[h]
\caption{Channel parameters.}
\begin{center}
% [inline block 0: 7 envs, 82766 chars -> data_tex | \begin{tabular}{|c  | c |}   \hline...]
}
\end{center}
\caption{Test accuracy vs. the number of global iterations using non-i.i.d. MNIST dataset with $\tau=12$.}
\label{noniidMNIST}
\end{figure}

\subsection{Performance on MNIST dataset}
In \figref{classMNIST_correlated}, we plot the performance of the proposed implementation based on the MNIST i.i.d. dataset for different values of $N$. In this experiment, the number of APs is $L=100$ each equipped with $M=4$ antennas and randomly distributed in a square area of size $2\times 2$ km. The performance of both perfect and imperfect CSI scenarios is reported. We note that the performance of the OTA-FL scheme approaches the performance of error-free links for small values of the power constraint  $P$, such as $P=10$ dBm. This illustrates the success of the proposed OTA-FL over cell-free massive MIMO and highlights the advantages that the latter can bring in terms of energy efficiency. We remark also that the choice of $\alpha_t$ is very important to ensure convergence as sufficiently small values of $\alpha_t$ lead to poor performance.

Considering the same settings as in \figref{classMNIST_correlated}, we report in \figref{noniidMNIST} the performance of the proposed OTA-FL scheme on the non-i.i.d. MNIST dataset. As expected, the convergence is guaranteed for sufficient values of the power constraint and approaches that of error-free links. The gap in performance compared to \figref{classMNIST_correlated} is due to the non-i.i.d. data distributions at the clients. This was predicted by the theoretical analysis where the learning error depends on the heterogeneity of the data.

\subsection{Comparison with cellular massive MIMO}

In the third experiment, we compare the performance of OTA-FL on cell-free massive MIMO and cellular massive MIMO. We consider $N=25$ clients uniformly distributed in a $500\times 500$m square coverage area. For cell-free massive MIMO, we consider $L=25$ APs, each equipped with $M=4$ antennas, and randomly distributed in the coverage area. As for the cellular massive MIMO, the clients communicate with a single BS placed in the center of the square coverage area and equipped with $\tilde M=100$ antennas. We note that for both systems, the number of antennas is the same to ensure a fair comparison. We plot in \figref{comp} the performance of both systems based on i.i.d. MNIST dataset for the same value of the power scaling parameter $\alpha_t$. From the figure, the performance of cell-free massive MIMO-based OTA-FL is better for a lower value of the power constraint. For example, with $P=4.47$ dBm, cell-free massive MIMO-based OTA-FL provides better performance than cellular OTA-FL with $P=13.6$ dBm. This highlights the benefits of cell-free massive MIMO in terms of energy efficiency. We note also from \figref{comp}.b, that for the same value of the power scaling factor, cellular massive MIMO fails to reach convergence while cell-free massive MIMO guarantees an acceptable performance.

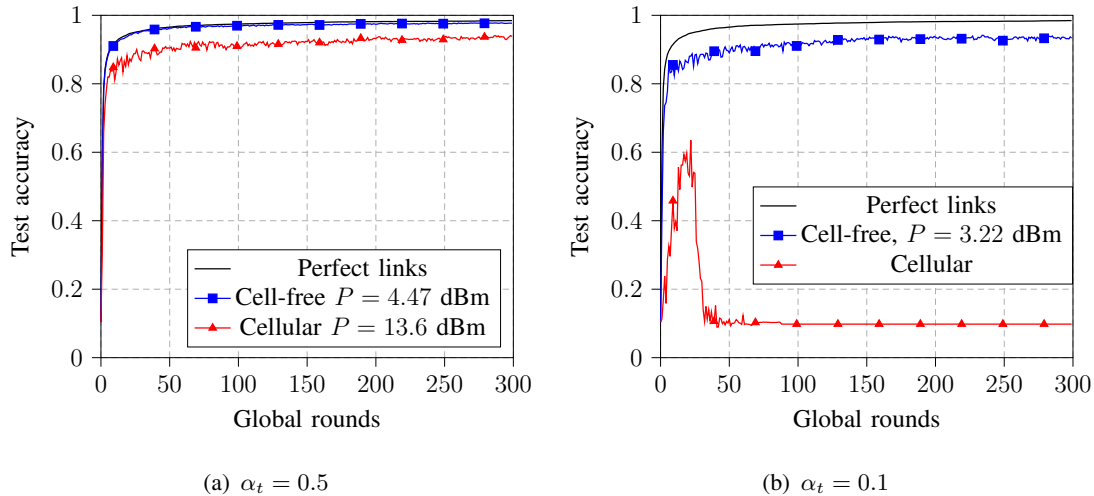
\begin{figure}[h]
\begin{center}
\subfigure[$\alpha_t  = 0.5$]{
\begin{tikzpicture}[scale=0.8]
\begin{axis}[
tick align=outside,
tick pos=left,
x grid style={white!69.0196078431373!black},
xlabel={Global rounds},
xmajorgrids,
xmin=0, xmax=300,
xtick style={color=black},
y grid style={white!69.0196078431373!black},
ylabel={Test accuracy},
ymajorgrids,
ymin=0, ymax=1,
ytick style={color=black},
grid=major,
scaled ticks=true,
legend pos=south east,	
grid style=densely dashed,
]

\addplot [semithick, color=black]
coordinates {
(0,0.104)(1,0.587)(2,0.7894)(3,0.8437)(4,0.8683)(5,0.8856)(6,0.8956)(7,0.9027)(8,0.9099)(9,0.9145)(10,0.9196)(11,0.9236)(12,0.9272)(13,0.9312)(14,0.9334)(15,0.9354)(16,0.9378)(17,0.9396)(18,0.9409)(19,0.9431)(20,0.9455)(21,0.9471)(22,0.9478)(23,0.9489)(24,0.9498)(25,0.9507)(26,0.9517)(27,0.9522)(28,0.9528)(29,0.9538)(30,0.9549)(31,0.9557)(32,0.9563)(33,0.9571)(34,0.9579)(35,0.9588)(36,0.9595)(37,0.9604)(38,0.9606)(39,0.9612)(40,0.9621)(41,0.9624)(42,0.9628)(43,0.9633)(44,0.9633)(45,0.9636)(46,0.9639)(47,0.964)(48,0.965)(49,0.965)(50,0.9654)(51,0.9658)(52,0.9661)(53,0.9668)(54,0.9672)(55,0.9676)(56,0.9681)(57,0.9684)(58,0.9684)(59,0.9684)(60,0.9686)(61,0.9692)(62,0.9694)(63,0.9696)(64,0.9698)(65,0.9704)(66,0.9706)(67,0.9705)(68,0.9707)(69,0.9709)(70,0.9709)(71,0.9709)(72,0.9709)(73,0.9709)(74,0.9709)(75,0.9713)(76,0.9715)(77,0.9718)(78,0.972)(79,0.9718)(80,0.972)(81,0.9721)(82,0.9725)(83,0.9727)(84,0.9728)(85,0.9731)(86,0.9731)(87,0.9732)(88,0.9733)(89,0.9734)(90,0.9736)(91,0.9737)(92,0.9737)(93,0.9737)(94,0.974)(95,0.9739)(96,0.974)(97,0.9743)(98,0.9744)(99,0.9744)(100,0.9743)(101,0.9747)(102,0.9747)(103,0.9747)(104,0.9749)(105,0.9751)(106,0.9752)(107,0.9752)(108,0.9754)(109,0.9753)(110,0.9754)(111,0.9752)(112,0.9754)(113,0.9757)(114,0.976)(115,0.976)(116,0.9762)(117,0.9764)(118,0.9765)(119,0.9765)(120,0.9765)(121,0.9764)(122,0.9765)(123,0.9767)(124,0.9768)(125,0.977)(126,0.9771)(127,0.9773)(128,0.9775)(129,0.9775)(130,0.9775)(131,0.9775)(132,0.9775)(133,0.9775)(134,0.9776)(135,0.9776)(136,0.9776)(137,0.9777)(138,0.9779)(139,0.978)(140,0.978)(141,0.9781)(142,0.9781)(143,0.9782)(144,0.9782)(145,0.9784)(146,0.9783)(147,0.9786)(148,0.9787)(149,0.9787)(150,0.9787)(151,0.9788)(152,0.9789)(153,0.9789)(154,0.979)(155,0.979)(156,0.9791)(157,0.9793)(158,0.9793)(159,0.9794)(160,0.9796)(161,0.9798)(162,0.9799)(163,0.9799)(164,0.98)(165,0.98)(166,0.9802)(167,0.9802)(168,0.9803)(169,0.9804)(170,0.9804)(171,0.9805)(172,0.9806)(173,0.9807)(174,0.9808)(175,0.9808)(176,0.9807)(177,0.9808)(178,0.9807)(179,0.9807)(180,0.9807)(181,0.9807)(182,0.9807)(183,0.981)(184,0.981)(185,0.981)(186,0.981)(187,0.981)(188,0.9812)(189,0.9812)(190,0.9812)(191,0.9813)(192,0.9813)(193,0.9814)(194,0.9815)(195,0.9815)(196,0.9815)(197,0.9815)(198,0.9816)(199,0.9815)(200,0.9815)(201,0.9816)(202,0.9816)(203,0.9818)(204,0.9818)(205,0.9819)(206,0.9819)(207,0.9819)(208,0.9819)(209,0.9819)(210,0.9819)(211,0.9819)(212,0.9818)(213,0.9818)(214,0.9819)(215,0.9819)(216,0.982)(217,0.982)(218,0.982)(219,0.982)(220,0.982)(221,0.982)(222,0.982)(223,0.9822)(224,0.9822)(225,0.9822)(226,0.9822)(227,0.9822)(228,0.9823)(229,0.9823)(230,0.9823)(231,0.9823)(232,0.9823)(233,0.9823)(234,0.9822)(235,0.9822)(236,0.9823)(237,0.9823)(238,0.9823)(239,0.9823)(240,0.9823)(241,0.9823)(242,0.9823)(243,0.9823)(244,0.9823)(245,0.9823)(246,0.9825)(247,0.9825)(248,0.9827)(249,0.9827)(250,0.9827)(251,0.9828)(252,0.9829)(253,0.9829)(254,0.9829)(255,0.9829)(256,0.9828)(257,0.9828)(258,0.9828)(259,0.9828)(260,0.9828)(261,0.9828)(262,0.9828)(263,0.9829)(264,0.9829)(265,0.983)(266,0.983)(267,0.983)(268,0.9831)(269,0.9831)(270,0.9831)(271,0.9831)(272,0.9831)(273,0.9833)(274,0.9834)(275,0.9834)(276,0.9838)(277,0.9838)(278,0.9838)(279,0.9838)(280,0.9838)(281,0.9839)(282,0.984)(283,0.9841)(284,0.9841)(285,0.9841)(286,0.9841)(287,0.984)(288,0.9841)(289,0.9841)(290,0.9842)(291,0.9842)(292,0.9843)(293,0.9843)(294,0.9843)(295,0.9844)(296,0.9846)(297,0.9846)(298,0.9846)(299,0.9846)}; \addlegendentry{Perfect links}

\addplot [semithick, color =blue, mark =  square*,  mark size = 2, mark repeat = 30, mark phase = 10]
coordinates {
(0,0.104)(1,0.568)(2,0.784)(3,0.8367)(4,0.8634)(5,0.88)(6,0.8934)(7,0.8993)(8,0.9059)(9,0.9104)(10,0.9164)(11,0.919)(12,0.9245)(13,0.9247)(14,0.9294)(15,0.9307)(16,0.9313)(17,0.9327)(18,0.936)(19,0.9371)(20,0.9404)(21,0.943)(22,0.9429)(23,0.9455)(24,0.9482)(25,0.949)(26,0.9495)(27,0.9468)(28,0.9509)(29,0.9512)(30,0.953)(31,0.9548)(32,0.9545)(33,0.9548)(34,0.9554)(35,0.9543)(36,0.956)(37,0.958)(38,0.9565)(39,0.9586)(40,0.9574)(41,0.9598)(42,0.9594)(43,0.9586)(44,0.9584)(45,0.9609)(46,0.9604)(47,0.9615)(48,0.9589)(49,0.9609)(50,0.9639)(51,0.9622)(52,0.964)(53,0.9643)(54,0.9654)(55,0.964)(56,0.9629)(57,0.9658)(58,0.9637)(59,0.9649)(60,0.9653)(61,0.9659)(62,0.9641)(63,0.9659)(64,0.9666)(65,0.9673)(66,0.9658)(67,0.9663)(68,0.9671)(69,0.9666)(70,0.9669)(71,0.9671)(72,0.9671)(73,0.9656)(74,0.9674)(75,0.9667)(76,0.9657)(77,0.967)(78,0.9685)(79,0.9689)(80,0.9684)(81,0.9671)(82,0.9675)(83,0.9675)(84,0.9677)(85,0.9693)(86,0.967)(87,0.9684)(88,0.9692)(89,0.9664)(90,0.9684)(91,0.9675)(92,0.9675)(93,0.9677)(94,0.9678)(95,0.9684)(96,0.9682)(97,0.969)(98,0.9694)(99,0.9696)(100,0.9692)(101,0.9696)(102,0.9691)(103,0.969)(104,0.9706)(105,0.97)(106,0.9708)(107,0.9708)(108,0.9685)(109,0.9698)(110,0.9701)(111,0.9716)(112,0.9695)(113,0.9682)(114,0.9708)(115,0.9716)(116,0.9701)(117,0.9705)(118,0.9711)(119,0.9692)(120,0.9716)(121,0.9705)(122,0.97)(123,0.971)(124,0.971)(125,0.9726)(126,0.9716)(127,0.9732)(128,0.9716)(129,0.972)(130,0.9725)(131,0.9723)(132,0.9727)(133,0.9718)(134,0.9728)(135,0.9731)(136,0.972)(137,0.9725)(138,0.9725)(139,0.9733)(140,0.9729)(141,0.9715)(142,0.9718)(143,0.9723)(144,0.9722)(145,0.971)(146,0.9724)(147,0.9723)(148,0.9729)(149,0.9727)(150,0.9727)(151,0.9737)(152,0.9735)(153,0.9739)(154,0.9737)(155,0.9726)(156,0.9728)(157,0.9739)(158,0.9726)(159,0.9718)(160,0.9734)(161,0.9729)(162,0.9733)(163,0.9715)(164,0.9726)(165,0.9724)(166,0.973)(167,0.9719)(168,0.9716)(169,0.9735)(170,0.9733)(171,0.9731)(172,0.9735)(173,0.9726)(174,0.9738)(175,0.9733)(176,0.9726)(177,0.9734)(178,0.973)(179,0.9727)(180,0.9742)(181,0.9745)(182,0.9747)(183,0.9744)(184,0.9742)(185,0.9747)(186,0.975)(187,0.9751)(188,0.9751)(189,0.9751)(190,0.9745)(191,0.974)(192,0.9745)(193,0.9745)(194,0.9753)(195,0.9746)(196,0.9751)(197,0.9733)(198,0.9751)(199,0.9747)(200,0.9737)(201,0.9743)(202,0.9753)(203,0.976)(204,0.9768)(205,0.9756)(206,0.9757)(207,0.9753)(208,0.9749)(209,0.9748)(210,0.9765)(211,0.9762)(212,0.9762)(213,0.9745)(214,0.9757)(215,0.9756)(216,0.9758)(217,0.9752)(218,0.9759)(219,0.9759)(220,0.9753)(221,0.9753)(222,0.9759)(223,0.9749)(224,0.9747)(225,0.9748)(226,0.9752)(227,0.9753)(228,0.9759)(229,0.9747)(230,0.9757)(231,0.9739)(232,0.9754)(233,0.9751)(234,0.9764)(235,0.976)(236,0.9755)(237,0.9766)(238,0.9774)(239,0.9766)(240,0.9763)(241,0.976)(242,0.9762)(243,0.9755)(244,0.9765)(245,0.9756)(246,0.9757)(247,0.976)(248,0.9764)(249,0.9756)(250,0.977)(251,0.977)(252,0.9759)(253,0.9753)(254,0.9759)(255,0.9752)(256,0.9762)(257,0.9769)(258,0.976)(259,0.9755)(260,0.9777)(261,0.9761)(262,0.977)(263,0.976)(264,0.9762)(265,0.9767)(266,0.9763)(267,0.9765)(268,0.9768)(269,0.9777)(270,0.9769)(271,0.9774)(272,0.9775)(273,0.9771)(274,0.9772)(275,0.977)(276,0.9769)(277,0.9769)(278,0.9773)(279,0.9767)(280,0.9774)(281,0.9771)(282,0.9766)(283,0.9776)(284,0.9772)(285,0.9771)(286,0.9775)(287,0.9776)(288,0.9773)(289,0.9769)(290,0.9769)(291,0.9779)(292,0.9776)(293,0.9767)(294,0.9763)(295,0.9764)(296,0.9769)(297,0.977)(298,0.9773)(299,0.9766)

}; \addlegendentry{Cell-free $P  = 4.47$ dBm}
\addplot  [semithick, color =red, mark = triangle*, mark size = 2, mark repeat = 30, mark phase = 10]
coordinates {
(0,0.104)(1,0.3142)(2,0.6579)(3,0.7408)(4,0.7834)(5,0.8199)(6,0.8197)(7,0.8397)(8,0.8368)(9,0.8479)(10,0.8116)(11,0.8347)(12,0.8728)(13,0.8546)(14,0.8672)(15,0.8759)(16,0.8344)(17,0.8657)(18,0.8513)(19,0.8823)(20,0.8556)(21,0.8728)(22,0.8657)(23,0.8791)(24,0.8881)(25,0.8882)(26,0.8999)(27,0.8836)(28,0.8722)(29,0.8678)(30,0.8951)(31,0.8958)(32,0.8902)(33,0.8903)(34,0.8882)(35,0.8913)(36,0.8757)(37,0.8961)(38,0.903)(39,0.9028)(40,0.8959)(41,0.8845)(42,0.8873)(43,0.8928)(44,0.8923)(45,0.9047)(46,0.9033)(47,0.9086)(48,0.9033)(49,0.9049)(50,0.9039)(51,0.9073)(52,0.8976)(53,0.9046)(54,0.8954)(55,0.9037)(56,0.9022)(57,0.9006)(58,0.9008)(59,0.9179)(60,0.9156)(61,0.9168)(62,0.9097)(63,0.9166)(64,0.9083)(65,0.9234)(66,0.9148)(67,0.907)(68,0.9202)(69,0.9049)(70,0.9181)(71,0.9229)(72,0.9132)(73,0.9097)(74,0.9219)(75,0.9139)(76,0.9157)(77,0.8984)(78,0.9132)(79,0.9229)(80,0.9202)(81,0.9224)(82,0.9212)(83,0.9059)(84,0.919)(85,0.9208)(86,0.9091)(87,0.9133)(88,0.9136)(89,0.9049)(90,0.9098)(91,0.8997)(92,0.9075)(93,0.9072)(94,0.91)(95,0.9164)(96,0.915)(97,0.9082)(98,0.9044)(99,0.9098)(100,0.9083)(101,0.9044)(102,0.9065)(103,0.9096)(104,0.9155)(105,0.9172)(106,0.9211)(107,0.9113)(108,0.9156)(109,0.9105)(110,0.9133)(111,0.9097)(112,0.9175)(113,0.9124)(114,0.9173)(115,0.922)(116,0.9219)(117,0.9193)(118,0.9235)(119,0.9154)(120,0.9118)(121,0.9208)(122,0.9078)(123,0.9153)(124,0.9175)(125,0.9199)(126,0.9163)(127,0.9144)(128,0.9107)(129,0.9158)(130,0.9154)(131,0.9123)(132,0.9141)(133,0.9177)(134,0.9231)(135,0.9231)(136,0.9198)(137,0.9222)(138,0.9191)(139,0.9241)(140,0.9236)(141,0.925)(142,0.9234)(143,0.914)(144,0.922)(145,0.9215)(146,0.9224)(147,0.9235)(148,0.9206)(149,0.9242)(150,0.9228)(151,0.9286)(152,0.9193)(153,0.9235)(154,0.9247)(155,0.9289)(156,0.9221)(157,0.9186)(158,0.9229)(159,0.9202)(160,0.9252)(161,0.9214)(162,0.9218)(163,0.9121)(164,0.9145)(165,0.9132)(166,0.9226)(167,0.9201)(168,0.9231)(169,0.9207)(170,0.9266)(171,0.9217)(172,0.9291)(173,0.9245)(174,0.9225)(175,0.9201)(176,0.9243)(177,0.9216)(178,0.9215)(179,0.9305)(180,0.9253)(181,0.9194)(182,0.9237)(183,0.9169)(184,0.9312)(185,0.9248)(186,0.9245)(187,0.9279)(188,0.9311)(189,0.933)(190,0.9308)(191,0.9258)(192,0.9254)(193,0.929)(194,0.9352)(195,0.9267)(196,0.9288)(197,0.9285)(198,0.9321)(199,0.9326)(200,0.9324)(201,0.9348)(202,0.9373)(203,0.9371)(204,0.9308)(205,0.9337)(206,0.931)(207,0.936)(208,0.9322)(209,0.9289)(210,0.9306)(211,0.9238)(212,0.9296)(213,0.932)(214,0.9316)(215,0.9329)(216,0.9287)(217,0.9193)(218,0.9283)(219,0.9275)(220,0.9337)(221,0.9311)(222,0.9307)(223,0.9304)(224,0.9292)(225,0.93)(226,0.9297)(227,0.93)(228,0.932)(229,0.9303)(230,0.9328)(231,0.9315)(232,0.9238)(233,0.9278)(234,0.9268)(235,0.9309)(236,0.9295)(237,0.9243)(238,0.9321)(239,0.9316)(240,0.9271)(241,0.9307)(242,0.9253)(243,0.9277)(244,0.9293)(245,0.9321)(246,0.9332)(247,0.9294)(248,0.9306)(249,0.9291)(250,0.9364)(251,0.9386)(252,0.9346)(253,0.9371)(254,0.9345)(255,0.933)(256,0.9341)(257,0.9346)(258,0.9347)(259,0.9308)(260,0.9289)(261,0.928)(262,0.9378)(263,0.9325)(264,0.9347)(265,0.9331)(266,0.9347)(267,0.9337)(268,0.9288)(269,0.9334)(270,0.9338)(271,0.9331)(272,0.9299)(273,0.9321)(274,0.9348)(275,0.9372)(276,0.9317)(277,0.9342)(278,0.9338)(279,0.9367)(280,0.94)(281,0.9381)(282,0.9339)(283,0.935)(284,0.9357)(285,0.9351)(286,0.94)(287,0.9409)(288,0.9375)(289,0.9401)(290,0.9363)(291,0.938)(292,0.9374)(293,0.9368)(294,0.9305)(295,0.9291)(296,0.9336)(297,0.9374)(298,0.9409)(299,0.9383)

}; \addlegendentry{Cellular $P  = 13.6$ dBm}

%\addplot [semithick, color = olive]
%coordinates{
%};\addlegendentry{$P = 2.88$}
\end{axis}
\end{tikzpicture}}
\subfigure[$\alpha_t  = 0.1$]{
\begin{tikzpicture}[scale=0.8]
\begin{axis}[
tick align=outside,
tick pos=left,
x grid style={white!69.0196078431373!black},
xlabel={Global rounds},
xmajorgrids,
xmin=0, xmax=300,
xtick style={color=black},
y grid style={white!69.0196078431373!black},
ylabel={Test accuracy},
ymajorgrids,
ymin=0, ymax=1,
ytick style={color=black},
grid=major,
scaled ticks=true,
legend style={at={(1,0.5)},anchor=north east},	
grid style=densely dashed,
]

\addplot [semithick, color=black]
coordinates {
(0,0.104)(1,0.587)(2,0.7894)(3,0.8437)(4,0.8683)(5,0.8856)(6,0.8956)(7,0.9027)(8,0.9099)(9,0.9145)(10,0.9196)(11,0.9236)(12,0.9272)(13,0.9312)(14,0.9334)(15,0.9354)(16,0.9378)(17,0.9396)(18,0.9409)(19,0.9431)(20,0.9455)(21,0.9471)(22,0.9478)(23,0.9489)(24,0.9498)(25,0.9507)(26,0.9517)(27,0.9522)(28,0.9528)(29,0.9538)(30,0.9549)(31,0.9557)(32,0.9563)(33,0.9571)(34,0.9579)(35,0.9588)(36,0.9595)(37,0.9604)(38,0.9606)(39,0.9612)(40,0.9621)(41,0.9624)(42,0.9628)(43,0.9633)(44,0.9633)(45,0.9636)(46,0.9639)(47,0.964)(48,0.965)(49,0.965)(50,0.9654)(51,0.9658)(52,0.9661)(53,0.9668)(54,0.9672)(55,0.9676)(56,0.9681)(57,0.9684)(58,0.9684)(59,0.9684)(60,0.9686)(61,0.9692)(62,0.9694)(63,0.9696)(64,0.9698)(65,0.9704)(66,0.9706)(67,0.9705)(68,0.9707)(69,0.9709)(70,0.9709)(71,0.9709)(72,0.9709)(73,0.9709)(74,0.9709)(75,0.9713)(76,0.9715)(77,0.9718)(78,0.972)(79,0.9718)(80,0.972)(81,0.9721)(82,0.9725)(83,0.9727)(84,0.9728)(85,0.9731)(86,0.9731)(87,0.9732)(88,0.9733)(89,0.9734)(90,0.9736)(91,0.9737)(92,0.9737)(93,0.9737)(94,0.974)(95,0.9739)(96,0.974)(97,0.9743)(98,0.9744)(99,0.9744)(100,0.9743)(101,0.9747)(102,0.9747)(103,0.9747)(104,0.9749)(105,0.9751)(106,0.9752)(107,0.9752)(108,0.9754)(109,0.9753)(110,0.9754)(111,0.9752)(112,0.9754)(113,0.9757)(114,0.976)(115,0.976)(116,0.9762)(117,0.9764)(118,0.9765)(119,0.9765)(120,0.9765)(121,0.9764)(122,0.9765)(123,0.9767)(124,0.9768)(125,0.977)(126,0.9771)(127,0.9773)(128,0.9775)(129,0.9775)(130,0.9775)(131,0.9775)(132,0.9775)(133,0.9775)(134,0.9776)(135,0.9776)(136,0.9776)(137,0.9777)(138,0.9779)(139,0.978)(140,0.978)(141,0.9781)(142,0.9781)(143,0.9782)(144,0.9782)(145,0.9784)(146,0.9783)(147,0.9786)(148,0.9787)(149,0.9787)(150,0.9787)(151,0.9788)(152,0.9789)(153,0.9789)(154,0.979)(155,0.979)(156,0.9791)(157,0.9793)(158,0.9793)(159,0.9794)(160,0.9796)(161,0.9798)(162,0.9799)(163,0.9799)(164,0.98)(165,0.98)(166,0.9802)(167,0.9802)(168,0.9803)(169,0.9804)(170,0.9804)(171,0.9805)(172,0.9806)(173,0.9807)(174,0.9808)(175,0.9808)(176,0.9807)(177,0.9808)(178,0.9807)(179,0.9807)(180,0.9807)(181,0.9807)(182,0.9807)(183,0.981)(184,0.981)(185,0.981)(186,0.981)(187,0.981)(188,0.9812)(189,0.9812)(190,0.9812)(191,0.9813)(192,0.9813)(193,0.9814)(194,0.9815)(195,0.9815)(196,0.9815)(197,0.9815)(198,0.9816)(199,0.9815)(200,0.9815)(201,0.9816)(202,0.9816)(203,0.9818)(204,0.9818)(205,0.9819)(206,0.9819)(207,0.9819)(208,0.9819)(209,0.9819)(210,0.9819)(211,0.9819)(212,0.9818)(213,0.9818)(214,0.9819)(215,0.9819)(216,0.982)(217,0.982)(218,0.982)(219,0.982)(220,0.982)(221,0.982)(222,0.982)(223,0.9822)(224,0.9822)(225,0.9822)(226,0.9822)(227,0.9822)(228,0.9823)(229,0.9823)(230,0.9823)(231,0.9823)(232,0.9823)(233,0.9823)(234,0.9822)(235,0.9822)(236,0.9823)(237,0.9823)(238,0.9823)(239,0.9823)(240,0.9823)(241,0.9823)(242,0.9823)(243,0.9823)(244,0.9823)(245,0.9823)(246,0.9825)(247,0.9825)(248,0.9827)(249,0.9827)(250,0.9827)(251,0.9828)(252,0.9829)(253,0.9829)(254,0.9829)(255,0.9829)(256,0.9828)(257,0.9828)(258,0.9828)(259,0.9828)(260,0.9828)(261,0.9828)(262,0.9828)(263,0.9829)(264,0.9829)(265,0.983)(266,0.983)(267,0.983)(268,0.9831)(269,0.9831)(270,0.9831)(271,0.9831)(272,0.9831)(273,0.9833)(274,0.9834)(275,0.9834)(276,0.9838)(277,0.9838)(278,0.9838)(279,0.9838)(280,0.9838)(281,0.9839)(282,0.984)(283,0.9841)(284,0.9841)(285,0.9841)(286,0.9841)(287,0.984)(288,0.9841)(289,0.9841)(290,0.9842)(291,0.9842)(292,0.9843)(293,0.9843)(294,0.9843)(295,0.9844)(296,0.9846)(297,0.9846)(298,0.9846)(299,0.9846)}; \addlegendentry{Perfect links}

\addplot [semithick, color =blue, mark =  square*,  mark size = 2, mark repeat = 30, mark phase = 10]
coordinates {
(0,0.104)(1,0.4071)(2,0.6612)(3,0.7367)(4,0.7451)(5,0.7787)(6,0.8561)(7,0.8323)(8,0.8346)(9,0.8551)(10,0.8242)(11,0.8586)(12,0.8556)(13,0.8377)(14,0.8662)(15,0.8585)(16,0.8418)(17,0.8461)(18,0.8685)(19,0.8638)(20,0.8613)(21,0.8804)(22,0.8807)(23,0.8874)(24,0.851)(25,0.884)(26,0.886)(27,0.8591)(28,0.8765)(29,0.8855)(30,0.8864)(31,0.8859)(32,0.8766)(33,0.8897)(34,0.8883)(35,0.8695)(36,0.8699)(37,0.8817)(38,0.8865)(39,0.8951)(40,0.8934)(41,0.9044)(42,0.9022)(43,0.8974)(44,0.8776)(45,0.8789)(46,0.8987)(47,0.9078)(48,0.8977)(49,0.897)(50,0.8902)(51,0.9005)(52,0.9072)(53,0.8947)(54,0.896)(55,0.9109)(56,0.9112)(57,0.9067)(58,0.9132)(59,0.8935)(60,0.9075)(61,0.9093)(62,0.9033)(63,0.8949)(64,0.9055)(65,0.8979)(66,0.9066)(67,0.8971)(68,0.9077)(69,0.8953)(70,0.898)(71,0.9009)(72,0.9044)(73,0.9123)(74,0.9164)(75,0.9078)(76,0.9139)(77,0.9118)(78,0.913)(79,0.9184)(80,0.9078)(81,0.9131)(82,0.9082)(83,0.9131)(84,0.9144)(85,0.9129)(86,0.908)(87,0.9093)(88,0.9167)(89,0.8953)(90,0.9216)(91,0.9142)(92,0.92)(93,0.9203)(94,0.9179)(95,0.9118)(96,0.9159)(97,0.9134)(98,0.9035)(99,0.9106)(100,0.9135)(101,0.9164)(102,0.9121)(103,0.9204)(104,0.9188)(105,0.9172)(106,0.9182)(107,0.9168)(108,0.9113)(109,0.917)(110,0.918)(111,0.9162)(112,0.9263)(113,0.923)(114,0.9245)(115,0.9247)(116,0.9221)(117,0.9133)(118,0.919)(119,0.9239)(120,0.9268)(121,0.9201)(122,0.9207)(123,0.9235)(124,0.9194)(125,0.923)(126,0.9148)(127,0.9218)(128,0.9211)(129,0.9282)(130,0.9248)(131,0.9256)(132,0.9354)(133,0.9289)(134,0.9311)(135,0.9299)(136,0.9228)(137,0.9263)(138,0.924)(139,0.9316)(140,0.9273)(141,0.9233)(142,0.9281)(143,0.9321)(144,0.93)(145,0.9308)(146,0.9326)(147,0.9308)(148,0.9348)(149,0.928)(150,0.9317)(151,0.9294)(152,0.9361)(153,0.9326)(154,0.9341)(155,0.933)(156,0.9356)(157,0.9332)(158,0.9292)(159,0.9292)(160,0.9313)(161,0.9298)(162,0.9333)(163,0.9344)(164,0.9348)(165,0.9277)(166,0.9383)(167,0.9354)(168,0.9347)(169,0.9381)(170,0.9342)(171,0.9369)(172,0.9357)(173,0.931)(174,0.9343)(175,0.9224)(176,0.9309)(177,0.9301)(178,0.9292)(179,0.931)(180,0.927)(181,0.9287)(182,0.9292)(183,0.9355)(184,0.9342)(185,0.9354)(186,0.9308)(187,0.9338)(188,0.9298)(189,0.9304)(190,0.9346)(191,0.9346)(192,0.935)(193,0.9312)(194,0.9352)(195,0.9308)(196,0.9365)(197,0.9363)(198,0.9342)(199,0.9383)(200,0.9363)(201,0.9338)(202,0.9326)(203,0.9372)(204,0.9358)(205,0.9327)(206,0.9303)(207,0.9296)(208,0.9292)(209,0.9326)(210,0.9305)(211,0.93)(212,0.9326)(213,0.9367)(214,0.9354)(215,0.9322)(216,0.9394)(217,0.9378)(218,0.9321)(219,0.9317)(220,0.9299)(221,0.934)(222,0.9338)(223,0.9339)(224,0.9347)(225,0.934)(226,0.9349)(227,0.9394)(228,0.9286)(229,0.9307)(230,0.9297)(231,0.936)(232,0.9344)(233,0.9364)(234,0.9418)(235,0.941)(236,0.9367)(237,0.9368)(238,0.9381)(239,0.9339)(240,0.9343)(241,0.9328)(242,0.9284)(243,0.9261)(244,0.9312)(245,0.9336)(246,0.9351)(247,0.9273)(248,0.93)(249,0.926)(250,0.9304)(251,0.9341)(252,0.9308)(253,0.9298)(254,0.9295)(255,0.9305)(256,0.9308)(257,0.9348)(258,0.9371)(259,0.9356)(260,0.931)(261,0.9361)(262,0.9348)(263,0.9367)(264,0.9369)(265,0.9315)(266,0.933)(267,0.9252)(268,0.9361)(269,0.9328)(270,0.9368)(271,0.9324)(272,0.9357)(273,0.9265)(274,0.936)(275,0.9331)(276,0.9291)(277,0.9299)(278,0.937)(279,0.9329)(280,0.934)(281,0.9359)(282,0.9383)(283,0.9384)(284,0.9373)(285,0.9397)(286,0.934)(287,0.9342)(288,0.9326)(289,0.9356)(290,0.9344)(291,0.9304)(292,0.936)(293,0.9363)(294,0.9345)(295,0.9315)(296,0.9322)(297,0.9312)(298,0.9369)(299,0.9334)

}; \addlegendentry{Cell-free, $P = 3.22$ dBm}
\addplot  [semithick, color =red, mark = triangle*, mark size = 2, mark repeat = 30, mark phase = 10]
coordinates {
(0,0.104)(1,0.1131)(2,0.1699)(3,0.238)(4,0.1585)(5,0.2871)(6,0.3256)(7,0.2961)(8,0.3775)(9,0.4582)(10,0.3772)(11,0.3968)(12,0.3709)(13,0.5564)(14,0.491)(15,0.5625)(16,0.5627)(17,0.5933)(18,0.5742)(19,0.5961)(20,0.5394)(21,0.5309)(22,0.6349)(23,0.4981)(24,0.5411)(25,0.5383)(26,0.3389)(27,0.3196)(28,0.2806)(29,0.2309)(30,0.2234)(31,0.1396)(32,0.1056)(33,0.1497)(34,0.113)(35,0.1404)(36,0.108)(37,0.1339)(38,0.1046)(39,0.1063)(40,0.1522)(41,0.0891)(42,0.0896)(43,0.1156)(44,0.1128)(45,0.0992)(46,0.105)(47,0.0957)(48,0.1061)(49,0.1012)(50,0.1118)(51,0.0933)(52,0.1006)(53,0.0998)(54,0.1031)(55,0.1113)(56,0.1077)(57,0.1014)(58,0.1009)(59,0.097)(60,0.123)(61,0.0971)(62,0.0989)(63,0.0979)(64,0.0975)(65,0.0969)(66,0.0941)(67,0.0974)(68,0.1038)(69,0.1027)(70,0.1033)(71,0.102)(72,0.1031)(73,0.1031)(74,0.1035)(75,0.104)(76,0.1032)(77,0.104)(78,0.103)(79,0.1034)(80,0.1026)(81,0.1027)(82,0.1027)(83,0.1027)(84,0.1027)(85,0.1032)(86,0.104)(87,0.1032)(88,0.098)(89,0.098)(90,0.098)(91,0.098)(92,0.098)(93,0.098)(94,0.098)(95,0.098)(96,0.098)(97,0.098)(98,0.098)(99,0.098)(100,0.098)(101,0.098)(102,0.098)(103,0.098)(104,0.098)(105,0.098)(106,0.098)(107,0.098)(108,0.098)(109,0.098)(110,0.098)(111,0.098)(112,0.098)(113,0.098)(114,0.098)(115,0.098)(116,0.098)(117,0.098)(118,0.098)(119,0.098)(120,0.098)(121,0.098)(122,0.098)(123,0.098)(124,0.098)(125,0.098)(126,0.098)(127,0.098)(128,0.098)(129,0.098)(130,0.098)(131,0.098)(132,0.098)(133,0.098)(134,0.098)(135,0.098)(136,0.098)(137,0.098)(138,0.098)(139,0.098)(140,0.098)(141,0.098)(142,0.098)(143,0.098)(144,0.098)(145,0.098)(146,0.098)(147,0.098)(148,0.098)(149,0.098)(150,0.098)(151,0.098)(152,0.098)(153,0.098)(154,0.098)(155,0.098)(156,0.098)(157,0.098)(158,0.098)(159,0.098)(160,0.098)(161,0.098)(162,0.098)(163,0.098)(164,0.098)(165,0.098)(166,0.098)(167,0.098)(168,0.098)(169,0.098)(170,0.098)(171,0.098)(172,0.098)(173,0.098)(174,0.098)(175,0.098)(176,0.098)(177,0.098)(178,0.098)(179,0.098)(180,0.098)(181,0.098)(182,0.098)(183,0.098)(184,0.098)(185,0.098)(186,0.098)(187,0.098)(188,0.098)(189,0.098)(190,0.098)(191,0.098)(192,0.098)(193,0.098)(194,0.098)(195,0.098)(196,0.098)(197,0.098)(198,0.098)(199,0.098)(200,0.098)(201,0.098)(202,0.098)(203,0.098)(204,0.098)(205,0.098)(206,0.098)(207,0.098)(208,0.098)(209,0.098)(210,0.098)(211,0.098)(212,0.098)(213,0.098)(214,0.098)(215,0.098)(216,0.098)(217,0.098)(218,0.098)(219,0.098)(220,0.098)(221,0.098)(222,0.098)(223,0.098)(224,0.098)(225,0.098)(226,0.098)(227,0.098)(228,0.098)(229,0.098)(230,0.098)(231,0.098)(232,0.098)(233,0.098)(234,0.098)(235,0.098)(236,0.098)(237,0.098)(238,0.098)(239,0.098)(240,0.098)(241,0.098)(242,0.098)(243,0.098)(244,0.098)(245,0.098)(246,0.098)(247,0.098)(248,0.098)(249,0.098)(250,0.098)(251,0.098)(252,0.098)(253,0.098)(254,0.098)(255,0.098)(256,0.098)(257,0.098)(258,0.098)(259,0.098)(260,0.098)(261,0.098)(262,0.098)(263,0.098)(264,0.098)(265,0.098)(266,0.098)(267,0.098)(268,0.098)(269,0.098)(270,0.098)(271,0.098)(272,0.098)(273,0.098)(274,0.098)(275,0.098)(276,0.098)(277,0.098)(278,0.098)(279,0.098)(280,0.098)(281,0.098)(282,0.098)(283,0.098)(284,0.098)(285,0.098)(286,0.098)(287,0.098)(288,0.098)(289,0.098)(290,0.098)(291,0.098)(292,0.098)(293,0.098)(294,0.098)(295,0.098)(296,0.098)(297,0.098)(298,0.098)(299,0.098)

}; \addlegendentry{Cellular}

%\addplot [semithick, color = olive]
%coordinates{
%};\addlegendentry{$P = 2.88$}
\end{axis}
\end{tikzpicture}}
\end{center}
\caption{Test accuracy vs. the number of global iterations using i.i.d. MNIST dataset with $\tau=12$. Comparison between cellular and cell-free massive MIMO.}
\label{comp}
\end{figure}

Since the number of global training rounds does not necessarily reflect the time needed till convergence, we added a simulation where we compare the performance of OTA-FL over cellular and cell-free architectures as a function of the training time. In particular, we report in \figref{clock_time} the test accuracy versus the training time using the same setting as in \figref{comp}. Similar convergence behavior as in Fig. 4  is observed and this is due to the fact that a global round takes approximately the same time over both cellular and cell-free massive MIMO as we have remarked in our experiments.

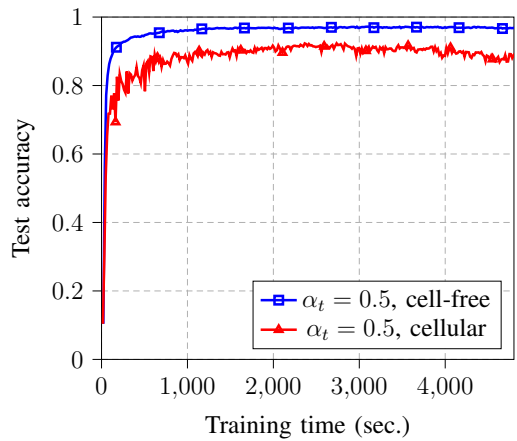
\begin{figure}[h]
\begin{center}
\begin{tikzpicture}[scale=0.8]
\begin{axis}[
tick align=outside,
tick pos=left,
x grid style={white!69.0196078431373!black},
xlabel={Training time (sec.)},
xmajorgrids,
xmin=0, xmax=4800,
xtick style={color=black},
y grid style={white!69.0196078431373!black},
ylabel={Test accuracy},
ymajorgrids,
ymin=0, ymax=1,
ytick style={color=black},
grid=major,
scaled ticks=true,
legend pos=south east,	
grid style=densely dashed,
]

\addplot  [very thick, color =blue, mark = square, mark size = 2, mark repeat = 30, mark phase = 10]
coordinates {
(21.0,0.104)(38.0,0.5623)(54.0,0.7717)(71.0,0.8295)(89.0,0.8625)(106.0,0.8786)(123.0,0.8894)(139.0,0.9004)(156.0,0.9069)(173.0,0.9118)(189.0,0.9151)(205.0,0.9165)(222.0,0.9237)(239.0,0.9251)(256.0,0.9279)(272.0,0.9282)(289.0,0.9306)(306.0,0.9328)(323.0,0.9362)(340.0,0.9399)(357.0,0.9408)(373.0,0.9421)(390.0,0.9432)(406.0,0.9423)(423.0,0.9435)(439.0,0.9423)(456.0,0.9425)(472.0,0.9454)(489.0,0.9442)(506.0,0.9475)(522.0,0.9485)(539.0,0.9498)(555.0,0.952)(571.0,0.9523)(588.0,0.9516)(605.0,0.9517)(622.0,0.9524)(638.0,0.9524)(655.0,0.9536)(672.0,0.9542)(688.0,0.956)(705.0,0.9579)(721.0,0.958)(737.0,0.9562)(754.0,0.9577)(771.0,0.9592)(787.0,0.9586)(804.0,0.9593)(821.0,0.9602)(837.0,0.9603)(854.0,0.9574)(870.0,0.9592)(887.0,0.9564)(903.0,0.9587)(920.0,0.9612)(936.0,0.9592)(952.0,0.9593)(969.0,0.9611)(985.0,0.9613)(1002.0,0.9602)(1019.0,0.9614)(1036.0,0.9615)(1052.0,0.9615)(1068.0,0.9607)(1085.0,0.9631)(1101.0,0.9627)(1118.0,0.9637)(1135.0,0.9637)(1151.0,0.9641)(1167.0,0.9655)(1184.0,0.966)(1201.0,0.9657)(1217.0,0.9643)(1234.0,0.9652)(1251.0,0.9667)(1267.0,0.9661)(1284.0,0.9655)(1300.0,0.9659)(1317.0,0.9664)(1333.0,0.9676)(1350.0,0.9682)(1366.0,0.967)(1383.0,0.9674)(1399.0,0.9675)(1416.0,0.9674)(1432.0,0.968)(1449.0,0.9667)(1465.0,0.9663)(1482.0,0.9659)(1499.0,0.9668)(1515.0,0.9661)(1531.0,0.9666)(1548.0,0.9687)(1565.0,0.9689)(1581.0,0.9673)(1597.0,0.9689)(1614.0,0.9687)(1630.0,0.9674)(1647.0,0.9675)(1663.0,0.968)(1680.0,0.9669)(1697.0,0.9684)(1713.0,0.9671)(1730.0,0.9678)(1746.0,0.9675)(1763.0,0.9674)(1779.0,0.9673)(1795.0,0.9671)(1812.0,0.968)(1828.0,0.9687)(1845.0,0.9682)(1862.0,0.9686)(1878.0,0.9685)(1894.0,0.9685)(1911.0,0.9682)(1927.0,0.9684)(1944.0,0.9691)(1961.0,0.9685)(1978.0,0.9679)(1996.0,0.9672)(2013.0,0.969)(2029.0,0.9672)(2048.0,0.9674)(2065.0,0.9654)(2083.0,0.9652)(2100.0,0.9663)(2118.0,0.9661)(2138.0,0.9668)(2157.0,0.9671)(2174.0,0.9679)(2190.0,0.9678)(2207.0,0.9661)(2224.0,0.9684)(2241.0,0.9676)(2257.0,0.9693)(2274.0,0.9683)(2290.0,0.9672)(2307.0,0.968)(2324.0,0.9692)(2341.0,0.9693)(2357.0,0.9692)(2374.0,0.969)(2391.0,0.9692)(2407.0,0.9695)(2424.0,0.9686)(2441.0,0.9688)(2457.0,0.9698)(2474.0,0.9677)(2491.0,0.9693)(2507.0,0.969)(2524.0,0.9699)(2541.0,0.9697)(2557.0,0.9694)(2574.0,0.9687)(2591.0,0.9678)(2607.0,0.9682)(2624.0,0.9679)(2641.0,0.9693)(2658.0,0.9697)(2675.0,0.97)(2692.0,0.9686)(2708.0,0.9703)(2724.0,0.9709)(2741.0,0.9672)(2757.0,0.9694)(2774.0,0.9707)(2790.0,0.9706)(2807.0,0.9693)(2824.0,0.9705)(2840.0,0.9695)(2857.0,0.9685)(2873.0,0.9696)(2890.0,0.9703)(2906.0,0.9711)(2923.0,0.9712)(2939.0,0.9712)(2956.0,0.969)(2972.0,0.9718)(2989.0,0.9681)(3006.0,0.9687)(3022.0,0.9705)(3039.0,0.9709)(3055.0,0.9707)(3072.0,0.9716)(3088.0,0.9719)(3104.0,0.9719)(3121.0,0.9713)(3137.0,0.9706)(3154.0,0.9712)(3170.0,0.969)(3187.0,0.9704)(3203.0,0.9715)(3220.0,0.9716)(3236.0,0.9698)(3253.0,0.9702)(3270.0,0.9684)(3287.0,0.9702)(3304.0,0.9697)(3320.0,0.9703)(3337.0,0.9688)(3353.0,0.9691)(3370.0,0.968)(3386.0,0.9692)(3403.0,0.9704)(3420.0,0.9695)(3436.0,0.97)(3453.0,0.9699)(3469.0,0.9716)(3485.0,0.9678)(3502.0,0.9702)(3518.0,0.97)(3535.0,0.9717)(3551.0,0.9713)(3568.0,0.9711)(3584.0,0.9705)(3601.0,0.9698)(3617.0,0.9706)(3633.0,0.9711)(3650.0,0.9694)(3666.0,0.9705)(3683.0,0.9703)(3700.0,0.9686)(3716.0,0.9706)(3733.0,0.9711)(3749.0,0.9691)(3765.0,0.9691)(3782.0,0.9707)(3799.0,0.9705)(3815.0,0.97)(3832.0,0.9708)(3848.0,0.9699)(3864.0,0.9703)(3881.0,0.9711)(3897.0,0.9703)(3914.0,0.9716)(3931.0,0.9694)(3947.0,0.97)(3964.0,0.9682)(3980.0,0.9687)(3996.0,0.9702)(4013.0,0.9687)(4029.0,0.9699)(4046.0,0.9707)(4062.0,0.9708)(4079.0,0.9709)(4095.0,0.9705)(4111.0,0.9697)(4128.0,0.9691)(4144.0,0.9684)(4161.0,0.969)(4177.0,0.9686)(4194.0,0.9692)(4211.0,0.9692)(4227.0,0.9682)(4243.0,0.9691)(4260.0,0.9694)(4276.0,0.9695)(4292.0,0.968)(4309.0,0.9686)(4326.0,0.9709)(4343.0,0.9713)(4360.0,0.9691)(4377.0,0.9711)(4393.0,0.9709)(4410.0,0.9696)(4426.0,0.97)(4443.0,0.9697)(4460.0,0.9705)(4477.0,0.9687)(4494.0,0.9684)(4510.0,0.9689)(4528.0,0.9689)(4545.0,0.9693)(4562.0,0.9684)(4579.0,0.967)(4597.0,0.9674)(4614.0,0.968)(4632.0,0.9679)(4649.0,0.9687)(4666.0,0.9665)(4683.0,0.9675)(4701.0,0.9671)(4718.0,0.9676)(4735.0,0.9671)(4753.0,0.9683)(4771.0,0.9677)(4788.0,0.9677)(4806.0,0.9692)(4823.0,0.9683)(4840.0,0.9665)(4858.0,0.9678)(4875.0,0.9674)(4893.0,0.9681)(4910.0,0.9661)(4927.0,0.9657)(4945.0,0.9676)(4962.0,0.9671)(4979.0,0.9683)(4997.0,0.9675)(5014.0,0.966)
}; \addlegendentry{$\alpha_t = 0.5$,   cell-free}

\addplot  [very thick, color =red, mark = triangle, mark size = 2, mark repeat = 30, mark phase = 10]
coordinates {
(16.0,0.104)(33.0,0.3066)(49.0,0.5361)(65.0,0.6527)(81.0,0.7174)(98.0,0.7179)(114.0,0.7704)(130.0,0.7305)(147.0,0.7776)(163.0,0.6932)(180.0,0.8256)(196.0,0.7564)(212.0,0.8329)(229.0,0.8133)(245.0,0.8047)(262.0,0.7877)(278.0,0.7788)(294.0,0.8412)(311.0,0.7768)(327.0,0.836)(343.0,0.8319)(359.0,0.8146)(376.0,0.8244)(392.0,0.8308)(408.0,0.8536)(424.0,0.8274)(440.0,0.8226)(457.0,0.8071)(473.0,0.8448)(489.0,0.8583)(505.0,0.7832)(521.0,0.8641)(538.0,0.8712)(554.0,0.8386)(570.0,0.8852)(586.0,0.8836)(603.0,0.8421)(619.0,0.8899)(635.0,0.8705)(651.0,0.8815)(667.0,0.8689)(683.0,0.8792)(699.0,0.8544)(716.0,0.8745)(732.0,0.8634)(748.0,0.8637)(764.0,0.8698)(780.0,0.8657)(797.0,0.8585)(813.0,0.8775)(829.0,0.8872)(845.0,0.8953)(861.0,0.8943)(877.0,0.8863)(894.0,0.8913)(909.0,0.8742)(926.0,0.883)(942.0,0.8826)(958.0,0.88)(974.0,0.8881)(991.0,0.875)(1007.0,0.8641)(1023.0,0.8888)(1039.0,0.8788)(1056.0,0.8723)(1072.0,0.8939)(1088.0,0.8981)(1104.0,0.8934)(1120.0,0.8857)(1136.0,0.9005)(1153.0,0.889)(1169.0,0.8842)(1185.0,0.8959)(1201.0,0.8908)(1217.0,0.9003)(1233.0,0.9037)(1249.0,0.8931)(1266.0,0.8984)(1282.0,0.8836)(1298.0,0.8971)(1314.0,0.8947)(1331.0,0.8938)(1347.0,0.8915)(1363.0,0.8878)(1379.0,0.8869)(1395.0,0.8984)(1411.0,0.8914)(1428.0,0.8918)(1444.0,0.8905)(1460.0,0.8964)(1476.0,0.8974)(1492.0,0.8825)(1508.0,0.8967)(1524.0,0.9081)(1540.0,0.9004)(1556.0,0.9023)(1572.0,0.8857)(1588.0,0.9077)(1605.0,0.8987)(1621.0,0.9028)(1637.0,0.9039)(1653.0,0.9048)(1669.0,0.904)(1686.0,0.8952)(1702.0,0.9001)(1718.0,0.8981)(1734.0,0.9027)(1751.0,0.8972)(1767.0,0.8956)(1783.0,0.9098)(1799.0,0.899)(1815.0,0.9087)(1831.0,0.9045)(1847.0,0.9073)(1864.0,0.9104)(1880.0,0.9023)(1896.0,0.9094)(1912.0,0.9096)(1928.0,0.906)(1944.0,0.9115)(1960.0,0.9071)(1976.0,0.9095)(1993.0,0.9167)(2009.0,0.9105)(2025.0,0.8995)(2041.0,0.9033)(2058.0,0.917)(2074.0,0.9091)(2090.0,0.9)(2106.0,0.8962)(2122.0,0.9031)(2138.0,0.9059)(2154.0,0.9128)(2171.0,0.9144)(2187.0,0.9132)(2203.0,0.9035)(2219.0,0.913)(2235.0,0.9158)(2251.0,0.9163)(2267.0,0.9149)(2284.0,0.915)(2300.0,0.9137)(2316.0,0.9192)(2332.0,0.9165)(2348.0,0.9139)(2365.0,0.9231)(2381.0,0.9212)(2397.0,0.9155)(2413.0,0.9214)(2429.0,0.917)(2445.0,0.9178)(2462.0,0.9144)(2478.0,0.914)(2494.0,0.9123)(2510.0,0.9135)(2526.0,0.914)(2542.0,0.9164)(2559.0,0.8977)(2575.0,0.9159)(2591.0,0.9142)(2607.0,0.912)(2623.0,0.9172)(2639.0,0.9121)(2655.0,0.9093)(2671.0,0.9154)(2687.0,0.9184)(2704.0,0.9076)(2720.0,0.9158)(2736.0,0.9233)(2752.0,0.9181)(2768.0,0.9186)(2784.0,0.9209)(2800.0,0.9145)(2817.0,0.9055)(2833.0,0.9224)(2849.0,0.9184)(2865.0,0.9151)(2881.0,0.9136)(2898.0,0.9121)(2915.0,0.9075)(2931.0,0.9121)(2947.0,0.9071)(2964.0,0.8865)(2980.0,0.9085)(2996.0,0.9094)(3012.0,0.8929)(3029.0,0.9084)(3045.0,0.8947)(3061.0,0.8987)(3077.0,0.9013)(3093.0,0.9078)(3110.0,0.9137)(3126.0,0.9019)(3142.0,0.9076)(3158.0,0.9138)(3174.0,0.9103)(3191.0,0.9128)(3208.0,0.9032)(3224.0,0.8978)(3241.0,0.9113)(3257.0,0.9061)(3273.0,0.911)(3289.0,0.8945)(3306.0,0.9119)(3322.0,0.9058)(3338.0,0.9063)(3355.0,0.9083)(3371.0,0.908)(3388.0,0.9023)(3404.0,0.9004)(3420.0,0.9007)(3437.0,0.9116)(3454.0,0.9177)(3470.0,0.9082)(3486.0,0.9113)(3503.0,0.9115)(3519.0,0.911)(3535.0,0.9112)(3552.0,0.9049)(3568.0,0.9161)(3584.0,0.9151)(3601.0,0.9046)(3617.0,0.9084)(3634.0,0.9099)(3650.0,0.8988)(3667.0,0.8965)(3683.0,0.9049)(3700.0,0.9056)(3716.0,0.9062)(3733.0,0.9038)(3749.0,0.9073)(3766.0,0.9047)(3782.0,0.9086)(3799.0,0.9077)(3815.0,0.9087)(3833.0,0.9132)(3850.0,0.9134)(3866.0,0.8997)(3882.0,0.9121)(3898.0,0.9005)(3915.0,0.9068)(3931.0,0.8893)(3948.0,0.8955)(3964.0,0.8811)(3980.0,0.8877)(3997.0,0.8865)(4013.0,0.9094)(4030.0,0.8936)(4046.0,0.9074)(4062.0,0.9101)(4079.0,0.8955)(4095.0,0.9038)(4111.0,0.8984)(4127.0,0.8931)(4143.0,0.8945)(4160.0,0.8993)(4176.0,0.9001)(4192.0,0.896)(4208.0,0.8992)(4225.0,0.8837)(4241.0,0.898)(4257.0,0.8977)(4274.0,0.8972)(4290.0,0.8842)(4306.0,0.8805)(4323.0,0.8905)(4339.0,0.8889)(4355.0,0.8967)(4371.0,0.8964)(4387.0,0.8968)(4404.0,0.894)(4420.0,0.8978)(4437.0,0.8893)(4453.0,0.8857)(4469.0,0.8708)(4485.0,0.8874)(4502.0,0.885)(4518.0,0.8774)(4535.0,0.8871)(4551.0,0.877)(4568.0,0.8718)(4584.0,0.87)(4600.0,0.8786)(4616.0,0.8836)(4633.0,0.8688)(4649.0,0.8757)(4666.0,0.8778)(4682.0,0.8763)(4698.0,0.8891)(4715.0,0.8698)(4731.0,0.8702)(4748.0,0.8919)(4764.0,0.8916)(4780.0,0.8805)(4797.0,0.8889)(4813.0,0.8942)(4830.0,0.89)(4847.0,0.8986)(4865.0,0.8838)(4881.0,0.8821)
}; \addlegendentry{$\alpha_t = 0.5$,   cellular}

\end{axis}
\end{tikzpicture}
\end{center}
\caption{Test accuracy vs. training time using i.i.d. MNIST dataset for $N=25$ and $\tau=12$. Comparison between cellular and cell-free massive MIMO.}
\label{clock_time}
\end{figure}

As we mentioned in the previous section, various techniques can be employed to enhance the performance of the FL process in both cell-free and cellular massive MIMO such as client scheduling and partial device participation. In order to provide illustrative examples, we have conducted additional simulations incorporating different user scheduling techniques. In \figref{scheduling}, we present a performance comparison of two distinct scheduling approaches with partial client participation, considering both cellular and cell-free scenarios. 

In the case of partial client participation, only a subset of clients is selected to upload their local updates during each global round. For our simulations, we assume a total of $N = 25$ clients, out of which $r = 10$ clients are chosen to send their local updates at each global round. We investigate two widely adopted scheduling schemes: random selection (RS) and update significance (US).
In the RS scheme, $r$ clients are chosen randomly from the available pool. On the other hand, the US scheme selects the clients with the highest norm of the model updates.

We conclude from the figure that the performance of the two scheduling techniques is almost the same for i.i.d. data distribution for both cell-free and cellular architectures. On the other hand, US scheduling outperforms RS scheduling in the case of non-i.i.d. data over cellular massive MIMO. We remark also that the cell-free architecture provides better performance than cellular MIMO under different scheduling techniques for the same power scaling factor $\alpha_t$.

\begin{figure}[h]
\begin{center}
\subfigure[i.i.d. MNIST, $\alpha_t = 0.5$]{
\begin{tikzpicture}[scale=0.8]
\begin{axis}[
tick align=outside,
tick pos=left,
x grid style={white!69.0196078431373!black},
xlabel={Global rounds},
xmajorgrids,
xmin=0, xmax=300,
xtick style={color=black},
y grid style={white!69.0196078431373!black},
ylabel={Test accuracy},
ymajorgrids,
ymin=0, ymax=1,
ytick style={color=black},
grid=major,
scaled ticks=true,
legend pos=south east,	
grid style=densely dashed,
]

\addplot  [semithick, color = blue, mark = square*, mark size = 2, mark repeat = 30, mark phase = 10]
coordinates {

(0,0.104)(1,0.5544)(2,0.7073)(3,0.8039)(4,0.8399)(5,0.8568)(6,0.8434)(7,0.893)(8,0.8844)(9,0.8997)(10,0.8996)(11,0.9083)(12,0.9114)(13,0.9164)(14,0.911)(15,0.9212)(16,0.9253)(17,0.9225)(18,0.9295)(19,0.9226)(20,0.9284)(21,0.9294)(22,0.9308)(23,0.9181)(24,0.9359)(25,0.936)(26,0.9393)(27,0.9419)(28,0.9438)(29,0.9424)(30,0.9437)(31,0.9396)(32,0.945)(33,0.9449)(34,0.9421)(35,0.9376)(36,0.9421)(37,0.9464)(38,0.9464)(39,0.9402)(40,0.9478)(41,0.9507)(42,0.9494)(43,0.9495)(44,0.9468)(45,0.9516)(46,0.9502)(47,0.9518)(48,0.9498)(49,0.9491)(50,0.9495)(51,0.9474)(52,0.9511)(53,0.9489)(54,0.9507)(55,0.9478)(56,0.9569)(57,0.9495)(58,0.9563)(59,0.9511)(60,0.9548)(61,0.9546)(62,0.9558)(63,0.9562)(64,0.9535)(65,0.9536)(66,0.9502)(67,0.9582)(68,0.9549)(69,0.949)(70,0.9576)(71,0.9577)(72,0.9531)(73,0.9524)(74,0.9605)(75,0.954)(76,0.9562)(77,0.9511)(78,0.9524)(79,0.949)(80,0.9507)(81,0.955)(82,0.9554)(83,0.9572)(84,0.9586)(85,0.9592)(86,0.9575)(87,0.9571)(88,0.9548)(89,0.9548)(90,0.9531)(91,0.9581)(92,0.9571)(93,0.9509)(94,0.9575)(95,0.9572)(96,0.9563)(97,0.9598)(98,0.9534)(99,0.9522)(100,0.9559)(101,0.9512)(102,0.9537)(103,0.957)(104,0.9575)(105,0.9555)(106,0.9603)(107,0.955)(108,0.9545)(109,0.9532)(110,0.9533)(111,0.9532)(112,0.9568)(113,0.9589)(114,0.9565)(115,0.95)(116,0.9545)(117,0.957)(118,0.9577)(119,0.9536)(120,0.9586)(121,0.9583)(122,0.9599)(123,0.9569)(124,0.9584)(125,0.9529)(126,0.954)(127,0.9553)(128,0.9527)(129,0.9485)(130,0.9559)(131,0.9544)(132,0.9577)(133,0.953)(134,0.952)(135,0.9462)(136,0.953)(137,0.9491)(138,0.9538)(139,0.9567)(140,0.9516)(141,0.9546)(142,0.9537)(143,0.9552)(144,0.9569)(145,0.9566)(146,0.9557)(147,0.9535)(148,0.9527)(149,0.9542)(150,0.9548)(151,0.9569)(152,0.9545)(153,0.9574)(154,0.9557)(155,0.9571)(156,0.9568)(157,0.9558)(158,0.9551)(159,0.9551)(160,0.9561)(161,0.9563)(162,0.9586)(163,0.9576)(164,0.9569)(165,0.9588)(166,0.9596)(167,0.9585)(168,0.9534)(169,0.9577)(170,0.9558)(171,0.9547)(172,0.9549)(173,0.9582)(174,0.9563)(175,0.953)(176,0.9593)(177,0.957)(178,0.9583)(179,0.9607)(180,0.9586)(181,0.9575)(182,0.9567)(183,0.9562)(184,0.9584)(185,0.9597)(186,0.9569)(187,0.9587)(188,0.9587)(189,0.9582)(190,0.9603)(191,0.9599)(192,0.9591)(193,0.9544)(194,0.9593)(195,0.9595)(196,0.9609)(197,0.9568)(198,0.9605)(199,0.9598)(200,0.9588)(201,0.9585)(202,0.9601)(203,0.9608)(204,0.9584)(205,0.9568)(206,0.9569)(207,0.9608)(208,0.9593)(209,0.9593)(210,0.9611)(211,0.9621)(212,0.9625)(213,0.962)(214,0.961)(215,0.9572)(216,0.959)(217,0.959)(218,0.9541)(219,0.9599)(220,0.9579)(221,0.9584)(222,0.9605)(223,0.9601)(224,0.9597)(225,0.9623)(226,0.9618)(227,0.9608)(228,0.9576)(229,0.9615)(230,0.9587)(231,0.9629)(232,0.9636)(233,0.962)(234,0.9624)(235,0.9607)(236,0.9599)(237,0.9603)(238,0.9567)(239,0.9571)(240,0.9619)(241,0.9626)(242,0.9616)(243,0.9593)(244,0.9594)(245,0.9591)(246,0.9601)(247,0.9586)(248,0.9604)(249,0.9599)(250,0.9596)(251,0.9598)(252,0.9614)(253,0.9623)(254,0.9594)(255,0.9621)(256,0.9613)(257,0.961)(258,0.9617)(259,0.9613)(260,0.9621)(261,0.9598)(262,0.9621)(263,0.9603)(264,0.9598)(265,0.9592)(266,0.9605)(267,0.9608)(268,0.9626)(269,0.9626)(270,0.9625)(271,0.9618)(272,0.9618)(273,0.9615)(274,0.9625)(275,0.9607)(276,0.9611)(277,0.9574)(278,0.961)(279,0.9624)(280,0.9602)(281,0.9646)(282,0.9612)(283,0.9619)(284,0.9595)(285,0.9618)(286,0.9604)(287,0.9603)(288,0.9592)(289,0.9614)(290,0.9592)(291,0.9624)(292,0.9617)(293,0.9623)(294,0.963)(295,0.961)(296,0.9625)(297,0.9617)(298,0.9606)(299,0.961)
}; \addlegendentry{ cell-free, RS  }

\addplot [semithick, color =red, mark =  triangle*,  mark size = 2, mark repeat = 30, mark phase = 10]
coordinates {
(0,0.104)(1,0.5356)(2,0.7291)(3,0.7889)(4,0.8356)(5,0.8461)(6,0.8767)(7,0.8868)(8,0.8956)(9,0.8823)(10,0.9023)(11,0.9091)(12,0.9147)(13,0.9128)(14,0.9178)(15,0.9254)(16,0.9262)(17,0.9299)(18,0.9261)(19,0.9268)(20,0.9219)(21,0.9361)(22,0.9388)(23,0.9358)(24,0.9381)(25,0.9362)(26,0.9374)(27,0.9426)(28,0.9412)(29,0.9474)(30,0.946)(31,0.9432)(32,0.9424)(33,0.9428)(34,0.9427)(35,0.9422)(36,0.9467)(37,0.9456)(38,0.9497)(39,0.9448)(40,0.9522)(41,0.9448)(42,0.952)(43,0.9506)(44,0.9502)(45,0.9521)(46,0.9473)(47,0.9524)(48,0.9454)(49,0.9524)(50,0.9526)(51,0.9569)(52,0.9547)(53,0.9515)(54,0.9494)(55,0.9538)(56,0.9469)(57,0.9548)(58,0.9488)(59,0.9539)(60,0.9553)(61,0.9574)(62,0.9474)(63,0.9519)(64,0.9526)(65,0.9528)(66,0.9539)(67,0.9561)(68,0.9494)(69,0.9532)(70,0.9533)(71,0.9507)(72,0.9537)(73,0.9572)(74,0.9551)(75,0.9592)(76,0.9528)(77,0.9562)(78,0.9562)(79,0.9554)(80,0.9518)(81,0.95)(82,0.9554)(83,0.9551)(84,0.9563)(85,0.9535)(86,0.9525)(87,0.955)(88,0.958)(89,0.9565)(90,0.9563)(91,0.9543)(92,0.9574)(93,0.9533)(94,0.9577)(95,0.9564)(96,0.9598)(97,0.9584)(98,0.9575)(99,0.9612)(100,0.9591)(101,0.9626)(102,0.9571)(103,0.9596)(104,0.9592)(105,0.9588)(106,0.9609)(107,0.9617)(108,0.9538)(109,0.957)(110,0.9535)(111,0.958)(112,0.9596)(113,0.9596)(114,0.9605)(115,0.9591)(116,0.9532)(117,0.9567)(118,0.9607)(119,0.9589)(120,0.956)(121,0.958)(122,0.9591)(123,0.9502)(124,0.9541)(125,0.9585)(126,0.9547)(127,0.9547)(128,0.9537)(129,0.9566)(130,0.9586)(131,0.9578)(132,0.9579)(133,0.9585)(134,0.9576)(135,0.9609)(136,0.9594)(137,0.9606)(138,0.9584)(139,0.9599)(140,0.9581)(141,0.96)(142,0.959)(143,0.9603)(144,0.9598)(145,0.9608)(146,0.9584)(147,0.9573)(148,0.9609)(149,0.961)(150,0.959)(151,0.9601)(152,0.9609)(153,0.9623)(154,0.9601)(155,0.9528)(156,0.9624)(157,0.9603)(158,0.9631)(159,0.9636)(160,0.9633)(161,0.9564)(162,0.9626)(163,0.9627)(164,0.9618)(165,0.9614)(166,0.9627)(167,0.9576)(168,0.9592)(169,0.9609)(170,0.9622)(171,0.9546)(172,0.9611)(173,0.9567)(174,0.9633)(175,0.9637)(176,0.9627)(177,0.9589)(178,0.9636)(179,0.9582)(180,0.9643)(181,0.965)(182,0.9648)(183,0.965)(184,0.9632)(185,0.9631)(186,0.9605)(187,0.9606)(188,0.9591)(189,0.9645)(190,0.9616)(191,0.962)(192,0.9628)(193,0.9647)(194,0.9617)(195,0.9633)(196,0.964)(197,0.9585)(198,0.9621)(199,0.9591)(200,0.9619)(201,0.9527)(202,0.9598)(203,0.9602)(204,0.958)(205,0.9588)(206,0.9609)(207,0.9578)(208,0.9575)(209,0.9585)(210,0.9623)(211,0.9631)(212,0.9624)(213,0.9597)(214,0.9607)(215,0.9595)(216,0.9575)(217,0.9562)(218,0.9612)(219,0.9604)(220,0.9585)(221,0.9566)(222,0.9608)(223,0.963)(224,0.9581)(225,0.9627)(226,0.9586)(227,0.9586)(228,0.9584)(229,0.96)(230,0.9609)(231,0.9618)(232,0.9621)(233,0.9632)(234,0.9601)(235,0.9627)(236,0.9606)(237,0.9592)(238,0.9635)(239,0.9628)(240,0.9634)(241,0.9572)(242,0.9615)(243,0.9577)(244,0.9604)(245,0.9588)(246,0.9591)(247,0.9607)(248,0.9574)(249,0.9633)(250,0.9606)(251,0.9611)(252,0.9604)(253,0.9635)(254,0.9604)(255,0.9623)(256,0.9619)(257,0.9568)(258,0.9633)(259,0.9628)(260,0.963)(261,0.9615)(262,0.9606)(263,0.9585)(264,0.9608)(265,0.9629)(266,0.9631)(267,0.9621)(268,0.9626)(269,0.9619)(270,0.9633)(271,0.9602)(272,0.9622)(273,0.9586)(274,0.9608)(275,0.9578)(276,0.962)(277,0.9633)(278,0.9612)(279,0.9593)(280,0.962)(281,0.9624)(282,0.9608)(283,0.9622)(284,0.9593)(285,0.9561)(286,0.9636)(287,0.9622)(288,0.9615)(289,0.9606)(290,0.962)(291,0.9599)(292,0.9597)(293,0.9613)(294,0.9616)(295,0.9649)(296,0.9602)(297,0.9617)(298,0.9646)(299,0.9622)
}; \addlegendentry{cell-free, US }

\addplot  [semithick, color =black, mark = diamond*, mark size = 2, mark repeat = 30, mark phase = 10]
coordinates {
(0,0.104)(1,0.152)(2,0.2493)(3,0.3855)(4,0.5646)(5,0.6394)(6,0.6043)(7,0.6236)(8,0.6632)(9,0.6654)(10,0.7305)(11,0.7488)(12,0.7725)(13,0.7719)(14,0.7691)(15,0.8128)(16,0.8085)(17,0.8009)(18,0.8027)(19,0.8264)(20,0.8408)(21,0.8207)(22,0.8119)(23,0.8355)(24,0.8101)(25,0.7987)(26,0.8409)(27,0.8471)(28,0.8423)(29,0.849)(30,0.8363)(31,0.8437)(32,0.8083)(33,0.8042)(34,0.8264)(35,0.8317)(36,0.8274)(37,0.8358)(38,0.8434)(39,0.812)(40,0.8388)(41,0.8442)(42,0.8301)(43,0.832)(44,0.8392)(45,0.8359)(46,0.827)(47,0.8247)(48,0.8264)(49,0.8349)(50,0.838)(51,0.8567)(52,0.8413)(53,0.8558)(54,0.862)(55,0.8668)(56,0.8531)(57,0.8577)(58,0.8623)(59,0.8593)(60,0.8517)(61,0.8504)(62,0.8599)(63,0.8574)(64,0.8617)(65,0.8695)(66,0.864)(67,0.8541)(68,0.8564)(69,0.8601)(70,0.8621)(71,0.8489)(72,0.8594)(73,0.8595)(74,0.8672)(75,0.8534)(76,0.8521)(77,0.8627)(78,0.8699)(79,0.8729)(80,0.88)(81,0.8677)(82,0.8711)(83,0.8691)(84,0.8704)(85,0.8584)(86,0.8658)(87,0.8693)(88,0.8742)(89,0.8773)(90,0.8603)(91,0.878)(92,0.8772)(93,0.8711)(94,0.8911)(95,0.875)(96,0.8835)(97,0.8818)(98,0.891)(99,0.8802)(100,0.8857)(101,0.8845)(102,0.8762)(103,0.8867)(104,0.8825)(105,0.892)(106,0.8831)(107,0.8907)(108,0.865)(109,0.8595)(110,0.8793)(111,0.884)(112,0.8756)(113,0.8772)(114,0.8926)(115,0.889)(116,0.8971)(117,0.8987)(118,0.8957)(119,0.8949)(120,0.8851)(121,0.887)(122,0.8896)(123,0.8921)(124,0.8818)(125,0.9004)(126,0.9001)(127,0.8772)(128,0.8905)(129,0.8813)(130,0.8894)(131,0.8981)(132,0.8884)(133,0.8924)(134,0.8906)(135,0.901)(136,0.9003)(137,0.897)(138,0.8962)(139,0.89)(140,0.8898)(141,0.885)(142,0.8778)(143,0.8827)(144,0.8831)(145,0.8932)(146,0.8947)(147,0.9045)(148,0.902)(149,0.8971)(150,0.8903)(151,0.9068)(152,0.9055)(153,0.9111)(154,0.8995)(155,0.9053)(156,0.9074)(157,0.8992)(158,0.8998)(159,0.8993)(160,0.9024)(161,0.905)(162,0.9028)(163,0.9039)(164,0.904)(165,0.9044)(166,0.9018)(167,0.9047)(168,0.9048)(169,0.9031)(170,0.9044)(171,0.9075)(172,0.9079)(173,0.9011)(174,0.9063)(175,0.9025)(176,0.9024)(177,0.9046)(178,0.9122)(179,0.9072)(180,0.9087)(181,0.9089)(182,0.906)(183,0.9079)(184,0.9021)(185,0.899)(186,0.9002)(187,0.9013)(188,0.905)(189,0.9094)(190,0.9043)(191,0.9045)(192,0.9092)(193,0.9055)(194,0.9163)(195,0.9116)(196,0.9146)(197,0.9119)(198,0.9135)(199,0.9083)(200,0.9167)(201,0.9156)(202,0.9154)(203,0.9164)(204,0.9231)(205,0.9176)(206,0.9059)(207,0.915)(208,0.9174)(209,0.9113)(210,0.916)(211,0.9156)(212,0.9143)(213,0.9158)(214,0.9198)(215,0.915)(216,0.9091)(217,0.9115)(218,0.9203)(219,0.9081)(220,0.9141)(221,0.9156)(222,0.9116)(223,0.9137)(224,0.9088)(225,0.9129)(226,0.9073)(227,0.902)(228,0.909)(229,0.9061)(230,0.9126)(231,0.9143)(232,0.9119)(233,0.9122)(234,0.9162)(235,0.9111)(236,0.91)(237,0.915)(238,0.9062)(239,0.9115)(240,0.9132)(241,0.9133)(242,0.9075)(243,0.9078)(244,0.9127)(245,0.9145)(246,0.9062)(247,0.9133)(248,0.9129)(249,0.9142)(250,0.913)(251,0.9226)(252,0.9215)(253,0.9218)(254,0.9173)(255,0.9153)(256,0.9209)(257,0.9188)(258,0.9207)(259,0.9197)(260,0.9193)(261,0.9174)(262,0.9208)(263,0.922)(264,0.9164)(265,0.9177)(266,0.9176)(267,0.9183)(268,0.9151)(269,0.9161)(270,0.9115)(271,0.9123)(272,0.9194)(273,0.9171)(274,0.9192)(275,0.9182)(276,0.9204)(277,0.9204)(278,0.921)(279,0.9186)(280,0.9165)(281,0.9217)(282,0.9189)(283,0.9168)(284,0.9179)(285,0.9197)(286,0.9152)(287,0.9171)(288,0.9231)(289,0.9183)(290,0.9155)(291,0.9156)(292,0.9148)(293,0.9143)(294,0.9165)(295,0.921)(296,0.9172)(297,0.9125)(298,0.9158)(299,0.9107)
}; \addlegendentry{cellular, RS} %P=0.0020

\addplot [semithick, color = olive]
coordinates{
(0,0.104)(1,0.1191)(2,0.2698)(3,0.4243)(4,0.5702)(5,0.5453)(6,0.6722)(7,0.696)(8,0.722)(9,0.7257)(10,0.7449)(11,0.7482)(12,0.7546)(13,0.7832)(14,0.7806)(15,0.7766)(16,0.804)(17,0.8244)(18,0.8042)(19,0.7756)(20,0.7945)(21,0.8154)(22,0.7769)(23,0.8104)(24,0.8207)(25,0.829)(26,0.8122)(27,0.8136)(28,0.8258)(29,0.7996)(30,0.8257)(31,0.8223)(32,0.8219)(33,0.8249)(34,0.8458)(35,0.8406)(36,0.8545)(37,0.809)(38,0.8427)(39,0.8375)(40,0.8429)(41,0.8247)(42,0.8448)(43,0.8168)(44,0.8162)(45,0.846)(46,0.8239)(47,0.844)(48,0.8233)(49,0.8463)(50,0.8545)(51,0.8518)(52,0.8324)(53,0.8456)(54,0.8261)(55,0.8565)(56,0.8581)(57,0.8498)(58,0.8325)(59,0.8447)(60,0.8359)(61,0.8416)(62,0.8331)(63,0.8426)(64,0.8504)(65,0.8592)(66,0.8509)(67,0.8501)(68,0.8574)(69,0.8652)(70,0.8789)(71,0.8772)(72,0.8726)(73,0.8499)(74,0.8295)(75,0.8396)(76,0.846)(77,0.8632)(78,0.8586)(79,0.8645)(80,0.8678)(81,0.8738)(82,0.8615)(83,0.8573)(84,0.8635)(85,0.8551)(86,0.8721)(87,0.844)(88,0.8797)(89,0.8657)(90,0.8665)(91,0.8762)(92,0.8745)(93,0.8763)(94,0.8696)(95,0.8663)(96,0.8634)(97,0.8676)(98,0.8498)(99,0.8566)(100,0.8621)(101,0.8398)(102,0.8563)(103,0.8557)(104,0.8689)(105,0.8664)(106,0.86)(107,0.8484)(108,0.8722)(109,0.8757)(110,0.8569)(111,0.8764)(112,0.8752)(113,0.8788)(114,0.8723)(115,0.8576)(116,0.8656)(117,0.8531)(118,0.8612)(119,0.8642)(120,0.8632)(121,0.8706)(122,0.8804)(123,0.8711)(124,0.8809)(125,0.8712)(126,0.8718)(127,0.8582)(128,0.8749)(129,0.8631)(130,0.8623)(131,0.8879)(132,0.8842)(133,0.884)(134,0.8818)(135,0.8821)(136,0.8846)(137,0.8742)(138,0.8565)(139,0.8584)(140,0.8516)(141,0.8639)(142,0.8714)(143,0.8695)(144,0.8734)(145,0.8772)(146,0.8919)(147,0.8852)(148,0.8819)(149,0.874)(150,0.8732)(151,0.8848)(152,0.8763)(153,0.8743)(154,0.8843)(155,0.8827)(156,0.8866)(157,0.8851)(158,0.8842)(159,0.8795)(160,0.8841)(161,0.8793)(162,0.8856)(163,0.8851)(164,0.8872)(165,0.8822)(166,0.8779)(167,0.8637)(168,0.8735)(169,0.8924)(170,0.8817)(171,0.8916)(172,0.8879)(173,0.8887)(174,0.8882)(175,0.88)(176,0.8824)(177,0.8804)(178,0.885)(179,0.8935)(180,0.8924)(181,0.8985)(182,0.8945)(183,0.8928)(184,0.8952)(185,0.9009)(186,0.8967)(187,0.8979)(188,0.8979)(189,0.8991)(190,0.9062)(191,0.9029)(192,0.8985)(193,0.9045)(194,0.902)(195,0.9018)(196,0.8963)(197,0.8852)(198,0.8948)(199,0.8848)(200,0.895)(201,0.9037)(202,0.9091)(203,0.9006)(204,0.902)(205,0.8972)(206,0.9073)(207,0.9093)(208,0.9046)(209,0.9101)(210,0.898)(211,0.895)(212,0.9006)(213,0.8968)(214,0.8971)(215,0.9006)(216,0.8962)(217,0.8935)(218,0.8884)(219,0.8923)(220,0.9057)(221,0.9078)(222,0.9056)(223,0.8979)(224,0.895)(225,0.8919)(226,0.9041)(227,0.9002)(228,0.8966)(229,0.9056)(230,0.9088)(231,0.9127)(232,0.9145)(233,0.9094)(234,0.9109)(235,0.9114)(236,0.9089)(237,0.9038)(238,0.9046)(239,0.8984)(240,0.9048)(241,0.9041)(242,0.9124)(243,0.9093)(244,0.9154)(245,0.9065)(246,0.8999)(247,0.9052)(248,0.9094)(249,0.903)(250,0.9047)(251,0.9035)(252,0.8976)(253,0.8974)(254,0.8916)(255,0.9012)(256,0.9099)(257,0.9094)(258,0.9091)(259,0.907)(260,0.8972)(261,0.903)(262,0.9072)(263,0.9012)(264,0.8973)(265,0.9084)(266,0.9054)(267,0.9133)(268,0.9149)(269,0.9109)(270,0.9102)(271,0.9015)(272,0.9148)(273,0.9106)(274,0.911)(275,0.9175)(276,0.9059)(277,0.9145)(278,0.9168)(279,0.9165)(280,0.9138)(281,0.9149)(282,0.9163)(283,0.9155)(284,0.9198)(285,0.9197)(286,0.9156)(287,0.9147)(288,0.9202)(289,0.9156)(290,0.9155)(291,0.9097)(292,0.9098)(293,0.9165)(294,0.9139)(295,0.9169)(296,0.9144)(297,0.917)(298,0.9167)(299,0.9216)
};\addlegendentry{cellular, US}
\end{axis}
\end{tikzpicture}}
\subfigure[non-i.i.d. MNIST, $\alpha_t=1.0$]{
\begin{tikzpicture}[scale=0.8]
\begin{axis}[
tick align=outside,
tick pos=left,
x grid style={white!69.0196078431373!black},
xlabel={Global rounds},
xmajorgrids,
xmin=0, xmax=300,
xtick style={color=black},
y grid style={white!69.0196078431373!black},
ylabel={Test accuracy},
ymajorgrids,
ymin=0, ymax=1,
ytick style={color=black},
grid=major,
scaled ticks=true,
legend pos=south east,	
grid style=densely dashed,
]

\addplot [semithick, color =blue, mark =  square*,  mark size = 2, mark repeat = 30, mark phase = 10]
coordinates {
(0,0.104)(1,0.1287)(2,0.1956)(3,0.2086)(4,0.2971)(5,0.2566)(6,0.2984)(7,0.4077)(8,0.427)(9,0.5284)(10,0.439)(11,0.6336)(12,0.6502)(13,0.5802)(14,0.67)(15,0.6849)(16,0.5999)(17,0.5618)(18,0.7103)(19,0.6181)(20,0.713)(21,0.7205)(22,0.7716)(23,0.7063)(24,0.8021)(25,0.773)(26,0.8141)(27,0.8026)(28,0.7446)(29,0.6259)(30,0.6732)(31,0.7527)(32,0.7741)(33,0.7641)(34,0.8101)(35,0.7749)(36,0.8116)(37,0.8081)(38,0.8268)(39,0.8255)(40,0.7999)(41,0.832)(42,0.8342)(43,0.8326)(44,0.8402)(45,0.7441)(46,0.597)(47,0.7456)(48,0.8009)(49,0.8175)(50,0.8554)(51,0.8453)(52,0.807)(53,0.7899)(54,0.8254)(55,0.8206)(56,0.8417)(57,0.8271)(58,0.8198)(59,0.8268)(60,0.8499)(61,0.8629)(62,0.8495)(63,0.8565)(64,0.804)(65,0.854)(66,0.7689)(67,0.8183)(68,0.8123)(69,0.8758)(70,0.851)(71,0.8808)(72,0.8661)(73,0.8753)(74,0.8899)(75,0.8661)(76,0.8617)(77,0.8783)(78,0.8907)(79,0.8647)(80,0.7892)(81,0.8296)(82,0.8635)(83,0.8859)(84,0.8729)(85,0.8793)(86,0.8773)(87,0.8851)(88,0.8508)(89,0.8508)(90,0.8597)(91,0.8549)(92,0.8805)(93,0.8134)(94,0.891)(95,0.8917)(96,0.8931)(97,0.8889)(98,0.8903)(99,0.82)(100,0.8378)(101,0.8669)(102,0.7888)(103,0.7695)(104,0.8672)(105,0.8957)(106,0.8802)(107,0.8659)(108,0.8936)(109,0.8908)(110,0.8926)(111,0.881)(112,0.8824)(113,0.887)(114,0.8733)(115,0.846)(116,0.8498)(117,0.8848)(118,0.8903)(119,0.8776)(120,0.8906)(121,0.8843)(122,0.8987)(123,0.8807)(124,0.8349)(125,0.8501)(126,0.7463)(127,0.8244)(128,0.8495)(129,0.8725)(130,0.8977)(131,0.8981)(132,0.9026)(133,0.887)(134,0.8735)(135,0.8678)(136,0.9016)(137,0.8923)(138,0.8708)(139,0.8829)(140,0.8865)(141,0.9011)(142,0.8912)(143,0.9)(144,0.8725)(145,0.8539)(146,0.8504)(147,0.8639)(148,0.875)(149,0.8913)(150,0.8978)(151,0.9146)(152,0.9155)(153,0.8866)(154,0.9167)(155,0.891)(156,0.9091)(157,0.9105)(158,0.8976)(159,0.9071)(160,0.887)(161,0.8745)(162,0.904)(163,0.8626)(164,0.8479)(165,0.8981)(166,0.8756)(167,0.8848)(168,0.9025)(169,0.9118)(170,0.9088)(171,0.8928)(172,0.9062)(173,0.8947)(174,0.9016)(175,0.9151)(176,0.9131)(177,0.9018)(178,0.9205)(179,0.9053)(180,0.8806)(181,0.8699)(182,0.7849)(183,0.8787)(184,0.8885)(185,0.906)(186,0.8891)(187,0.8931)(188,0.9124)(189,0.9139)(190,0.8983)(191,0.9102)(192,0.9112)(193,0.9053)(194,0.9052)(195,0.9004)(196,0.9084)(197,0.9148)(198,0.9086)(199,0.9116)(200,0.9137)(201,0.9007)(202,0.9166)(203,0.8918)(204,0.9051)(205,0.9042)(206,0.9034)(207,0.9014)(208,0.9163)(209,0.8921)(210,0.8283)(211,0.811)(212,0.8849)(213,0.916)(214,0.905)(215,0.9075)(216,0.9264)(217,0.9283)(218,0.9137)(219,0.9111)(220,0.9262)(221,0.9211)(222,0.9223)(223,0.9252)(224,0.9216)(225,0.9241)(226,0.921)(227,0.9245)(228,0.9178)(229,0.9011)(230,0.9139)(231,0.9097)(232,0.9263)(233,0.9236)(234,0.9066)(235,0.915)(236,0.9187)(237,0.9262)(238,0.919)(239,0.9046)(240,0.9242)(241,0.9199)(242,0.9162)(243,0.9105)(244,0.9225)(245,0.9294)(246,0.9095)(247,0.9184)(248,0.9171)(249,0.9093)(250,0.9129)(251,0.9088)(252,0.8882)(253,0.9097)(254,0.9214)(255,0.9217)(256,0.9195)(257,0.9158)(258,0.916)(259,0.909)(260,0.9192)(261,0.8999)(262,0.9148)(263,0.9288)(264,0.9031)(265,0.8989)(266,0.9201)(267,0.9259)(268,0.9023)(269,0.8707)(270,0.9007)(271,0.9228)(272,0.9302)(273,0.929)(274,0.9192)(275,0.9173)(276,0.9162)(277,0.9298)(278,0.9259)(279,0.919)(280,0.9173)(281,0.9164)(282,0.9093)(283,0.9253)(284,0.913)(285,0.9201)(286,0.929)(287,0.9204)(288,0.9062)(289,0.908)(290,0.9125)(291,0.9085)(292,0.9297)(293,0.9277)(294,0.9342)(295,0.9212)(296,0.9277)(297,0.9105)(298,0.9184)(299,0.928)
}; \addlegendentry{ cell-free,   RS}
\addplot  [semithick, color = red, mark = triangle*, mark size = 2, mark repeat = 30, mark phase = 10]
coordinates {
(0,0.104)(1,0.1185)(2,0.1994)(3,0.2433)(4,0.2478)(5,0.4064)(6,0.3342)(7,0.4779)(8,0.389)(9,0.4921)(10,0.5781)(11,0.517)(12,0.4997)(13,0.5834)(14,0.6351)(15,0.6171)(16,0.5788)(17,0.6305)(18,0.7058)(19,0.7045)(20,0.7238)(21,0.7388)(22,0.6969)(23,0.7009)(24,0.6987)(25,0.7648)(26,0.7834)(27,0.7633)(28,0.7665)(29,0.7707)(30,0.7762)(31,0.772)(32,0.7809)(33,0.7702)(34,0.7887)(35,0.7941)(36,0.8067)(37,0.7958)(38,0.7956)(39,0.8096)(40,0.8156)(41,0.8263)(42,0.8343)(43,0.8523)(44,0.8292)(45,0.8215)(46,0.7913)(47,0.8011)(48,0.8433)(49,0.8557)(50,0.8352)(51,0.8472)(52,0.8516)(53,0.8566)(54,0.8308)(55,0.8658)(56,0.8615)(57,0.8504)(58,0.854)(59,0.847)(60,0.8416)(61,0.8577)(62,0.849)(63,0.8692)(64,0.8734)(65,0.8724)(66,0.8648)(67,0.8728)(68,0.8684)(69,0.8733)(70,0.8601)(71,0.8696)(72,0.8732)(73,0.8901)(74,0.8883)(75,0.8891)(76,0.866)(77,0.8948)(78,0.8801)(79,0.8814)(80,0.8905)(81,0.8827)(82,0.8667)(83,0.8659)(84,0.8664)(85,0.8966)(86,0.86)(87,0.8847)(88,0.8854)(89,0.8991)(90,0.8839)(91,0.8835)(92,0.8831)(93,0.899)(94,0.8848)(95,0.8857)(96,0.8847)(97,0.8852)(98,0.8766)(99,0.8984)(100,0.8706)(101,0.8835)(102,0.8901)(103,0.9093)(104,0.8928)(105,0.8931)(106,0.8903)(107,0.8812)(108,0.8807)(109,0.8997)(110,0.904)(111,0.8787)(112,0.8967)(113,0.8698)(114,0.9077)(115,0.895)(116,0.9043)(117,0.8928)(118,0.8962)(119,0.8898)(120,0.8935)(121,0.8954)(122,0.902)(123,0.9028)(124,0.9088)(125,0.904)(126,0.9074)(127,0.8999)(128,0.9184)(129,0.9051)(130,0.9083)(131,0.8865)(132,0.8932)(133,0.885)(134,0.8843)(135,0.8975)(136,0.9147)(137,0.9122)(138,0.9152)(139,0.9025)(140,0.9187)(141,0.9086)(142,0.9145)(143,0.9132)(144,0.9135)(145,0.9075)(146,0.9128)(147,0.909)(148,0.9216)(149,0.9142)(150,0.9188)(151,0.909)(152,0.9132)(153,0.9108)(154,0.914)(155,0.9068)(156,0.9144)(157,0.9077)(158,0.9169)(159,0.9094)(160,0.9106)(161,0.9077)(162,0.9151)(163,0.9145)(164,0.9068)(165,0.9216)(166,0.9176)(167,0.9253)(168,0.9213)(169,0.9199)(170,0.9102)(171,0.9148)(172,0.907)(173,0.914)(174,0.9189)(175,0.9169)(176,0.9153)(177,0.9091)(178,0.9108)(179,0.9111)(180,0.9164)(181,0.9229)(182,0.9203)(183,0.92)(184,0.9096)(185,0.8977)(186,0.9056)(187,0.9062)(188,0.9286)(189,0.9291)(190,0.9219)(191,0.9156)(192,0.9246)(193,0.9064)(194,0.9179)(195,0.9161)(196,0.9121)(197,0.9266)(198,0.9236)(199,0.9235)(200,0.9192)(201,0.9246)(202,0.9222)(203,0.9246)(204,0.9197)(205,0.9197)(206,0.9153)(207,0.9138)(208,0.9141)(209,0.9153)(210,0.9192)(211,0.9041)(212,0.9242)(213,0.916)(214,0.914)(215,0.9205)(216,0.912)(217,0.9124)(218,0.9139)(219,0.9213)(220,0.9168)(221,0.9324)(222,0.9287)(223,0.9236)(224,0.9154)(225,0.9134)(226,0.9109)(227,0.8908)(228,0.9123)(229,0.9211)(230,0.9228)(231,0.9268)(232,0.9238)(233,0.9123)(234,0.9074)(235,0.861)(236,0.8796)(237,0.9023)(238,0.9253)(239,0.9276)(240,0.9306)(241,0.9322)(242,0.9292)(243,0.9307)(244,0.9226)(245,0.9242)(246,0.931)(247,0.9303)(248,0.93)(249,0.9273)(250,0.9349)(251,0.9297)(252,0.9254)(253,0.9281)(254,0.9296)(255,0.9336)(256,0.924)(257,0.9328)(258,0.9277)(259,0.9338)(260,0.9315)(261,0.9373)(262,0.9324)(263,0.9309)(264,0.9328)(265,0.9315)(266,0.9294)(267,0.931)(268,0.9339)(269,0.932)(270,0.9316)(271,0.9266)(272,0.9233)(273,0.9321)(274,0.9329)(275,0.9377)(276,0.9353)(277,0.9348)(278,0.9315)(279,0.937)(280,0.9343)(281,0.9374)(282,0.9322)(283,0.9349)(284,0.9387)(285,0.9298)(286,0.9392)(287,0.937)(288,0.9358)(289,0.9348)(290,0.9324)(291,0.9414)(292,0.9351)(293,0.9343)(294,0.9277)(295,0.9322)(296,0.9365)(297,0.9337)(298,0.9354)(299,0.9352)
}; \addlegendentry{ cell-free, US}
\addplot  [semithick, color = black, mark = diamond*, mark size = 2, mark repeat = 30, mark phase = 10]
coordinates {
(0,0.104)(1,0.0768)(2,0.0886)(3,0.0871)(4,0.1189)(5,0.1293)(6,0.0871)(7,0.0921)(8,0.0823)(9,0.0854)(10,0.0801)(11,0.0917)(12,0.0871)(13,0.1097)(14,0.1531)(15,0.1275)(16,0.1124)(17,0.1696)(18,0.1281)(19,0.1223)(20,0.1447)(21,0.1064)(22,0.1274)(23,0.1801)(24,0.1812)(25,0.2169)(26,0.1682)(27,0.2064)(28,0.2463)(29,0.2545)(30,0.2722)(31,0.3411)(32,0.3728)(33,0.3762)(34,0.3319)(35,0.291)(36,0.2617)(37,0.2548)(38,0.193)(39,0.1982)(40,0.2182)(41,0.2672)(42,0.3384)(43,0.3194)(44,0.2802)(45,0.2917)(46,0.2721)(47,0.2421)(48,0.2325)(49,0.2551)(50,0.3085)(51,0.3273)(52,0.2535)(53,0.2715)(54,0.2751)(55,0.2899)(56,0.2907)(57,0.3139)(58,0.3482)(59,0.3243)(60,0.3703)(61,0.371)(62,0.3785)(63,0.409)(64,0.371)(65,0.328)(66,0.3922)(67,0.3562)(68,0.3479)(69,0.3656)(70,0.2883)(71,0.2482)(72,0.2446)(73,0.2505)(74,0.2069)(75,0.2126)(76,0.1876)(77,0.2377)(78,0.3132)(79,0.298)(80,0.3426)(81,0.4235)(82,0.4509)(83,0.4676)(84,0.4475)(85,0.4299)(86,0.4053)(87,0.3719)(88,0.3871)(89,0.3857)(90,0.3736)(91,0.3866)(92,0.3645)(93,0.3952)(94,0.3766)(95,0.3799)(96,0.4037)(97,0.4064)(98,0.4448)(99,0.4485)(100,0.4865)(101,0.4927)(102,0.4399)(103,0.4436)(104,0.4164)(105,0.424)(106,0.4147)(107,0.4229)(108,0.411)(109,0.4043)(110,0.3841)(111,0.4095)(112,0.3834)(113,0.4036)(114,0.4152)(115,0.4094)(116,0.4139)(117,0.4447)(118,0.4399)(119,0.4523)(120,0.4656)(121,0.4467)(122,0.4263)(123,0.4207)(124,0.4016)(125,0.4157)(126,0.3905)(127,0.371)(128,0.3986)(129,0.3774)(130,0.3393)(131,0.379)(132,0.3652)(133,0.4124)(134,0.3944)(135,0.3943)(136,0.3704)(137,0.415)(138,0.3801)(139,0.4252)(140,0.4113)(141,0.4175)(142,0.4221)(143,0.3913)(144,0.4147)(145,0.4088)(146,0.3628)(147,0.3522)(148,0.3347)(149,0.3907)(150,0.3455)(151,0.3491)(152,0.3798)(153,0.4093)(154,0.3833)(155,0.3778)(156,0.3685)(157,0.3488)(158,0.3485)(159,0.3475)(160,0.3353)(161,0.3118)(162,0.2877)(163,0.285)(164,0.3349)(165,0.4299)(166,0.4268)(167,0.4225)(168,0.4701)(169,0.5286)(170,0.5585)(171,0.5226)(172,0.5116)(173,0.4776)(174,0.5383)(175,0.5227)(176,0.4844)(177,0.4598)(178,0.4665)(179,0.533)(180,0.5536)(181,0.5358)(182,0.5105)(183,0.5707)(184,0.5447)(185,0.5241)(186,0.5009)(187,0.5228)(188,0.5389)(189,0.5493)(190,0.5252)(191,0.523)(192,0.5174)(193,0.5068)(194,0.5635)(195,0.5558)(196,0.5508)(197,0.5497)(198,0.559)(199,0.546)(200,0.5841)(201,0.6028)(202,0.5833)(203,0.5822)(204,0.5634)(205,0.5658)(206,0.5835)(207,0.5978)(208,0.5703)(209,0.5234)(210,0.4951)(211,0.4793)(212,0.5603)(213,0.5398)(214,0.5749)(215,0.571)(216,0.5656)(217,0.5417)(218,0.5421)(219,0.4974)(220,0.5106)(221,0.4923)(222,0.5188)(223,0.5302)(224,0.5515)(225,0.558)(226,0.5717)(227,0.5277)(228,0.5236)(229,0.5219)(230,0.5091)(231,0.5234)(232,0.5669)(233,0.5651)(234,0.602)(235,0.6083)(236,0.5792)(237,0.6058)(238,0.5941)(239,0.6281)(240,0.6503)(241,0.656)(242,0.6435)(243,0.6688)(244,0.6492)(245,0.6229)(246,0.6054)(247,0.5965)(248,0.5756)(249,0.5679)(250,0.6186)(251,0.5989)(252,0.563)(253,0.5443)(254,0.5767)(255,0.6316)(256,0.6326)(257,0.6011)(258,0.6372)(259,0.6517)(260,0.6303)(261,0.6286)(262,0.6303)(263,0.6269)(264,0.6277)(265,0.6201)(266,0.6159)(267,0.6127)(268,0.631)(269,0.6274)(270,0.6544)(271,0.6699)(272,0.6515)(273,0.6267)(274,0.6035)(275,0.633)(276,0.6448)(277,0.6429)(278,0.6311)(279,0.6251)(280,0.6322)(281,0.6812)(282,0.6996)(283,0.6996)(284,0.6773)(285,0.6963)(286,0.6958)(287,0.7203)(288,0.712)(289,0.7064)(290,0.713)(291,0.7205)(292,0.7173)(293,0.7137)(294,0.7285)(295,0.7288)(296,0.7337)(297,0.7234)(298,0.7007)(299,0.7162)
}; \addlegendentry{cellular, RS} 

\addplot [semithick, color = olive]
coordinates{
(0,0.104)(1,0.1104)(2,0.1058)(3,0.1141)(4,0.0892)(5,0.1163)(6,0.1006)(7,0.1101)(8,0.1979)(9,0.198)(10,0.2388)(11,0.2241)(12,0.2007)(13,0.2218)(14,0.274)(15,0.287)(16,0.2943)(17,0.3646)(18,0.384)(19,0.3697)(20,0.3901)(21,0.3762)(22,0.3489)(23,0.3558)(24,0.3399)(25,0.3182)(26,0.2907)(27,0.3782)(28,0.328)(29,0.3355)(30,0.376)(31,0.3626)(32,0.4156)(33,0.4378)(34,0.4683)(35,0.4552)(36,0.4734)(37,0.4506)(38,0.4473)(39,0.4711)(40,0.422)(41,0.4446)(42,0.4449)(43,0.4381)(44,0.4801)(45,0.5242)(46,0.5626)(47,0.5513)(48,0.5114)(49,0.5028)(50,0.5152)(51,0.4773)(52,0.443)(53,0.4451)(54,0.423)(55,0.3795)(56,0.3939)(57,0.3993)(58,0.4328)(59,0.4403)(60,0.4582)(61,0.4677)(62,0.4956)(63,0.4838)(64,0.4883)(65,0.4641)(66,0.4858)(67,0.4311)(68,0.4505)(69,0.4486)(70,0.4153)(71,0.4178)(72,0.455)(73,0.5017)(74,0.4996)(75,0.4608)(76,0.5167)(77,0.511)(78,0.5327)(79,0.5402)(80,0.5172)(81,0.5445)(82,0.5308)(83,0.5439)(84,0.4974)(85,0.5205)(86,0.5356)(87,0.5339)(88,0.5327)(89,0.517)(90,0.5416)(91,0.5163)(92,0.5636)(93,0.5672)(94,0.5695)(95,0.564)(96,0.5417)(97,0.5498)(98,0.5815)(99,0.5524)(100,0.5559)(101,0.5911)(102,0.6163)(103,0.5932)(104,0.5638)(105,0.5474)(106,0.5361)(107,0.5114)(108,0.5338)(109,0.5748)(110,0.5606)(111,0.5589)(112,0.5604)(113,0.5783)(114,0.6011)(115,0.6321)(116,0.6496)(117,0.6425)(118,0.6512)(119,0.6338)(120,0.5915)(121,0.6281)(122,0.655)(123,0.6333)(124,0.6608)(125,0.6635)(126,0.6489)(127,0.6416)(128,0.6199)(129,0.6229)(130,0.6287)(131,0.65)(132,0.6501)(133,0.617)(134,0.6372)(135,0.6635)(136,0.6587)(137,0.6366)(138,0.6379)(139,0.6674)(140,0.6724)(141,0.6672)(142,0.6898)(143,0.6685)(144,0.6688)(145,0.6511)(146,0.6729)(147,0.6851)(148,0.7076)(149,0.7111)(150,0.7024)(151,0.708)(152,0.6892)(153,0.7045)(154,0.6978)(155,0.7076)(156,0.7367)(157,0.7429)(158,0.7084)(159,0.7028)(160,0.6995)(161,0.7204)(162,0.7087)(163,0.729)(164,0.7337)(165,0.726)(166,0.736)(167,0.7503)(168,0.7524)(169,0.7524)(170,0.7549)(171,0.7542)(172,0.7696)(173,0.7362)(174,0.7516)(175,0.7513)(176,0.7347)(177,0.7275)(178,0.7286)(179,0.7237)(180,0.744)(181,0.7436)(182,0.7488)(183,0.7423)(184,0.7462)(185,0.7112)(186,0.7134)(187,0.7232)(188,0.717)(189,0.7248)(190,0.7053)(191,0.6784)(192,0.6815)(193,0.7153)(194,0.7297)(195,0.7328)(196,0.7542)(197,0.7676)(198,0.7558)(199,0.7483)(200,0.7609)(201,0.768)(202,0.7717)(203,0.7696)(204,0.7749)(205,0.7928)(206,0.7802)(207,0.7751)(208,0.7752)(209,0.7862)(210,0.771)(211,0.7691)(212,0.7678)(213,0.7659)(214,0.7596)(215,0.7737)(216,0.7736)(217,0.7751)(218,0.7722)(219,0.7666)(220,0.7747)(221,0.7701)(222,0.7742)(223,0.7864)(224,0.7798)(225,0.7788)(226,0.764)(227,0.7784)(228,0.7709)(229,0.7549)(230,0.7475)(231,0.7559)(232,0.7804)(233,0.7739)(234,0.7572)(235,0.741)(236,0.7523)(237,0.7556)(238,0.7631)(239,0.7746)(240,0.7705)(241,0.7845)(242,0.7929)(243,0.7941)(244,0.8034)(245,0.7937)(246,0.7649)(247,0.7598)(248,0.7559)(249,0.759)(250,0.7632)(251,0.7643)(252,0.7823)(253,0.7701)(254,0.768)(255,0.7852)(256,0.7907)(257,0.7937)(258,0.7939)(259,0.7877)(260,0.7839)(261,0.79)(262,0.7876)(263,0.7869)(264,0.7924)(265,0.7935)(266,0.8031)(267,0.8002)(268,0.8013)(269,0.7966)(270,0.7975)(271,0.8022)(272,0.7905)(273,0.7913)(274,0.7994)(275,0.8119)(276,0.8035)(277,0.8016)(278,0.8163)(279,0.8094)(280,0.8096)(281,0.8124)(282,0.8137)(283,0.8166)(284,0.8168)(285,0.8156)(286,0.8183)(287,0.8196)(288,0.8144)(289,0.8106)(290,0.799)(291,0.8004)(292,0.8125)(293,0.807)(294,0.8283)(295,0.8304)(296,0.8346)(297,0.8304)(298,0.826)(299,0.8243)
};\addlegendentry{cellular, RS}
\end{axis}
\end{tikzpicture}}
\end{center}
\caption{Test accuracy vs. the number of global iterations using MNIST dataset for $N=25$, $r=10$, $L = 25$ and $\tau=12$.}
\label{scheduling}
\end{figure}
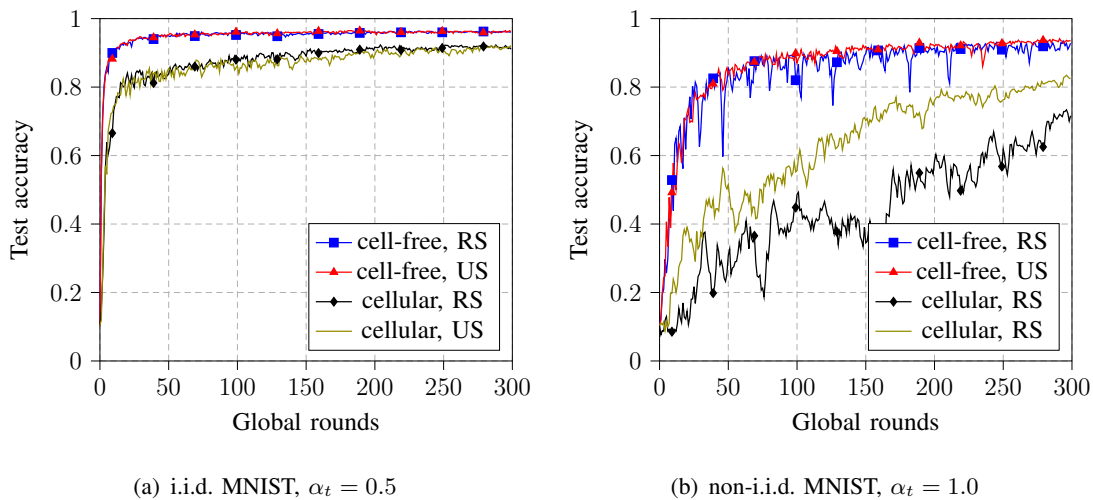

\subsection{Effect of the AP-UT association scheme}

It is important to note that the AP-UT association scheme influences the performance of cell-free MIMO in general and the FL process in particular. Our theoretical convergence analysis remains valid for any AP-UT association scheme, as long as the user-centric cell-free massive MIMO architecture is considered. The influence of the AP-UT association scheme is reflected in the constants $\gamma$, $\tilde\gamma$, $\kappa$, and $\tilde\kappa$ through the association matrices $\bD_n$. For the sake of illustration, we conducted additional simulations to explore the effect of different AP-UT association techniques on the FL process. Specifically, we compare the performance of our proposed implementation using two distinct AP-UT association schemes in \figref{AP_UT}. The first scheme, used in all our previous experiments and proposed in \cite{Emilcellfree}, assigns APs and pilots jointly to UTs based on large-scale fading coefficients and pilot contamination levels. The second scheme, developed in \cite{chen2022improving}, is an interference-aware massive access scheme that achieves joint AP-UT association and pilot assignment by leveraging large-scale interference characteristics.

From the figure, it is clear that the AP-UT-2 scheme (\cite{chen2022improving}) outperforms the AP-UT-1 scheme (\cite{Emilcellfree}) for smaller values of the power scaling factor $\alpha_t$. However, for relatively high values of $\alpha_t$, the performance of both schemes becomes indistinguishable.

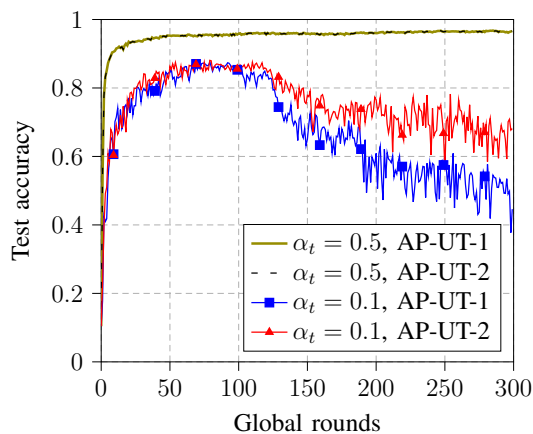
\begin{figure}[h]
\begin{center}
\begin{tikzpicture}[scale=0.8]
\begin{axis}[
tick align=outside,
tick pos=left,
x grid style={white!69.0196078431373!black},
xlabel={Global rounds},
xmajorgrids,
xmin=0, xmax=300,
xtick style={color=black},
y grid style={white!69.0196078431373!black},
ylabel={Test accuracy},
ymajorgrids,
ymin=0, ymax=1,
ytick style={color=black},
grid=major,
scaled ticks=true,
legend pos=south east,	
grid style=densely dashed,
]

\addplot  [very thick, color =olive, mark = none, mark size = 2, mark repeat = 30, mark phase = 10]
coordinates {
(0,0.104)(1,0.547)(2,0.777)(3,0.8323)(4,0.8508)(5,0.873)(6,0.8842)(7,0.8937)(8,0.9039)(9,0.9056)(10,0.9083)(11,0.9154)(12,0.917)(13,0.9207)(14,0.917)(15,0.92)(16,0.9126)(17,0.9238)(18,0.9293)(19,0.9243)(20,0.9349)(21,0.9318)(22,0.9338)(23,0.9356)(24,0.9342)(25,0.9353)(26,0.9359)(27,0.9377)(28,0.9374)(29,0.936)(30,0.9358)(31,0.941)(32,0.941)(33,0.9429)(34,0.9435)(35,0.9426)(36,0.9404)(37,0.9475)(38,0.9446)(39,0.9462)(40,0.9474)(41,0.9462)(42,0.9445)(43,0.9506)(44,0.9484)(45,0.9513)(46,0.951)(47,0.949)(48,0.9532)(49,0.9529)(50,0.953)(51,0.9519)(52,0.9531)(53,0.9531)(54,0.9527)(55,0.9528)(56,0.9528)(57,0.9522)(58,0.9524)(59,0.9531)(60,0.952)(61,0.953)(62,0.9518)(63,0.9558)(64,0.9558)(65,0.9536)(66,0.9531)(67,0.951)(68,0.953)(69,0.9516)(70,0.9509)(71,0.955)(72,0.9547)(73,0.9516)(74,0.9552)(75,0.9546)(76,0.9558)(77,0.9557)(78,0.9542)(79,0.9555)(80,0.9555)(81,0.9513)(82,0.954)(83,0.9548)(84,0.9555)(85,0.9522)(86,0.9557)(87,0.9548)(88,0.952)(89,0.9561)(90,0.9547)(91,0.956)(92,0.9554)(93,0.9531)(94,0.9568)(95,0.9533)(96,0.9576)(97,0.9543)(98,0.9583)(99,0.956)(100,0.9587)(101,0.9596)(102,0.9553)(103,0.9569)(104,0.9567)(105,0.9588)(106,0.9575)(107,0.9599)(108,0.96)(109,0.9601)(110,0.9585)(111,0.9601)(112,0.9597)(113,0.961)(114,0.961)(115,0.9582)(116,0.9591)(117,0.9598)(118,0.9595)(119,0.9609)(120,0.9563)(121,0.9607)(122,0.9587)(123,0.9579)(124,0.9606)(125,0.9612)(126,0.9599)(127,0.957)(128,0.9586)(129,0.9572)(130,0.9596)(131,0.9577)(132,0.958)(133,0.959)(134,0.958)(135,0.9581)(136,0.9608)(137,0.9589)(138,0.9599)(139,0.9592)(140,0.9604)(141,0.9595)(142,0.9605)(143,0.9592)(144,0.9567)(145,0.958)(146,0.9574)(147,0.961)(148,0.9605)(149,0.9579)(150,0.9588)(151,0.9584)(152,0.9569)(153,0.9593)(154,0.9583)(155,0.959)(156,0.9572)(157,0.9589)(158,0.9567)(159,0.9555)(160,0.9582)(161,0.957)(162,0.9546)(163,0.958)(164,0.9571)(165,0.9564)(166,0.9592)(167,0.9576)(168,0.9587)(169,0.9569)(170,0.9552)(171,0.9561)(172,0.9588)(173,0.9574)(174,0.959)(175,0.9577)(176,0.9575)(177,0.9549)(178,0.96)(179,0.9589)(180,0.9575)(181,0.9573)(182,0.959)(183,0.9586)(184,0.9593)(185,0.9607)(186,0.9611)(187,0.9606)(188,0.9611)(189,0.9622)(190,0.9606)(191,0.9619)(192,0.9595)(193,0.9598)(194,0.9594)(195,0.9611)(196,0.9613)(197,0.9612)(198,0.9596)(199,0.9608)(200,0.9602)(201,0.9603)(202,0.9616)(203,0.9626)(204,0.9631)(205,0.9634)(206,0.9596)(207,0.9611)(208,0.9639)(209,0.9627)(210,0.9624)(211,0.9639)(212,0.9634)(213,0.9634)(214,0.9618)(215,0.9612)(216,0.9631)(217,0.9619)(218,0.9628)(219,0.9622)(220,0.9627)(221,0.9599)(222,0.9626)(223,0.9641)(224,0.9639)(225,0.965)(226,0.9622)(227,0.9616)(228,0.964)(229,0.9639)(230,0.9632)(231,0.9648)(232,0.9649)(233,0.9651)(234,0.9641)(235,0.9625)(236,0.964)(237,0.9654)(238,0.9638)(239,0.9649)(240,0.9635)(241,0.9624)(242,0.9649)(243,0.9643)(244,0.9656)(245,0.9666)(246,0.9664)(247,0.9673)(248,0.9652)(249,0.9641)(250,0.9648)(251,0.9653)(252,0.965)(253,0.9644)(254,0.9648)(255,0.9653)(256,0.9654)(257,0.9651)(258,0.9653)(259,0.9647)(260,0.9649)(261,0.9635)(262,0.9644)(263,0.9657)(264,0.966)(265,0.9628)(266,0.9656)(267,0.9635)(268,0.9631)(269,0.9639)(270,0.9654)(271,0.9639)(272,0.9649)(273,0.9654)(274,0.9635)(275,0.9653)(276,0.9648)(277,0.9634)(278,0.966)(279,0.9634)(280,0.9669)(281,0.966)(282,0.9664)(283,0.9674)(284,0.967)(285,0.9678)(286,0.9662)(287,0.9668)(288,0.9659)(289,0.9632)(290,0.9656)(291,0.9645)(292,0.9649)(293,0.9645)(294,0.9626)(295,0.9638)(296,0.9614)(297,0.9636)(298,0.965)(299,0.9661)
}; \addlegendentry{$\alpha_t = 0.5$,   AP-UT-1}

\addplot  [semithick, color = black, mark = none, loosely dashed, mark size = 2, mark repeat = 30, mark phase = 10]
coordinates {
(0,0.104)(1,0.547)(2,0.777)(3,0.8323)(4,0.8508)(5,0.873)(6,0.8842)(7,0.8937)(8,0.9039)(9,0.9056)(10,0.9083)(11,0.9154)(12,0.917)(13,0.9207)(14,0.917)(15,0.92)(16,0.9126)(17,0.9238)(18,0.9293)(19,0.9243)(20,0.9349)(21,0.9318)(22,0.9338)(23,0.9356)(24,0.9342)(25,0.9353)(26,0.9359)(27,0.9377)(28,0.9374)(29,0.936)(30,0.9358)(31,0.941)(32,0.941)(33,0.9429)(34,0.9435)(35,0.9426)(36,0.9404)(37,0.9475)(38,0.9446)(39,0.9462)(40,0.9474)(41,0.9462)(42,0.9445)(43,0.9506)(44,0.9484)(45,0.9513)(46,0.951)(47,0.949)(48,0.9532)(49,0.9529)(50,0.953)(51,0.9519)(52,0.9531)(53,0.9531)(54,0.9527)(55,0.9528)(56,0.9528)(57,0.9522)(58,0.9524)(59,0.9531)(60,0.952)(61,0.953)(62,0.9518)(63,0.9558)(64,0.9558)(65,0.9536)(66,0.9531)(67,0.951)(68,0.953)(69,0.9516)(70,0.9509)(71,0.955)(72,0.9547)(73,0.9516)(74,0.9552)(75,0.9546)(76,0.9558)(77,0.9557)(78,0.9542)(79,0.9555)(80,0.9555)(81,0.9513)(82,0.954)(83,0.9548)(84,0.9555)(85,0.9522)(86,0.9557)(87,0.9548)(88,0.952)(89,0.9561)(90,0.9547)(91,0.956)(92,0.9554)(93,0.9531)(94,0.9568)(95,0.9533)(96,0.9576)(97,0.9543)(98,0.9583)(99,0.956)(100,0.9587)(101,0.9596)(102,0.9553)(103,0.9569)(104,0.9567)(105,0.9588)(106,0.9575)(107,0.9599)(108,0.96)(109,0.9601)(110,0.9585)(111,0.9601)(112,0.9597)(113,0.961)(114,0.961)(115,0.9582)(116,0.9591)(117,0.9598)(118,0.9595)(119,0.9609)(120,0.9563)(121,0.9607)(122,0.9587)(123,0.9579)(124,0.9606)(125,0.9612)(126,0.9599)(127,0.957)(128,0.9586)(129,0.9572)(130,0.9596)(131,0.9577)(132,0.958)(133,0.959)(134,0.958)(135,0.9581)(136,0.9608)(137,0.9589)(138,0.9599)(139,0.9592)(140,0.9604)(141,0.9595)(142,0.9605)(143,0.9592)(144,0.9567)(145,0.958)(146,0.9574)(147,0.961)(148,0.9605)(149,0.9579)(150,0.9588)(151,0.9584)(152,0.9569)(153,0.9593)(154,0.9583)(155,0.959)(156,0.9572)(157,0.9589)(158,0.9567)(159,0.9555)(160,0.9582)(161,0.957)(162,0.9546)(163,0.958)(164,0.9571)(165,0.9564)(166,0.9592)(167,0.9576)(168,0.9587)(169,0.9569)(170,0.9552)(171,0.9561)(172,0.9588)(173,0.9574)(174,0.959)(175,0.9577)(176,0.9575)(177,0.9549)(178,0.96)(179,0.9589)(180,0.9575)(181,0.9573)(182,0.959)(183,0.9586)(184,0.9593)(185,0.9607)(186,0.9611)(187,0.9606)(188,0.9611)(189,0.9622)(190,0.9606)(191,0.9619)(192,0.9595)(193,0.9598)(194,0.9594)(195,0.9611)(196,0.9613)(197,0.9612)(198,0.9596)(199,0.9608)(200,0.9602)(201,0.9603)(202,0.9616)(203,0.9626)(204,0.9631)(205,0.9634)(206,0.9596)(207,0.9611)(208,0.9639)(209,0.9627)(210,0.9624)(211,0.9639)(212,0.9634)(213,0.9634)(214,0.9618)(215,0.9612)(216,0.9631)(217,0.9619)(218,0.9628)(219,0.9622)(220,0.9627)(221,0.9599)(222,0.9626)(223,0.9641)(224,0.9639)(225,0.965)(226,0.9622)(227,0.9616)(228,0.964)(229,0.9639)(230,0.9632)(231,0.9648)(232,0.9649)(233,0.9651)(234,0.9641)(235,0.9625)(236,0.964)(237,0.9654)(238,0.9638)(239,0.9649)(240,0.9635)(241,0.9624)(242,0.9649)(243,0.9643)(244,0.9656)(245,0.9666)(246,0.9664)(247,0.9673)(248,0.9652)(249,0.9641)(250,0.9648)(251,0.9653)(252,0.965)(253,0.9644)(254,0.9648)(255,0.9653)(256,0.9654)(257,0.9651)(258,0.9653)(259,0.9647)(260,0.9649)(261,0.9635)(262,0.9644)(263,0.9657)(264,0.966)(265,0.9628)(266,0.9656)(267,0.9635)(268,0.9631)(269,0.9639)(270,0.9654)(271,0.9639)(272,0.9649)(273,0.9654)(274,0.9635)(275,0.9653)(276,0.9648)(277,0.9634)(278,0.966)(279,0.9634)(280,0.9669)(281,0.966)(282,0.9664)(283,0.9674)(284,0.967)(285,0.9678)(286,0.9662)(287,0.9668)(288,0.9659)(289,0.9632)(290,0.9656)(291,0.9645)(292,0.9649)(293,0.9645)(294,0.9626)(295,0.9638)(296,0.9614)(297,0.9636)(298,0.965)(299,0.9661)
}; \addlegendentry{$\alpha_t = 0.5$,   AP-UT-2}

\addplot [semithick, color =blue, mark =  square*,  mark size = 2, mark repeat = 30, mark phase = 10]
coordinates {
(0,0.104)(1,0.178)(2,0.4269)(3,0.4099)(4,0.4178)(5,0.5381)(6,0.5986)(7,0.6553)(8,0.6473)(9,0.6066)(10,0.7016)(11,0.7148)(12,0.7416)(13,0.6792)(14,0.7369)(15,0.6802)(16,0.6916)(17,0.7455)(18,0.751)(19,0.7253)(20,0.7711)(21,0.7269)(22,0.7697)(23,0.7211)(24,0.7481)(25,0.7481)(26,0.8031)(27,0.8014)(28,0.7934)(29,0.7593)(30,0.7893)(31,0.7983)(32,0.7923)(33,0.7987)(34,0.7778)(35,0.8109)(36,0.7805)(37,0.8052)(38,0.7797)(39,0.7914)(40,0.8113)(41,0.8042)(42,0.8275)(43,0.7778)(44,0.7913)(45,0.81)(46,0.8233)(47,0.8468)(48,0.8391)(49,0.8444)(50,0.845)(51,0.8551)(52,0.8263)(53,0.8314)(54,0.8086)(55,0.8325)(56,0.829)(57,0.832)(58,0.8336)(59,0.8582)(60,0.8549)(61,0.8627)(62,0.8623)(63,0.8369)(64,0.8211)(65,0.8457)(66,0.8641)(67,0.8574)(68,0.8468)(69,0.8704)(70,0.8552)(71,0.871)(72,0.8574)(73,0.8709)(74,0.8725)(75,0.8716)(76,0.8687)(77,0.8648)(78,0.8663)(79,0.8757)(80,0.8628)(81,0.8485)(82,0.8608)(83,0.8618)(84,0.8585)(85,0.8633)(86,0.8509)(87,0.8652)(88,0.863)(89,0.8594)(90,0.8639)(91,0.8674)(92,0.8737)(93,0.8722)(94,0.8667)(95,0.8545)(96,0.8517)(97,0.8587)(98,0.8604)(99,0.8531)(100,0.8579)(101,0.8588)(102,0.8638)(103,0.8367)(104,0.8271)(105,0.8047)(106,0.8379)(107,0.8309)(108,0.8221)(109,0.8458)(110,0.8393)(111,0.8454)(112,0.8216)(113,0.8041)(114,0.8316)(115,0.8358)(116,0.8391)(117,0.8279)(118,0.8443)(119,0.8492)(120,0.8354)(121,0.8384)(122,0.8258)(123,0.8286)(124,0.8263)(125,0.794)(126,0.795)(127,0.7745)(128,0.7722)(129,0.7442)(130,0.7524)(131,0.7394)(132,0.7188)(133,0.7041)(134,0.7202)(135,0.7032)(136,0.7263)(137,0.7229)(138,0.751)(139,0.7056)(140,0.6915)(141,0.7032)(142,0.6952)(143,0.6466)(144,0.7088)(145,0.7173)(146,0.6657)(147,0.6542)(148,0.682)(149,0.6315)(150,0.6505)(151,0.6928)(152,0.7251)(153,0.6836)(154,0.664)(155,0.6708)(156,0.7211)(157,0.7149)(158,0.6583)(159,0.6331)(160,0.6755)(161,0.6846)(162,0.6861)(163,0.6789)(164,0.6472)(165,0.6712)(166,0.649)(167,0.6565)(168,0.6099)(169,0.6443)(170,0.6035)(171,0.6593)(172,0.6571)(173,0.6819)(174,0.6377)(175,0.6806)(176,0.6681)(177,0.6376)(178,0.6537)(179,0.65)(180,0.7045)(181,0.6604)(182,0.6884)(183,0.654)(184,0.6539)(185,0.6317)(186,0.6402)(187,0.6854)(188,0.6871)(189,0.6214)(190,0.6137)(191,0.5247)(192,0.6489)(193,0.4942)(194,0.5532)(195,0.5988)(196,0.5241)(197,0.5984)(198,0.5868)(199,0.5729)(200,0.609)(201,0.5062)(202,0.6074)(203,0.6296)(204,0.5355)(205,0.4727)(206,0.6214)(207,0.5366)(208,0.6111)(209,0.5924)(210,0.5886)(211,0.5818)(212,0.6019)(213,0.5396)(214,0.5591)(215,0.5486)(216,0.5836)(217,0.6011)(218,0.5303)(219,0.5704)(220,0.5599)(221,0.5072)(222,0.4889)(223,0.5525)(224,0.5629)(225,0.5801)(226,0.5196)(227,0.4671)(228,0.5078)(229,0.5551)(230,0.5757)(231,0.5256)(232,0.5294)(233,0.5437)(234,0.571)(235,0.5699)(236,0.4797)(237,0.58)(238,0.5423)(239,0.5823)(240,0.5497)(241,0.5796)(242,0.5753)(243,0.4959)(244,0.4682)(245,0.5727)(246,0.4752)(247,0.5849)(248,0.5887)(249,0.575)(250,0.6022)(251,0.5674)(252,0.5161)(253,0.5752)(254,0.4792)(255,0.5373)(256,0.609)(257,0.4494)(258,0.5783)(259,0.4746)(260,0.547)(261,0.5367)(262,0.5646)(263,0.5792)(264,0.5756)(265,0.5171)(266,0.4798)(267,0.489)(268,0.5452)(269,0.5522)(270,0.5438)(271,0.4898)(272,0.4729)(273,0.4956)(274,0.4309)(275,0.4372)(276,0.5464)(277,0.5056)(278,0.556)(279,0.5418)(280,0.4744)(281,0.577)(282,0.523)(283,0.4189)(284,0.5066)(285,0.5452)(286,0.5384)(287,0.5369)(288,0.5097)(289,0.5094)(290,0.533)(291,0.5215)(292,0.5243)(293,0.4927)(294,0.4921)(295,0.535)(296,0.4901)(297,0.4981)(298,0.3774)(299,0.446)
}; \addlegendentry{$\alpha_t = 0.1$,   AP-UT-1}
\addplot  [semithick, color =red, mark = triangle*, mark size = 2, mark repeat = 30, mark phase = 10]
coordinates {

(0,0.104)(1,0.2297)(2,0.3945)(3,0.4913)(4,0.5096)(5,0.6089)(6,0.5845)(7,0.6789)(8,0.6689)(9,0.6038)(10,0.6616)(11,0.6913)(12,0.6649)(13,0.6751)(14,0.7209)(15,0.683)(16,0.7465)(17,0.7143)(18,0.6709)(19,0.7656)(20,0.7443)(21,0.785)(22,0.7875)(23,0.769)(24,0.7991)(25,0.7987)(26,0.8019)(27,0.8131)(28,0.8194)(29,0.7947)(30,0.8053)(31,0.8415)(32,0.8401)(33,0.7903)(34,0.8204)(35,0.8242)(36,0.8096)(37,0.8132)(38,0.8156)(39,0.8285)(40,0.8484)(41,0.842)(42,0.8461)(43,0.805)(44,0.8383)(45,0.8464)(46,0.8299)(47,0.81)(48,0.8369)(49,0.8285)(50,0.8629)(51,0.8544)(52,0.8703)(53,0.8555)(54,0.8745)(55,0.8559)(56,0.8351)(57,0.8408)(58,0.8632)(59,0.8293)(60,0.8578)(61,0.8506)(62,0.8509)(63,0.8369)(64,0.8661)(65,0.8643)(66,0.8677)(67,0.8513)(68,0.8741)(69,0.8689)(70,0.8789)(71,0.8675)(72,0.8561)(73,0.8592)(74,0.8488)(75,0.8747)(76,0.8504)(77,0.8592)(78,0.8534)(79,0.8807)(80,0.8796)(81,0.8532)(82,0.8611)(83,0.8669)(84,0.8702)(85,0.8693)(86,0.8676)(87,0.8692)(88,0.8654)(89,0.8646)(90,0.8539)(91,0.8573)(92,0.8658)(93,0.8586)(94,0.8502)(95,0.8636)(96,0.8612)(97,0.859)(98,0.8549)(99,0.8556)(100,0.8523)(101,0.8676)(102,0.8634)(103,0.8617)(104,0.8677)(105,0.8572)(106,0.861)(107,0.8696)(108,0.8598)(109,0.8508)(110,0.8751)(111,0.8573)(112,0.869)(113,0.8725)(114,0.8633)(115,0.8578)(116,0.864)(117,0.8734)(118,0.8555)(119,0.8587)(120,0.8561)(121,0.8451)(122,0.8494)(123,0.8536)(124,0.8394)(125,0.8224)(126,0.8002)(127,0.7953)(128,0.839)(129,0.8319)(130,0.8113)(131,0.8188)(132,0.7862)(133,0.7906)(134,0.8203)(135,0.8095)(136,0.798)(137,0.7887)(138,0.8121)(139,0.8052)(140,0.7949)(141,0.7684)(142,0.7854)(143,0.7395)(144,0.8007)(145,0.7796)(146,0.7974)(147,0.7995)(148,0.7593)(149,0.7547)(150,0.7782)(151,0.7745)(152,0.7908)(153,0.7183)(154,0.672)(155,0.7072)(156,0.7599)(157,0.7697)(158,0.767)(159,0.7482)(160,0.7768)(161,0.7805)(162,0.7649)(163,0.7265)(164,0.7351)(165,0.7373)(166,0.6966)(167,0.709)(168,0.722)(169,0.7313)(170,0.7298)(171,0.7144)(172,0.7093)(173,0.7294)(174,0.7497)(175,0.7451)(176,0.7411)(177,0.6884)(178,0.7467)(179,0.7351)(180,0.7378)(181,0.719)(182,0.7587)(183,0.7491)(184,0.6933)(185,0.7576)(186,0.7564)(187,0.6955)(188,0.6798)(189,0.7378)(190,0.7356)(191,0.7451)(192,0.6994)(193,0.7271)(194,0.749)(195,0.7522)(196,0.7583)(197,0.7145)(198,0.732)(199,0.7731)(200,0.7659)(201,0.7351)(202,0.7451)(203,0.7722)(204,0.7196)(205,0.7219)(206,0.7188)(207,0.7511)(208,0.695)(209,0.6681)(210,0.6918)(211,0.7466)(212,0.6126)(213,0.7194)(214,0.7136)(215,0.7075)(216,0.7113)(217,0.6707)(218,0.6924)(219,0.6628)(220,0.6463)(221,0.745)(222,0.7318)(223,0.6908)(224,0.6933)(225,0.6609)(226,0.7462)(227,0.7734)(228,0.7351)(229,0.7211)(230,0.6757)(231,0.748)(232,0.7359)(233,0.7316)(234,0.6899)(235,0.7565)(236,0.6844)(237,0.7795)(238,0.6933)(239,0.6898)(240,0.6042)(241,0.7409)(242,0.6594)(243,0.6707)(244,0.676)(245,0.7442)(246,0.7386)(247,0.7502)(248,0.7098)(249,0.6676)(250,0.5602)(251,0.7031)(252,0.7131)(253,0.6456)(254,0.7824)(255,0.6531)(256,0.6815)(257,0.7062)(258,0.7171)(259,0.7004)(260,0.7648)(261,0.736)(262,0.7322)(263,0.7006)(264,0.7202)(265,0.64)(266,0.7156)(267,0.6752)(268,0.7509)(269,0.6772)(270,0.6981)(271,0.6111)(272,0.6739)(273,0.6666)(274,0.7179)(275,0.6319)(276,0.6836)(277,0.6285)(278,0.6983)(279,0.6704)(280,0.6941)(281,0.5849)(282,0.6889)(283,0.7113)(284,0.6639)(285,0.6053)(286,0.6563)(287,0.6469)(288,0.6514)(289,0.5947)(290,0.6657)(291,0.6909)(292,0.6719)(293,0.6756)(294,0.6497)(295,0.5924)(296,0.7013)(297,0.7086)(298,0.6769)(299,0.6813)
}; \addlegendentry{$\alpha_t = 0.1$,  AP-UT-2} %P=0.0020

%\addplot [semithick, color = olive]
%coordinates{
%};\addlegendentry{$P = 2.88$}
\end{axis}
\end{tikzpicture}
\end{center}
\caption{Test accuracy vs. the number of global iterations using i.i.d. MNIST dataset for $N=50$, $L = 100$ and $\tau=12$. "AP-UT-1" stands for the AP-UT association scheme proposed in \cite{Emilcellfree} while "AP-UT-2" refers to the scheme developed in  \cite{chen2022improving}.}
\label{AP_UT}
\end{figure}

\subsection{Performance on CIFAR10 dataset}
In \figref{cifar_sim}, we consider the more challenging CIFAR10 dataset. We report the performance of the proposed OTA-FL scheme for different values of the power scaling factor $\alpha_t$. Similarly to the MNIST experiment, the proposed implementation guarantees near-optimal performance while requiring low transmit power at the clients.% For instance, the performance of the error free links can be retrieved with a low power constraint, such as $P=49$ dBm.

\begin{figure}[h]
\begin{center}
\subfigure[Perfect CSI]{
\begin{tikzpicture}[scale=0.8]
\begin{axis}[
tick align=outside,
tick pos=left,
x grid style={white!69.0196078431373!black},
xlabel={Global rounds},
xmajorgrids,
xmin=0, xmax=300,
xtick style={color=black},
y grid style={white!69.0196078431373!black},
ylabel={Test accuracy},
ymajorgrids,
ymin=0, ymax=1,
ytick style={color=black},
grid=major,
scaled ticks=true,
legend pos=south east,	
grid style=densely dashed,
]

\addplot [semithick, color=black]
coordinates {
(0,0.0938)(1,0.1458)(2,0.1315)(3,0.2049)(4,0.205)(5,0.2376)(6,0.289)(7,0.2927)(8,0.3394)(9,0.3214)(10,0.3795)(11,0.3834)(12,0.4124)(13,0.4196)(14,0.4459)(15,0.4604)(16,0.4757)(17,0.4795)(18,0.4879)(19,0.4984)(20,0.5203)(21,0.5191)(22,0.5413)(23,0.5454)(24,0.5623)(25,0.56)(26,0.5714)(27,0.5744)(28,0.5847)(29,0.5873)(30,0.5973)(31,0.6067)(32,0.6078)(33,0.6152)(34,0.6166)(35,0.6165)(36,0.6257)(37,0.631)(38,0.6426)(39,0.6431)(40,0.6462)(41,0.6493)(42,0.653)(43,0.6607)(44,0.6616)(45,0.6657)(46,0.666)(47,0.6718)(48,0.6744)(49,0.6806)(50,0.6844)(51,0.6843)(52,0.6872)(53,0.6857)(54,0.695)(55,0.6949)(56,0.6995)(57,0.6979)(58,0.7007)(59,0.705)(60,0.7074)(61,0.7068)(62,0.7079)(63,0.7117)(64,0.7126)(65,0.714)(66,0.7134)(67,0.7194)(68,0.7164)(69,0.7189)(70,0.7194)(71,0.722)(72,0.7265)(73,0.7263)(74,0.7247)(75,0.7299)(76,0.7308)(77,0.7329)(78,0.7328)(79,0.737)(80,0.74)(81,0.7433)(82,0.7446)(83,0.7432)(84,0.7429)(85,0.7473)(86,0.7432)(87,0.7476)(88,0.7485)(89,0.7525)(90,0.7516)(91,0.7513)(92,0.7509)(93,0.7504)(94,0.7527)(95,0.7485)(96,0.7582)(97,0.7582)(98,0.7559)(99,0.7568)(100,0.758)(101,0.7584)(102,0.7621)(103,0.7601)(104,0.7643)(105,0.7632)(106,0.7663)(107,0.7665)(108,0.7662)(109,0.7685)(110,0.7662)(111,0.767)(112,0.7687)(113,0.7714)(114,0.7688)(115,0.7735)(116,0.7725)(117,0.771)(118,0.7716)(119,0.775)(120,0.7695)(121,0.7757)(122,0.7776)(123,0.7766)(124,0.7739)(125,0.7735)(126,0.7775)(127,0.7769)(128,0.7773)(129,0.775)(130,0.7758)(131,0.7784)(132,0.784)(133,0.7824)(134,0.7814)(135,0.7749)(136,0.7785)(137,0.7813)(138,0.7815)(139,0.7811)(140,0.7875)(141,0.7845)(142,0.7847)(143,0.781)(144,0.7821)(145,0.7846)(146,0.7841)(147,0.7879)(148,0.7853)(149,0.786)(150,0.7857)(151,0.7894)(152,0.7878)(153,0.7889)(154,0.7903)(155,0.786)(156,0.7889)(157,0.7865)(158,0.7884)(159,0.7919)(160,0.7925)(161,0.7916)(162,0.7834)(163,0.7885)(164,0.7871)(165,0.7859)(166,0.789)(167,0.7908)(168,0.7881)(169,0.7901)(170,0.7944)(171,0.7921)(172,0.792)(173,0.7894)(174,0.7918)(175,0.7886)(176,0.795)(177,0.7939)(178,0.7943)(179,0.7931)(180,0.7893)(181,0.795)(182,0.7941)(183,0.7955)(184,0.792)(185,0.7899)(186,0.7885)(187,0.791)(188,0.7919)(189,0.7921)(190,0.7941)(191,0.7945)(192,0.7935)(193,0.7956)(194,0.7978)(195,0.7985)(196,0.7902)(197,0.8001)(198,0.7954)(199,0.798)(200,0.7995)(201,0.7941)(202,0.7963)(203,0.7959)(204,0.7964)(205,0.7957)(206,0.7969)(207,0.7987)(208,0.7974)(209,0.7972)(210,0.7979)(211,0.7953)(212,0.8002)(213,0.8025)(214,0.7996)(215,0.7968)(216,0.7988)(217,0.8018)(218,0.7995)(219,0.798)(220,0.798)(221,0.7983)(222,0.7997)(223,0.7991)(224,0.7954)(225,0.7962)(226,0.7985)(227,0.7992)(228,0.7957)(229,0.7969)(230,0.7973)(231,0.7932)(232,0.7958)(233,0.7955)(234,0.8002)(235,0.8019)(236,0.8019)(237,0.7983)(238,0.7985)(239,0.7998)(240,0.7992)(241,0.804)(242,0.7975)(243,0.7985)(244,0.8012)(245,0.8054)(246,0.7976)(247,0.7996)(248,0.8036)(249,0.8032)(250,0.8007)(251,0.8032)(252,0.7992)(253,0.8008)(254,0.8025)(255,0.7964)(256,0.7948)(257,0.7996)(258,0.8014)(259,0.8008)(260,0.8049)(261,0.7992)(262,0.801)(263,0.8044)(264,0.7973)(265,0.8032)(266,0.8025)(267,0.8015)(268,0.7991)(269,0.8047)(270,0.8002)(271,0.7998)(272,0.7996)(273,0.7997)(274,0.8014)(275,0.797)(276,0.8036)(277,0.8057)(278,0.803)(279,0.8071)(280,0.799)(281,0.8044)(282,0.8017)(283,0.8063)(284,0.7985)(285,0.7983)(286,0.8054)(287,0.7968)(288,0.8006)(289,0.7971)(290,0.8044)(291,0.8042)(292,0.8043)(293,0.8025)(294,0.8005)(295,0.7989)(296,0.798)(297,0.8068)(298,0.8013)(299,0.8047)}; \addlegendentry{Perfect links}

\addplot [semithick, color =blue, mark =  square*,  mark size = 2, mark repeat = 30, mark phase = 10]
coordinates {
(0,0.0938)(1,0.1346)(2,0.1183)(3,0.1478)(4,0.1747)(5,0.2202)(6,0.2324)(7,0.2351)(8,0.2585)(9,0.2957)(10,0.2876)(11,0.3157)(12,0.3127)(13,0.3569)(14,0.3588)(15,0.3841)(16,0.3912)(17,0.3998)(18,0.4141)(19,0.4356)(20,0.4304)(21,0.4447)(22,0.4484)(23,0.4552)(24,0.4585)(25,0.4692)(26,0.4749)(27,0.4757)(28,0.4859)(29,0.4784)(30,0.4898)(31,0.4937)(32,0.5053)(33,0.5141)(34,0.5081)(35,0.5149)(36,0.5285)(37,0.5334)(38,0.5339)(39,0.5403)(40,0.5386)(41,0.5511)(42,0.5459)(43,0.5555)(44,0.5645)(45,0.5571)(46,0.5683)(47,0.5715)(48,0.5721)(49,0.5781)(50,0.5754)(51,0.5835)(52,0.5975)(53,0.5891)(54,0.5965)(55,0.6009)(56,0.6057)(57,0.6004)(58,0.6098)(59,0.613)(60,0.6137)(61,0.6225)(62,0.623)(63,0.6205)(64,0.6209)(65,0.6312)(66,0.6212)(67,0.6365)(68,0.6403)(69,0.6464)(70,0.6396)(71,0.6474)(72,0.6457)(73,0.6532)(74,0.6585)(75,0.6613)(76,0.6437)(77,0.6599)(78,0.6623)(79,0.6614)(80,0.674)(81,0.6681)(82,0.6769)(83,0.6723)(84,0.6767)(85,0.6659)(86,0.6697)(87,0.6771)(88,0.6752)(89,0.6875)(90,0.6848)(91,0.6889)(92,0.6892)(93,0.6904)(94,0.6905)(95,0.6942)(96,0.6962)(97,0.6983)(98,0.7004)(99,0.6967)(100,0.6967)(101,0.7066)(102,0.7026)(103,0.708)(104,0.7051)(105,0.7121)(106,0.7087)(107,0.7065)(108,0.7152)(109,0.7141)(110,0.7128)(111,0.7098)(112,0.7121)(113,0.7158)(114,0.718)(115,0.7116)(116,0.7199)(117,0.7143)(118,0.7191)(119,0.7206)(120,0.7289)(121,0.7236)(122,0.7253)(123,0.7208)(124,0.7302)(125,0.7318)(126,0.7316)(127,0.7279)(128,0.7295)(129,0.7286)(130,0.7366)(131,0.7366)(132,0.7314)(133,0.7363)(134,0.7406)(135,0.737)(136,0.734)(137,0.7386)(138,0.7403)(139,0.7397)(140,0.7435)(141,0.7431)(142,0.7401)(143,0.7408)(144,0.7361)(145,0.7429)(146,0.7397)(147,0.7421)(148,0.7419)(149,0.7431)(150,0.7456)(151,0.7515)(152,0.7512)(153,0.7524)(154,0.7504)(155,0.7539)(156,0.748)(157,0.7549)(158,0.7513)(159,0.7559)(160,0.7439)(161,0.7535)(162,0.7461)(163,0.7506)(164,0.7545)(165,0.754)(166,0.7545)(167,0.753)(168,0.7577)(169,0.763)(170,0.7523)(171,0.7527)(172,0.7529)(173,0.7518)(174,0.7519)(175,0.7557)(176,0.7643)(177,0.7594)(178,0.7598)(179,0.7612)(180,0.7619)(181,0.7689)(182,0.7638)(183,0.7608)(184,0.7596)(185,0.7633)(186,0.7654)(187,0.7649)(188,0.7656)(189,0.7612)(190,0.7651)(191,0.7645)(192,0.7656)(193,0.7625)(194,0.7704)(195,0.7656)(196,0.7629)(197,0.7644)(198,0.7657)(199,0.7658)(200,0.7662)(201,0.7652)(202,0.7657)(203,0.7624)(204,0.7637)(205,0.7648)(206,0.7686)(207,0.7732)(208,0.7653)(209,0.7735)(210,0.7706)(211,0.7681)(212,0.7727)(213,0.7704)(214,0.7705)(215,0.7702)(216,0.7707)(217,0.7732)(218,0.7693)(219,0.7704)(220,0.7717)(221,0.7781)(222,0.7722)(223,0.7731)(224,0.7728)(225,0.7771)(226,0.77)(227,0.7661)(228,0.7712)(229,0.7728)(230,0.7754)(231,0.7692)(232,0.7796)(233,0.7738)(234,0.7746)(235,0.7761)(236,0.78)(237,0.7735)(238,0.7728)(239,0.783)(240,0.7779)(241,0.7814)(242,0.7782)(243,0.7822)(244,0.7758)(245,0.7804)(246,0.785)(247,0.7742)(248,0.7814)(249,0.7759)(250,0.7786)(251,0.7789)(252,0.7794)(253,0.7795)(254,0.7814)(255,0.7789)(256,0.7829)(257,0.7764)(258,0.779)(259,0.7749)(260,0.7821)(261,0.7834)(262,0.7816)(263,0.7805)(264,0.7823)(265,0.7788)(266,0.782)(267,0.7805)(268,0.7815)(269,0.7868)(270,0.7823)(271,0.7804)(272,0.7787)(273,0.7837)(274,0.7881)(275,0.783)(276,0.7785)(277,0.7827)(278,0.7794)(279,0.7787)(280,0.7834)(281,0.7797)(282,0.7835)(283,0.7848)(284,0.7769)(285,0.7767)(286,0.785)(287,0.7822)(288,0.7877)(289,0.781)(290,0.7844)(291,0.7835)(292,0.7832)(293,0.7872)(294,0.7815)(295,0.7852)(296,0.7826)(297,0.7919)(298,0.7836)(299,0.7892)

}; \addlegendentry{$\alpha_t = 0.2, P =49.8$ dBm}
\addplot  [semithick, color =red, mark = triangle*, mark size = 2, mark repeat = 30, mark phase = 10]
coordinates {
(0,0.0938)(1,0.1249)(2,0.1122)(3,0.1326)(4,0.1729)(5,0.1955)(6,0.2118)(7,0.2035)(8,0.224)(9,0.2531)(10,0.2445)(11,0.2508)(12,0.2496)(13,0.2749)(14,0.2734)(15,0.3023)(16,0.2978)(17,0.3151)(18,0.3319)(19,0.3388)(20,0.3363)(21,0.324)(22,0.3337)(23,0.3394)(24,0.3453)(25,0.3451)(26,0.3739)(27,0.3603)(28,0.3733)(29,0.3702)(30,0.3674)(31,0.3823)(32,0.3935)(33,0.3943)(34,0.3863)(35,0.3992)(36,0.4159)(37,0.4039)(38,0.403)(39,0.4066)(40,0.4223)(41,0.4205)(42,0.4332)(43,0.4351)(44,0.4338)(45,0.4198)(46,0.4366)(47,0.4384)(48,0.4372)(49,0.4543)(50,0.4463)(51,0.4477)(52,0.4366)(53,0.4444)(54,0.4575)(55,0.4577)(56,0.4579)(57,0.4426)(58,0.4709)(59,0.4779)(60,0.4795)(61,0.4751)(62,0.4789)(63,0.4861)(64,0.49)(65,0.4935)(66,0.4718)(67,0.4941)(68,0.5052)(69,0.5013)(70,0.502)(71,0.493)(72,0.4973)(73,0.4995)(74,0.5098)(75,0.5104)(76,0.5038)(77,0.5084)(78,0.5131)(79,0.512)(80,0.5288)(81,0.5334)(82,0.5405)(83,0.5269)(84,0.5244)(85,0.5253)(86,0.5393)(87,0.5309)(88,0.5413)(89,0.5417)(90,0.5464)(91,0.55)(92,0.551)(93,0.5496)(94,0.5533)(95,0.5591)(96,0.5529)(97,0.5633)(98,0.5624)(99,0.553)(100,0.5587)(101,0.5627)(102,0.5623)(103,0.5746)(104,0.5766)(105,0.5723)(106,0.5771)(107,0.5721)(108,0.5847)(109,0.5868)(110,0.5911)(111,0.5782)(112,0.5922)(113,0.5858)(114,0.5944)(115,0.5934)(116,0.5877)(117,0.5887)(118,0.5943)(119,0.6007)(120,0.601)(121,0.6038)(122,0.6105)(123,0.6085)(124,0.6127)(125,0.6155)(126,0.6091)(127,0.6109)(128,0.6228)(129,0.6056)(130,0.6197)(131,0.6243)(132,0.6215)(133,0.6271)(134,0.6249)(135,0.6298)(136,0.6334)(137,0.6273)(138,0.628)(139,0.633)(140,0.6328)(141,0.6293)(142,0.6387)(143,0.6336)(144,0.6286)(145,0.6343)(146,0.6379)(147,0.6393)(148,0.6409)(149,0.6431)(150,0.6416)(151,0.6522)(152,0.6466)(153,0.6493)(154,0.6503)(155,0.6518)(156,0.6503)(157,0.6536)(158,0.6535)(159,0.6542)(160,0.6594)(161,0.6525)(162,0.6507)(163,0.6605)(164,0.6639)(165,0.6644)(166,0.6673)(167,0.6593)(168,0.6619)(169,0.6622)(170,0.6678)(171,0.6703)(172,0.6685)(173,0.6718)(174,0.6735)(175,0.6672)(176,0.6712)(177,0.6717)(178,0.6767)(179,0.6734)(180,0.6734)(181,0.6818)(182,0.6777)(183,0.6802)(184,0.6815)(185,0.6815)(186,0.6837)(187,0.6807)(188,0.6816)(189,0.6795)(190,0.6821)(191,0.682)(192,0.6863)(193,0.6793)(194,0.6847)(195,0.6834)(196,0.6846)(197,0.6854)(198,0.6873)(199,0.687)(200,0.6915)(201,0.6915)(202,0.6877)(203,0.6952)(204,0.6948)(205,0.6959)(206,0.6926)(207,0.6944)(208,0.6954)(209,0.6932)(210,0.6902)(211,0.6997)(212,0.6987)(213,0.7021)(214,0.7016)(215,0.6986)(216,0.7032)(217,0.6976)(218,0.7046)(219,0.7012)(220,0.7078)(221,0.7054)(222,0.7111)(223,0.7006)(224,0.7061)(225,0.7062)(226,0.7104)(227,0.7015)(228,0.6986)(229,0.7032)(230,0.7121)(231,0.705)(232,0.709)(233,0.7082)(234,0.7148)(235,0.7134)(236,0.7089)(237,0.7087)(238,0.7161)(239,0.7117)(240,0.7124)(241,0.7144)(242,0.7076)(243,0.7105)(244,0.7094)(245,0.7195)(246,0.7193)(247,0.7162)(248,0.7174)(249,0.7175)(250,0.721)(251,0.7172)(252,0.7194)(253,0.7249)(254,0.7237)(255,0.7208)(256,0.719)(257,0.7246)(258,0.7191)(259,0.7247)(260,0.7201)(261,0.7275)(262,0.7207)(263,0.7288)(264,0.7284)(265,0.7274)(266,0.7239)(267,0.7196)(268,0.732)(269,0.7311)(270,0.7248)(271,0.7281)(272,0.728)(273,0.7277)(274,0.7321)(275,0.7273)(276,0.7266)(277,0.7317)(278,0.7302)(279,0.7283)(280,0.7262)(281,0.7232)(282,0.7286)(283,0.7306)(284,0.725)(285,0.731)(286,0.7314)(287,0.735)(288,0.7327)(289,0.7356)(290,0.7375)(291,0.733)(292,0.7284)(293,0.7287)(294,0.7328)(295,0.727)(296,0.7327)(297,0.7367)(298,0.7367)(299,0.7361)

}; \addlegendentry{$\alpha_t = 0.1, P = 43.5$ dBm}

%\addplot [semithick, color = olive]
%coordinates{
%};\addlegendentry{$P = 2.88$}
\end{axis}
\end{tikzpicture}}
\subfigure[Imperfect CSI]{
\begin{tikzpicture}[scale=0.8]
\begin{axis}[
tick align=outside,
tick pos=left,
x grid style={white!69.0196078431373!black},
xlabel={Global rounds},
xmajorgrids,
xmin=0, xmax=300,
xtick style={color=black},
y grid style={white!69.0196078431373!black},
ylabel={Test accuracy},
ymajorgrids,
ymin=0, ymax=1,
ytick style={color=black},
grid=major,
scaled ticks=true,
legend pos=south east,	
grid style=densely dashed,
]

\addplot [semithick, color=black]
coordinates {
(0,0.0938)(1,0.1458)(2,0.1315)(3,0.2049)(4,0.205)(5,0.2376)(6,0.289)(7,0.2927)(8,0.3394)(9,0.3214)(10,0.3795)(11,0.3834)(12,0.4124)(13,0.4196)(14,0.4459)(15,0.4604)(16,0.4757)(17,0.4795)(18,0.4879)(19,0.4984)(20,0.5203)(21,0.5191)(22,0.5413)(23,0.5454)(24,0.5623)(25,0.56)(26,0.5714)(27,0.5744)(28,0.5847)(29,0.5873)(30,0.5973)(31,0.6067)(32,0.6078)(33,0.6152)(34,0.6166)(35,0.6165)(36,0.6257)(37,0.631)(38,0.6426)(39,0.6431)(40,0.6462)(41,0.6493)(42,0.653)(43,0.6607)(44,0.6616)(45,0.6657)(46,0.666)(47,0.6718)(48,0.6744)(49,0.6806)(50,0.6844)(51,0.6843)(52,0.6872)(53,0.6857)(54,0.695)(55,0.6949)(56,0.6995)(57,0.6979)(58,0.7007)(59,0.705)(60,0.7074)(61,0.7068)(62,0.7079)(63,0.7117)(64,0.7126)(65,0.714)(66,0.7134)(67,0.7194)(68,0.7164)(69,0.7189)(70,0.7194)(71,0.722)(72,0.7265)(73,0.7263)(74,0.7247)(75,0.7299)(76,0.7308)(77,0.7329)(78,0.7328)(79,0.737)(80,0.74)(81,0.7433)(82,0.7446)(83,0.7432)(84,0.7429)(85,0.7473)(86,0.7432)(87,0.7476)(88,0.7485)(89,0.7525)(90,0.7516)(91,0.7513)(92,0.7509)(93,0.7504)(94,0.7527)(95,0.7485)(96,0.7582)(97,0.7582)(98,0.7559)(99,0.7568)(100,0.758)(101,0.7584)(102,0.7621)(103,0.7601)(104,0.7643)(105,0.7632)(106,0.7663)(107,0.7665)(108,0.7662)(109,0.7685)(110,0.7662)(111,0.767)(112,0.7687)(113,0.7714)(114,0.7688)(115,0.7735)(116,0.7725)(117,0.771)(118,0.7716)(119,0.775)(120,0.7695)(121,0.7757)(122,0.7776)(123,0.7766)(124,0.7739)(125,0.7735)(126,0.7775)(127,0.7769)(128,0.7773)(129,0.775)(130,0.7758)(131,0.7784)(132,0.784)(133,0.7824)(134,0.7814)(135,0.7749)(136,0.7785)(137,0.7813)(138,0.7815)(139,0.7811)(140,0.7875)(141,0.7845)(142,0.7847)(143,0.781)(144,0.7821)(145,0.7846)(146,0.7841)(147,0.7879)(148,0.7853)(149,0.786)(150,0.7857)(151,0.7894)(152,0.7878)(153,0.7889)(154,0.7903)(155,0.786)(156,0.7889)(157,0.7865)(158,0.7884)(159,0.7919)(160,0.7925)(161,0.7916)(162,0.7834)(163,0.7885)(164,0.7871)(165,0.7859)(166,0.789)(167,0.7908)(168,0.7881)(169,0.7901)(170,0.7944)(171,0.7921)(172,0.792)(173,0.7894)(174,0.7918)(175,0.7886)(176,0.795)(177,0.7939)(178,0.7943)(179,0.7931)(180,0.7893)(181,0.795)(182,0.7941)(183,0.7955)(184,0.792)(185,0.7899)(186,0.7885)(187,0.791)(188,0.7919)(189,0.7921)(190,0.7941)(191,0.7945)(192,0.7935)(193,0.7956)(194,0.7978)(195,0.7985)(196,0.7902)(197,0.8001)(198,0.7954)(199,0.798)(200,0.7995)(201,0.7941)(202,0.7963)(203,0.7959)(204,0.7964)(205,0.7957)(206,0.7969)(207,0.7987)(208,0.7974)(209,0.7972)(210,0.7979)(211,0.7953)(212,0.8002)(213,0.8025)(214,0.7996)(215,0.7968)(216,0.7988)(217,0.8018)(218,0.7995)(219,0.798)(220,0.798)(221,0.7983)(222,0.7997)(223,0.7991)(224,0.7954)(225,0.7962)(226,0.7985)(227,0.7992)(228,0.7957)(229,0.7969)(230,0.7973)(231,0.7932)(232,0.7958)(233,0.7955)(234,0.8002)(235,0.8019)(236,0.8019)(237,0.7983)(238,0.7985)(239,0.7998)(240,0.7992)(241,0.804)(242,0.7975)(243,0.7985)(244,0.8012)(245,0.8054)(246,0.7976)(247,0.7996)(248,0.8036)(249,0.8032)(250,0.8007)(251,0.8032)(252,0.7992)(253,0.8008)(254,0.8025)(255,0.7964)(256,0.7948)(257,0.7996)(258,0.8014)(259,0.8008)(260,0.8049)(261,0.7992)(262,0.801)(263,0.8044)(264,0.7973)(265,0.8032)(266,0.8025)(267,0.8015)(268,0.7991)(269,0.8047)(270,0.8002)(271,0.7998)(272,0.7996)(273,0.7997)(274,0.8014)(275,0.797)(276,0.8036)(277,0.8057)(278,0.803)(279,0.8071)(280,0.799)(281,0.8044)(282,0.8017)(283,0.8063)(284,0.7985)(285,0.7983)(286,0.8054)(287,0.7968)(288,0.8006)(289,0.7971)(290,0.8044)(291,0.8042)(292,0.8043)(293,0.8025)(294,0.8005)(295,0.7989)(296,0.798)(297,0.8068)(298,0.8013)(299,0.8047)}; \addlegendentry{Perfect links}

\addplot [semithick, color =blue, mark =  square*,  mark size = 2, mark repeat = 30, mark phase = 10]
coordinates {
(0,0.0938)(1,0.1377)(2,0.1181)(3,0.1498)(4,0.1687)(5,0.226)(6,0.2309)(7,0.22)(8,0.2591)(9,0.3028)(10,0.2925)(11,0.3137)(12,0.3197)(13,0.3478)(14,0.3459)(15,0.3821)(16,0.3881)(17,0.4051)(18,0.4149)(19,0.4219)(20,0.4399)(21,0.4317)(22,0.4512)(23,0.4495)(24,0.4553)(25,0.4563)(26,0.4627)(27,0.4698)(28,0.4812)(29,0.473)(30,0.492)(31,0.4931)(32,0.4987)(33,0.5039)(34,0.4948)(35,0.5043)(36,0.5219)(37,0.5157)(38,0.5256)(39,0.5374)(40,0.5372)(41,0.5466)(42,0.5402)(43,0.545)(44,0.5544)(45,0.5516)(46,0.5568)(47,0.5681)(48,0.5631)(49,0.5638)(50,0.5653)(51,0.5785)(52,0.5819)(53,0.5904)(54,0.5865)(55,0.5865)(56,0.5946)(57,0.5905)(58,0.6048)(59,0.607)(60,0.6122)(61,0.6163)(62,0.612)(63,0.6175)(64,0.6248)(65,0.623)(66,0.625)(67,0.6331)(68,0.6397)(69,0.6431)(70,0.6372)(71,0.636)(72,0.6398)(73,0.6481)(74,0.6519)(75,0.656)(76,0.646)(77,0.6501)(78,0.6613)(79,0.6556)(80,0.6623)(81,0.6679)(82,0.6722)(83,0.6649)(84,0.674)(85,0.6615)(86,0.6746)(87,0.6802)(88,0.6809)(89,0.6868)(90,0.6833)(91,0.6871)(92,0.686)(93,0.6863)(94,0.6939)(95,0.6907)(96,0.6937)(97,0.6998)(98,0.6874)(99,0.6886)(100,0.693)(101,0.701)(102,0.704)(103,0.7032)(104,0.7045)(105,0.6996)(106,0.7055)(107,0.7048)(108,0.7097)(109,0.7139)(110,0.7125)(111,0.7104)(112,0.7136)(113,0.7183)(114,0.7195)(115,0.7138)(116,0.7225)(117,0.7163)(118,0.7234)(119,0.7272)(120,0.7224)(121,0.7224)(122,0.7291)(123,0.729)(124,0.7294)(125,0.7293)(126,0.7318)(127,0.7332)(128,0.7312)(129,0.729)(130,0.7335)(131,0.7386)(132,0.734)(133,0.7322)(134,0.7365)(135,0.7388)(136,0.7357)(137,0.7368)(138,0.7406)(139,0.7448)(140,0.7463)(141,0.7479)(142,0.7435)(143,0.7445)(144,0.7398)(145,0.739)(146,0.7458)(147,0.747)(148,0.7442)(149,0.7441)(150,0.7525)(151,0.753)(152,0.7545)(153,0.7479)(154,0.7547)(155,0.756)(156,0.7459)(157,0.7498)(158,0.7533)(159,0.7551)(160,0.7527)(161,0.7573)(162,0.7527)(163,0.757)(164,0.7522)(165,0.7564)(166,0.7557)(167,0.7538)(168,0.7549)(169,0.7596)(170,0.7577)(171,0.7573)(172,0.7551)(173,0.7607)(174,0.7593)(175,0.7588)(176,0.7608)(177,0.762)(178,0.7613)(179,0.7625)(180,0.7621)(181,0.7651)(182,0.7657)(183,0.7609)(184,0.7621)(185,0.7653)(186,0.7692)(187,0.7639)(188,0.7648)(189,0.7632)(190,0.7612)(191,0.7691)(192,0.7638)(193,0.7644)(194,0.7677)(195,0.7667)(196,0.7566)(197,0.7648)(198,0.7663)(199,0.7674)(200,0.7623)(201,0.7652)(202,0.7671)(203,0.7661)(204,0.77)(205,0.771)(206,0.767)(207,0.7686)(208,0.7752)(209,0.7716)(210,0.7754)(211,0.7708)(212,0.7714)(213,0.7727)(214,0.7717)(215,0.7711)(216,0.7709)(217,0.7716)(218,0.7724)(219,0.7734)(220,0.7786)(221,0.7763)(222,0.7746)(223,0.7715)(224,0.771)(225,0.773)(226,0.7753)(227,0.7705)(228,0.7786)(229,0.7731)(230,0.78)(231,0.77)(232,0.7763)(233,0.7757)(234,0.7824)(235,0.7765)(236,0.7803)(237,0.7807)(238,0.7799)(239,0.7804)(240,0.7795)(241,0.7808)(242,0.7776)(243,0.7824)(244,0.7762)(245,0.7773)(246,0.7868)(247,0.7782)(248,0.7818)(249,0.7793)(250,0.7876)(251,0.7775)(252,0.7792)(253,0.7843)(254,0.7803)(255,0.7806)(256,0.783)(257,0.7791)(258,0.7821)(259,0.7838)(260,0.7849)(261,0.7835)(262,0.7812)(263,0.7878)(264,0.7905)(265,0.7817)(266,0.7814)(267,0.7876)(268,0.7804)(269,0.785)(270,0.7815)(271,0.7866)(272,0.78)(273,0.7803)(274,0.7881)(275,0.7847)(276,0.7835)(277,0.784)(278,0.7844)(279,0.7899)(280,0.7885)(281,0.7766)(282,0.7856)(283,0.7847)(284,0.7828)(285,0.7852)(286,0.7867)(287,0.7819)(288,0.7875)(289,0.7861)(290,0.7817)(291,0.788)(292,0.7868)(293,0.786)(294,0.7898)(295,0.785)(296,0.7884)(297,0.7919)(298,0.7889)(299,0.7913)

}; \addlegendentry{$\alpha_t = 0.2, P = 49.7$ dBm}
\addplot  [semithick, color =red, mark = triangle*, mark size = 2, mark repeat = 30, mark phase = 10]
coordinates {
(0,0.0938)(1,0.1283)(2,0.1122)(3,0.1309)(4,0.1688)(5,0.1973)(6,0.2097)(7,0.2059)(8,0.209)(9,0.2528)(10,0.232)(11,0.2574)(12,0.2474)(13,0.2781)(14,0.2674)(15,0.2999)(16,0.2919)(17,0.3068)(18,0.3261)(19,0.3333)(20,0.3219)(21,0.3309)(22,0.3296)(23,0.3506)(24,0.3648)(25,0.3442)(26,0.3641)(27,0.3573)(28,0.3787)(29,0.3729)(30,0.3743)(31,0.3878)(32,0.4009)(33,0.3886)(34,0.3899)(35,0.4025)(36,0.4133)(37,0.4115)(38,0.4034)(39,0.4092)(40,0.4163)(41,0.409)(42,0.4309)(43,0.4266)(44,0.4283)(45,0.4294)(46,0.4425)(47,0.4284)(48,0.4366)(49,0.4526)(50,0.454)(51,0.4531)(52,0.4432)(53,0.4473)(54,0.4569)(55,0.4656)(56,0.4524)(57,0.4596)(58,0.4734)(59,0.4763)(60,0.4858)(61,0.4864)(62,0.4801)(63,0.4854)(64,0.4767)(65,0.5015)(66,0.4821)(67,0.4971)(68,0.4929)(69,0.4997)(70,0.5024)(71,0.5003)(72,0.5099)(73,0.494)(74,0.5111)(75,0.5116)(76,0.5019)(77,0.5161)(78,0.5141)(79,0.5187)(80,0.5289)(81,0.5275)(82,0.5255)(83,0.5251)(84,0.5201)(85,0.5255)(86,0.5378)(87,0.5299)(88,0.5353)(89,0.536)(90,0.5419)(91,0.5459)(92,0.5418)(93,0.5412)(94,0.5498)(95,0.5563)(96,0.5503)(97,0.5582)(98,0.5471)(99,0.5463)(100,0.5564)(101,0.5583)(102,0.5588)(103,0.5658)(104,0.5701)(105,0.5703)(106,0.5668)(107,0.5663)(108,0.5782)(109,0.5794)(110,0.5805)(111,0.5716)(112,0.5785)(113,0.5809)(114,0.5891)(115,0.5849)(116,0.584)(117,0.5882)(118,0.5886)(119,0.5993)(120,0.5926)(121,0.5938)(122,0.6038)(123,0.6024)(124,0.6081)(125,0.6039)(126,0.6074)(127,0.605)(128,0.6133)(129,0.6001)(130,0.6161)(131,0.6134)(132,0.6184)(133,0.6175)(134,0.6128)(135,0.6164)(136,0.626)(137,0.6266)(138,0.6185)(139,0.6257)(140,0.6255)(141,0.6301)(142,0.6312)(143,0.6233)(144,0.6254)(145,0.6302)(146,0.6299)(147,0.6286)(148,0.6313)(149,0.6388)(150,0.6362)(151,0.6382)(152,0.6412)(153,0.6401)(154,0.6437)(155,0.643)(156,0.6428)(157,0.6495)(158,0.6483)(159,0.6508)(160,0.6545)(161,0.6506)(162,0.6473)(163,0.6515)(164,0.6574)(165,0.6539)(166,0.6578)(167,0.6602)(168,0.6602)(169,0.6566)(170,0.6618)(171,0.6616)(172,0.6627)(173,0.6601)(174,0.6613)(175,0.6673)(176,0.6677)(177,0.667)(178,0.67)(179,0.6714)(180,0.6737)(181,0.6698)(182,0.6734)(183,0.6679)(184,0.675)(185,0.6717)(186,0.6812)(187,0.6747)(188,0.6854)(189,0.6809)(190,0.6773)(191,0.6844)(192,0.6805)(193,0.679)(194,0.6794)(195,0.6819)(196,0.6881)(197,0.6836)(198,0.6804)(199,0.6786)(200,0.6791)(201,0.6779)(202,0.6874)(203,0.6847)(204,0.6889)(205,0.6842)(206,0.6918)(207,0.6906)(208,0.6944)(209,0.689)(210,0.6963)(211,0.7032)(212,0.6951)(213,0.7001)(214,0.6958)(215,0.7024)(216,0.7002)(217,0.6966)(218,0.6964)(219,0.7019)(220,0.6991)(221,0.7029)(222,0.7015)(223,0.7028)(224,0.7016)(225,0.7068)(226,0.697)(227,0.6963)(228,0.699)(229,0.7008)(230,0.7095)(231,0.6986)(232,0.7045)(233,0.7051)(234,0.7108)(235,0.7067)(236,0.7113)(237,0.7106)(238,0.7079)(239,0.709)(240,0.7155)(241,0.7131)(242,0.7086)(243,0.7096)(244,0.7093)(245,0.7132)(246,0.7124)(247,0.7163)(248,0.7123)(249,0.7121)(250,0.7175)(251,0.7168)(252,0.7196)(253,0.7154)(254,0.7191)(255,0.7169)(256,0.7208)(257,0.7204)(258,0.7198)(259,0.7193)(260,0.7174)(261,0.7223)(262,0.726)(263,0.7228)(264,0.7229)(265,0.726)(266,0.718)(267,0.7254)(268,0.7242)(269,0.7322)(270,0.7253)(271,0.7264)(272,0.7241)(273,0.7242)(274,0.7319)(275,0.724)(276,0.719)(277,0.721)(278,0.723)(279,0.7259)(280,0.7287)(281,0.7259)(282,0.7321)(283,0.7296)(284,0.7265)(285,0.7281)(286,0.7337)(287,0.7333)(288,0.7325)(289,0.7324)(290,0.7318)(291,0.7343)(292,0.7336)(293,0.7369)(294,0.7386)(295,0.7319)(296,0.7323)(297,0.7363)(298,0.7332)(299,0.7326)

}; \addlegendentry{$\alpha_t = 0.1, P = 43.4$ dBm}

%\addplot [semithick, color = olive]
%coordinates{
%};\addlegendentry{$P = 2.88$}
\end{axis}
\end{tikzpicture}}
\end{center}
\caption{Test accuracy vs. the number of global iterations using i.i.d. CIFAR10 dataset for $N=20$ and $\tau=5$.}
\label{cifar_sim}
\end{figure}
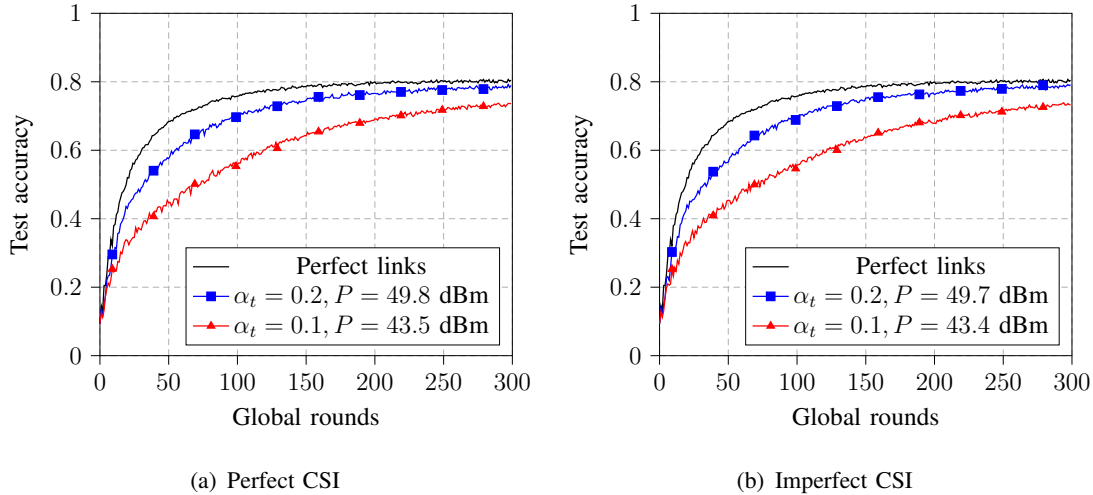

\section{Conclusion}
\label{conc}
In this paper, we have studied the performance of OTA-FL over scalable cell-free massive MIMO. Specifically, we have proposed a practical implementation that is shown to be convergent. The performance of the proposed implementation is studied analytically and experimentally, showing the advantages that cell-free massive MIMO can bring for OTA-FL. We have also compared the proposed scheme with cellular massive MIMO-based OTA-FL, where we have shown that the proposed implementation requires significantly lower energy to reach convergence. Future related research directions include studying different receiving techniques, such as the MMSE receiver, and investigating the convergence of second-order FL techniques, such as ADMM FL, in this context.  Fine-tuning the proposed OTA-FL over cell-free massive MIMO in terms of client scheduling and uplink power optimization is also an interesting future direction.

\section*{Appendix A\\ Proof of Theorem 1}
\label{proofthm1}
As mentioned in Section \ref{conv}, we assume without loss of generality that $d=2S$, which implies $K=1$ and thus we drop the dependency on $k$ for ease of notations. The received signal in \eqref{signal-y_imperfect1} can be rewritten as follows
\begin{align}
y^s(t) = \sum_{p = 1}^4y^s_p(t),
\end{align}
where
\begin{subequations}
\begin{align}
y^s_1(t)&= \frac{\alpha_t}{N}\sum_{n=1}^N \frac{\bh^{s}_{n}(t)^H\bD_n \bh^{s}_{n}(t)}{c_n }\left(\Delta \theta_{n,s}(t) +j \Delta \theta_{n,s+\frac{d}{2}}(t) \ \right)\\
y^s_2(t)&=\frac{\alpha_t}{N}\sum_{n=1}^N \sum_{\substack{n'=1\\n'\neq n}}^N \frac{\bh^{s}_{n}(t)^H\bD_n \bh^{s}_{n'}(t)}{c_n }  \left(\Delta \theta_{n',s}(t) +j \Delta \theta_{n',s+\frac{d}{2}}(t) \ \right)\\
y^s_3(t)&=\frac{\alpha_t}{N}\sum_{n=1}^N \sum_{\substack{n'=1}}^N \frac{ \tilde\bh^s_{n}(t)^H \bD_n \bh_{n'}^s(t)}{c_n}  \left(\Delta \theta_{n',s}(t) +j \Delta \theta_{n',s+\frac{d}{2}}(t) \ \right)\\
y^s_4(t)&=\frac{1}{N}\sum_{n=1}^N\frac{ \left(\bh^{s}_{n}(t)+\tilde\bh^s_{n}(t)\right)^H\bD_n\bz^s(t)}{c_n}.
\end{align}
\label{yis}
\end{subequations}
Accordingly, we define for $p\in[4]$
\begin{align}
\Delta{\widehat{\theta}}_{p,i}(t)&= \begin{cases} \frac{Re(y^i_p(t))}{\alpha_t}, \quad {\rm if} \quad 1\leq i\leq d/2,\\ \frac{Im(y^{i-d/2}_p(t))}{\alpha_t}, \quad {\rm if} \quad  d/2+1\leq i\leq {d},
\end{cases}
\label{theta_hat}
\end{align}
and the estimate of the $i$-the entry of the global model update can be written as
\begin{align*}
\Delta{\widehat{\theta}}_{i}(t)&= \sum_{p = 1}^4\Delta{\widehat{\theta}}_{p,i}(t).
\end{align*}
\noindent Define $\bnu(t)$ as
$$
\bnu(t+1) = \btt(t) + \Delta\btt(t),
$$
where 
$$
\Delta\btt(t) = \frac{1}{N} \sum_{n=1}^N\Delta\btt_n(t).
$$
Recall that
$$
\btt(t+1)  = \btt(t) + \Delta\widehat\btt(t).
$$
The objective is to bound $\ex\left\|\btt(t+1) -\btt^*\right\|^2$, where the expectation is taken over all randomness, namely the channels, the channel estimation error, the noise, and the stochastic gradients. We start by rewriting $\left\|\btt(t+1) -\btt^*\right\|^2$ as
\begin{align}
\left\|\btt(t+1) -\btt^*\right\|^2& = \left\|\btt(t+1) -\bnu(t+1)+\bnu(t+1) -\btt^*\right\|^2\nonumber\\
& = \left\|\btt(t+1) -\bnu(t+1)\right\|^2+\left\|\bnu(t+1) -\btt^*\right\|^2\label{eq1}\\
&\quad +2\left\langle \btt(t+1) -\bnu(t+1), \bnu(t+1) -\btt^*\right\rangle\nonumber.
\end{align}
It takes no effort to verify that 
\begin{align*}
\ex \left\langle \btt(t+1) -\bnu(t+1), \bnu(t+1) -\btt^*\right\rangle &= 
\ex \left\langle \Delta\widehat\btt(t) -\Delta\btt(t), \btt(t)+\Delta\btt(t) -\btt^*\right\rangle = 0, 
\end{align*}
where the last result is obtained by noting that the local model updates at global round $t$ are independent of the channels and noise during the same global round, and the expectation over $\bh_n^s(t)$, $\tilde \bh_n^s(t)$, and $\bz^s(t)$ of $\Delta\widehat\btt(t) $ is $\ex \Delta\widehat\btt(t) = \Delta\btt(t) $.

It remains to bound the first two terms of the RHS of \eqref{eq1}, which is the objective of the following two lemmas.
\begin{lemma}
\begin{align}
\ex\left\|\btt(t+1) -\bnu(t+1)\right\|^2 \leq \left(\frac{2}{N} +\frac{(\gamma+\tilde\gamma)}{2N} -\frac{\gamma}{2N^2} \right)\eta_t^2\tau^2G^2+ \frac{d}{2N\alpha_t^2}(\kappa +\tilde\kappa)\sigma_z^2.
\label{firstbound}
\end{align}
\label{firstboundlem}
\end{lemma}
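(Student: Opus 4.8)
The plan is to start from the identity $\btt(t+1)-\bnu(t+1)=\Delta\widehat\btt(t)-\Delta\btt(t)$ and exploit the additive split $\Delta\widehat\theta_i(t)=\sum_{p=1}^4\Delta\widehat\theta_{p,i}(t)$ induced by \eqref{yis}--\eqref{theta_hat}. Collecting the coordinates of each $p$ into a vector $\be_p(t)$, with $\be_1$ recentered by subtracting $\Delta\btt(t)$, I would write the error as $\Delta\widehat\btt(t)-\Delta\btt(t)=\be_1(t)+\be_2(t)+\be_3(t)+\be_4(t)$, where $\be_1$ is the fluctuation of the signal term about its mean, $\be_2,\be_3$ are the two interference contributions, and $\be_4$ is the noise contribution. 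The enabling observation is that the self-coefficient $a_n\triangleq \bh^s_n(t)^H\bD_n\bh^s_n(t)/c_n$ is real with $\ex a_n=\tr(\bD_n\bR_n)/c_n=1$, so $\be_1$ is genuinely zero-mean, consistent with the already-established $\ex\Delta\widehat\btt(t)=\Delta\btt(t)$.

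The second step is to show the four contributions are pairwise uncorrelated, so that $\ex\|\Delta\widehat\btt(t)-\Delta\btt(t)\|^2=\sum_{p=1}^4\ex\|\be_p(t)\|^2$. This rests on three independence facts, all valid because the local updates are computed before transmission and are therefore independent of the round-$t$ channels, estimation errors, and noise: the estimation error $\tilde\bh$ and the noise $\bz$ are zero-mean and independent of $\bh$, which kills every cross term involving $\be_3$ or $\be_4$; and the surviving cross term $\ex\langle\be_1,\be_2\rangle$ vanishes because the relevant mixed channel moments, e.g. $\ex[(a_n-1)\,\bh_n^H\bD_n\bh_{n'}]$, factor through an odd moment of a circularly symmetric Gaussian.

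The third step is the per-term variance bookkeeping, where each term maps onto one block of the bound. For $\be_1$ I would use independence across clients together with $\var(a_n)=\tr(\bR_n\bD_n\bR_n\bD_n)/c_n^2\le 1$ (from $\tr(\bR_{n,\ell}^2)\le(\tr\bR_{n,\ell})^2$ and $c_n=\sum_{\ell\in\mathcal{K}_n}\tr\bR_{n,\ell}$), which yields the channel-independent term $\tfrac{2}{N}\eta_t^2\tau^2G^2$. For $\be_2,\be_3$ I would reuse the main-text second moments $\ex|\bh_n^H\bD_n\bh_{n'}|^2=\tr(\bR_n\bD_n\bR_{n'}\bD_n)$ and $\ex|\tilde\bh_n^H\bD_n\bh_{n'}|^2=\tr(\bC_n\bD_n\bR_{n'}\bD_n)$, note that these bilinear forms are circularly symmetric so the real and imaginary parts each carry half the second moment, and bound the normalized traces by $\gamma$ and $\tilde\gamma$; the exclusion $n'\neq n$ in interference~1 is what produces the $-\tfrac{\gamma}{2N^2}$ correction. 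For $\be_4$ the noise is Gaussian given the channels, with $\ex|y_4^s|^2=\tfrac{\sigma_z^2}{N^2}\sum_n(1/c_n+\tr(\bC_n\bD_n)/c_n^2)$, and summing the $\tfrac12$-weighted real/imaginary parts over all $d$ coordinates gives exactly $\tfrac{d}{2N\alpha_t^2}(\kappa+\tilde\kappa)\sigma_z^2$. Throughout, the per-client update magnitude is controlled by $\ex\|\Delta\btt_n(t)\|^2\le\eta_t^2\tau^2G^2$, which follows from $\Delta\btt_n=-\eta_t\sum_{i=0}^{\tau-1}F_{n,\xi_{n,i}^t}'(\btt_i^n)$, Jensen's inequality, and Assumption~\ref{assump1}(iii).

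The main obstacle I anticipate is this variance bookkeeping for the interference terms rather than any single inequality: one must keep the complex encoding straight (which pair of real coordinates each subcarrier carries), verify carefully that only the diagonal pairings $n=m,\ n'=m'$ survive in the fourth-order channel moments $\ex[\beta_{n,n'}\bar\beta_{m,m'}]$ even though distinct clients share serving APs through the overlap of the $\bD_n$, and track the circular-symmetry factors of $\tfrac12$ so that the interference and noise contributions collapse precisely onto the constants $\gamma,\tilde\gamma,\kappa,\tilde\kappa$ with the stated $1/N$ and $1/N^2$ scalings.
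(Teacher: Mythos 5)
Your plan follows the paper's own proof (Appendices~A and~B) step for step: the same four-way split of $\Delta\widehat\btt(t)-\Delta\btt(t)$ induced by \eqref{yis}--\eqref{theta_hat}, the same observation that the cross terms vanish because the round-$t$ updates are independent of the round-$t$ channels, estimation errors, and noise, and the same per-term Gaussian second-moment computations feeding the bound $\ex\|\Delta\btt_n(t)\|^2\le\eta_t^2\tau^2G^2$. Your handling of the signal term (via $\var\bigl(\bh_n^s(t)^H\bD_n\bh_n^s(t)/c_n\bigr)=\tr(\bR_n\bD_n\bR_n\bD_n)/c_n^2\le 1$) and of the noise term (which indeed comes out exactly as $\frac{d}{2N\alpha_t^2}(\kappa+\tilde\kappa)\sigma_z^2$) are both sound.

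The gap is your final claim that the interference bookkeeping ``collapses precisely onto'' the stated constants; it cannot. The circular-symmetry factors of $\tfrac12$ you are relying on are consumed, not retained: writing $\beta_{n,n'}=\bh_n^s(t)^H\bD_n\bh_{n'}^s(t)/c_n$, each real coordinate $i\le d/2$ contributes $\frac{1}{2N^2}\sum_n\sum_{n'\neq n}\ex|\beta_{n,n'}|^2\bigl(\Delta\theta_{n',i}(t)^2+\Delta\theta_{n',i+d/2}(t)^2\bigr)$, and the imaginary block contributes the same amount again (the swapped-pair cross terms cancel between the two blocks), so summing over all $d$ coordinates gives
\begin{equation*}
\frac{1}{N^2}\sum_{n=1}^N\sum_{\substack{n'=1\\ n'\neq n}}^N\frac{\tr(\bR_n\bD_n\bR_{n'}\bD_n)}{c_n^2}\,\ex\|\Delta\btt_{n'}(t)\|^2\;\le\;\frac{\gamma(N-1)}{N^2}\sum_{n'=1}^N\ex\|\Delta\btt_{n'}(t)\|^2\;\le\;\frac{\gamma(N-1)}{N}\,\eta_t^2\tau^2G^2,
\end{equation*}
an order-$\gamma$ term, whereas the lemma's $\frac{\gamma}{2N}-\frac{\gamma}{2N^2}=\frac{\gamma(N-1)}{2N^2}$ is smaller by a factor of $2N$; the $\tilde\gamma$ piece suffers the identical factor-of-$2N$ shortfall. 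So your outline, executed honestly, proves the inequality with $(\gamma+\tilde\gamma)\eta_t^2\tau^2G^2$ in place of $\bigl(\frac{\gamma+\tilde\gamma}{2N}-\frac{\gamma}{2N^2}\bigr)\eta_t^2\tau^2G^2$, which is not the statement as written. In fairness, this is exactly where the paper itself is internally inconsistent: Appendix~B assembles $\frac{\gamma+\tilde\gamma}{N}\sum_n\ex\|\Delta\btt_n(t)\|^2$ and then drops a factor of $N$ in the last substitution, and even that conclusion does not reproduce the constants displayed in \eqref{firstbound}. But judged as a proof of the stated bound, your proposal has a genuine hole at precisely this point: no step in it (nor in the paper) actually yields the advertised $1/N$ and $1/N^2$ scalings on the $\gamma$ and $\tilde\gamma$ contributions.
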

\begin{proof}
See Appendix B.
\end{proof}
\begin{lemma}
\begin{align}
\ex\left\|\bnu(t+1) -\btt^*\right\|^2\leq &(1-\mu\eta_t(\tau-\eta_t(\tau-1)))\ex\left\|\btt(t) -\btt^*\right\|^2+2\eta_t(\tau-1)\Gamma\\&+(1+\mu(1-\eta_t))\eta_t^2G^2\frac{\tau(\tau-1)(2\tau-1)}{6}+\eta_t^2(\tau^2+\tau-1)G^2.
\label{secondbound}
\end{align}
\end{lemma}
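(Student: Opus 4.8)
\textbf{The plan} is to treat $\bnu(t+1)$ as the ideal (noise- and fading-free) aggregate of $\tau$ steps of local SGD started from the common point $\btt(t)$. Substituting the local recursion \eqref{SGD} into $\Delta\btt^n(t)=\btt_\tau^n(t)-\btt(t)$ gives $\bnu(t+1)-\btt^* = (\btt(t)-\btt^*) - \eta_t\sum_{i=0}^{\tau-1}\bar{\bg}_i(t)$, where $\bar{\bg}_i(t)=\frac1N\sum_{n=1}^N F_{n,\xi_{n,i}^t}'(\btt_i^n(t))$ and $\btt_0^n(t)=\btt(t)$. Expanding the squared norm produces three pieces: the anchor $\|\btt(t)-\btt^*\|^2$, a linear cross term $-2\eta_t\sum_i\langle\bar{\bg}_i(t),\btt(t)-\btt^*\rangle$, and the quadratic term $\eta_t^2\|\sum_i\bar{\bg}_i(t)\|^2$. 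Conditioning each mini-batch gradient on its own iterate replaces $F_{n,\xi_{n,i}^t}'(\btt_i^n(t))$ by the true local gradient $F_n'(\btt_i^n(t))$ inside the linear term, since $\ex[F_{n,\xi_{n,i}^t}'(\btt_i^n(t))\mid\btt_i^n(t)]=F_n'(\btt_i^n(t))$ by the tower property.

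Next I would bound the quadratic term by Cauchy--Schwarz (Jensen) and Assumption \ref{assump1}(iii): $\ex\|\sum_{i=0}^{\tau-1}\bar{\bg}_i(t)\|^2\le\tau\sum_i\ex\|\bar{\bg}_i(t)\|^2\le\tau^2G^2$, which is the source of the $\tau^2$ inside the final $\eta_t^2(\tau^2+\tau-1)G^2$. For the linear term I would invoke $\mu$-strong convexity, Assumption \ref{assump1}(i). The key is to route the contraction onto $\|\btt(t)-\btt^*\|^2$ rather than onto the local iterate: writing $\btt(t)-\btt^* = (\btt_i^n(t)-\btt^*)+(\btt(t)-\btt_i^n(t))$, strong convexity applied to the first summand yields $\langle F_n'(\btt_i^n(t)),\btt_i^n(t)-\btt^*\rangle \ge F_n(\btt_i^n(t))-F_n^* + \tfrac\mu2\|\btt_i^n(t)-\btt^*\|^2$, while the second summand is controlled by Cauchy--Schwarz together with $\|F_n'(\btt_i^n(t))\|\le G$ (which follows from (iii) by Jensen). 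Summing the $\tfrac\mu2$ contributions over the $\tau$ local indices produces the leading $-\mu\eta_t\tau\|\btt(t)-\btt^*\|^2$, i.e. the $-\mu\eta_t\tau$ inside $A(t)$, and the residual $O(\eta_t^2)$ conversion between $\|\btt_i^n(t)-\btt^*\|^2$ and $\|\btt(t)-\btt^*\|^2$ supplies the correction $+\mu\eta_t^2(\tau-1)$.

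The remaining two ingredients are the heterogeneity and drift terms. The function-value gaps $F_n(\btt_i^n(t))-F_n^*$ left over from strong convexity, once averaged over the clients and combined with the definition $\Gamma=F^*-\sum_n\tfrac{D_n}{D}F_n^*$ and the optimality inequality $F(\btt(t))\ge F^*$, collapse to the single term $2\eta_t(\tau-1)\Gamma$. The local drift is handled purely through the gradient bound: since $\btt_i^n(t)-\btt(t)=-\eta_t\sum_{j=0}^{i-1}F_{n,\xi_{n,j}^t}'(\btt_j^n(t))$, Jensen and (iii) give $\ex\|\btt_i^n(t)-\btt(t)\|^2\le\eta_t^2 i^2 G^2$, and summing $\sum_{i=0}^{\tau-1}i^2=\tfrac{\tau(\tau-1)(2\tau-1)}{6}$ yields the $\tfrac{\tau(\tau-1)(2\tau-1)}{6}\eta_t^2G^2$ contribution, whose $(1+\mu(1-\eta_t))$ prefactor emerges after combining the direct drift with the $\mu$-weighted drift produced by the conversion step above.

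Finally, collecting all pieces and invoking $0<\eta_t\le\min\{1,1/(\mu\tau)\}$ keeps the contraction factor $A(t)$ in $[0,1)$ and lets me discard the non-negative optimality-gap terms $F(\btt(t))-F^*$. \textbf{The main obstacle} will not be any single inequality --- each is elementary --- but the $O(\eta_t^2)$ bookkeeping needed to land on the exact constants $(1+\mu(1-\eta_t))$, $(\tau^2+\tau-1)$, and the coefficient of $\Gamma$. In particular, the derivation must avoid $\mathcal L$-smoothness entirely, so that no Lipschitz constant leaks into the bound (in contrast to the standard FedAvg analyses); this forces both the drift control and the $\|\btt_i^n(t)-\btt^*\|^2\to\|\btt(t)-\btt^*\|^2$ conversion to rely only on Cauchy--Schwarz and the gradient bound $G$, and it demands careful sign tracking of the cross terms so that the non-negative function-value terms can be dropped cleanly while the positive $\Gamma$ coefficient survives.
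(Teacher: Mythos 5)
Your proposal is correct and follows exactly the classical local-SGD strong-convexity argument that the paper itself does not reproduce but merely defers to (its "proof" of this lemma is a citation to \cite{amiri2021blind,amiri2020update}). Your decomposition, the Young-inequality conversion of $\|\btt_{i}^n(t)-\btt^*\|^2$ onto $\|\btt(t)-\btt^*\|^2$ (which yields the contraction $1-\mu\eta_t(\tau-\eta_t(\tau-1))$), the drift bound $\ex\|\btt_i^n(t)-\btt(t)\|^2\le\eta_t^2 i^2G^2$ with $\sum_{i=1}^{\tau-1}i^2=\tfrac{\tau(\tau-1)(2\tau-1)}{6}$, and the handling of the $i=0$ term via $F(\btt(t))\ge F^*$ so that only $\tau-1$ rounds contribute to the coefficient $2\eta_t(\tau-1)\Gamma$, are precisely the steps of that standard proof and do reproduce the stated constants.
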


\begin{proof}
The proof is classical in the convergence analysis of FL since the terms involved do not depend on the transmission scheme and can be found in several papers in literature such as \cite{amiri2021blind,amiri2020update}.
\end{proof}

\section*{Appendix B\\Proof of Lemma \ref{firstboundlem}}
\label{prooflem1}
\noindent First, we have
$$
\ex\left\|\btt(t+1) -\bnu(t+1)\right\|^2=  \ex\left\|\Delta\widehat\btt(t) -\Delta\btt(t)\right\|^2 = \sum_{i=1}^d\ex \left(\Delta\widehat\theta_i(t)-\Delta\theta_i(t)\right)^2,
$$
where $\Delta\theta_i(t)$ is the $i$-th entry of  $\Delta\btt(t)$. Note that 
$$
\Delta\widehat\theta_i(t) = \sum_{p=1}^4 \Delta\widehat\theta_{i,p}(t),
$$
where $\Delta\widehat\theta_{i,p}(t), \ p\in[4],$ are defined in \eqref{yis}.
Given that the channels and the noise are independent of the model updates at the global round $t$, it can be easily seen that
$$
\ex \left(\Delta\widehat\theta_i(t)-\Delta\theta_i(t)\right)^2= \ex \left(\Delta\widehat\theta_{i,1}(t)-\Delta\theta_i(t)\right)^2 +\sum_{p=2}^4\ex \left(\Delta\widehat\theta_{i,p}(t)\right)^2.
$$
\begin{lemma}
\begin{align}
 \sum_{i=1}^d\ex \left(\Delta\widehat\theta_{i,1}(t)-\Delta\theta_i(t)\right)^2 \leq \frac{\gamma}{N^2}\sum_{n=1}^N\ex \|\Delta\btt_n(t)\|^2,
\end{align}
with $\gamma = \max_{n,n'}\frac{1}{c_n^2}\tr(\bR_n\bD_{n} \bR_{n'} \bD_{n})$.
\label{term1}
\end{lemma}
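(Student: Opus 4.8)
The plan is to unwind the definition of $\Delta\widehat\theta_{i,1}(t)$ in \eqref{theta_hat}, observe that the relevant channel quadratic form is real with mean exactly $c_n$, and then reduce the left-hand side to a sum of variances of normalized Gaussian quadratic forms that are each controlled by $\gamma$.

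First I would note that since $\bD_n$ is Hermitian and $\bh_n^s(t)^H\bD_n\bh_n^s(t)=\sum_{\ell\in\mathcal{K}_n}\|\bh_{n,\ell}^s(t)\|^2$ is real and nonnegative, taking the real (resp.\ imaginary) part in \eqref{theta_hat} simply selects the corresponding real coordinate of the model update. Writing $g_n^s(t)=\frac{\bh_n^s(t)^H\bD_n\bh_n^s(t)}{c_n}$, this gives, for each coordinate $i$ associated with its subcarrier index $s(i)$,
$$
\Delta\widehat\theta_{i,1}(t)-\Delta\theta_i(t)=\frac{1}{N}\sum_{n=1}^N\bigl(g_n^{s(i)}(t)-1\bigr)\,\Delta\theta_{n,i}(t),
$$
where $\Delta\theta_i(t)=\frac1N\sum_n\Delta\theta_{n,i}(t)$ is the target average. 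Because $\ex\bigl[\bh_n^s(t)^H\bD_n\bh_n^s(t)\bigr]=\tr(\bD_n\bR_n)=c_n$, each coefficient $a_n:=g_n^{s(i)}(t)-1$ has zero mean.

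Next I would expand the square and take expectations. The local updates $\Delta\theta_{n,i}(t)$ are independent of the round-$t$ channels, and the channels $\{\bh_n^s(t)\}_n$ are independent across users; hence every cross term $n\neq n'$ factorizes and vanishes through $\ex[a_na_{n'}]=0$, leaving only the diagonal contribution
$$
\ex\bigl(\Delta\widehat\theta_{i,1}(t)-\Delta\theta_i(t)\bigr)^2=\frac{1}{N^2}\sum_{n=1}^N\ex[a_n^2]\,\ex[\Delta\theta_{n,i}(t)^2].
$$
The main computation is the variance $\ex[a_n^2]=\frac{1}{c_n^2}\var\bigl(\bh_n^s(t)^H\bD_n\bh_n^s(t)\bigr)$. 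Using the standard identity $\var(\bh^H\bA\bh)=\tr(\bA\bR\bA\bR)$ for $\bh\sim\mathcal{CN}(\bzero,\bR)$ and Hermitian $\bA$, with $\bA=\bD_n$, I obtain $\ex[a_n^2]=\frac{1}{c_n^2}\tr(\bR_n\bD_n\bR_n\bD_n)\leq\gamma$, the last inequality following directly from the definition of $\gamma$ by taking $n'=n$ in the maximum (this is consistent with the variance of $\bh_n^H\bD_n\bh_{n'}$ computed earlier in Section \ref{Proposed_implementation}).

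Finally I would sum over the coordinates. Substituting the variance bound and summing $i=1,\dots,d$ recombines $\sum_i\ex[\Delta\theta_{n,i}(t)^2]=\ex\|\Delta\btt_n(t)\|^2$, which yields the claimed bound. The only genuinely delicate points are (i) justifying the fourth-moment/variance identity for the Gaussian quadratic form $\bh_n^H\bD_n\bh_n$, and (ii) the bookkeeping that the real/imaginary decomposition pairs each entry $i$ with exactly one model-update coordinate, so that the coordinate sum reconstructs the full squared norm without double counting; everything else is mean-zero cancellation and the independence of channels across users and from the model updates.
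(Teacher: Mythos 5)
Your proposal is correct and follows essentially the same route as the paper's proof: expand the square, use the zero mean of the normalized quadratic form $\frac{1}{c_n}\bh_n^H\bD_n\bh_n - 1$ together with independence of channels across users and from the model updates to kill the cross terms, compute the per-user variance $\frac{1}{c_n^2}\tr(\bR_n\bD_n\bR_n\bD_n)$, bound it by $\gamma$ via the $n'=n$ choice in the maximum, and sum over coordinates to reassemble $\ex\|\Delta\btt_n(t)\|^2$. The only cosmetic difference is that you invoke the standard variance identity $\var(\bh^H\bA\bh)=\tr(\bA\bR\bA\bR)$ directly, whereas the paper derives it inline by expanding the second moment; the content is identical.
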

\begin{proof}
First, substituting $\Delta\widehat\theta_{i,1}(t)$ by its expression, we have for $1\leq i \leq d/2$
\begin{align}
\ex \left[\left(\Delta\widehat\theta_{i,1}(t)-\Delta\theta_i(t)\right)^2\right] &= \ex \left[\left( \frac{1}{N} \sum_{n=1}^N  \left(\frac{1}{c_n}\bh_n^i(t)^H\bD_n\bh_n^i(t)-1\right)\Delta\theta_{n,i}(t)\right)^2\right]\nonumber\\
&\overset{(a)}{=}\frac{1}{N^2} \sum_{n=1}^N\ex \left[ \left(\frac{1}{c_n}\bh_n^i(t)^H\bD_n\bh_n^i(t)-1\right)^2\left(\Delta\theta_{n,i}(t) \ \right)^2\right]\nonumber\\
&\overset{(b)}{=}\frac{1}{N^2} \sum_{n=1}^N \frac{\tr\left(\bR_n\bD_n \bR_n \bD_n\right)}{c_n^2}\ex\left(\Delta\theta_{n,i}(t) \ \right)^2 \nonumber\\
&\overset{(c)}{\leq} \frac{\gamma}{N^2} \sum_{n=1}^N \ex\left(\Delta\theta_{n,i}(t)\right)^2,
\label{eq2}
\end{align}
where $(a)$ follows from the independence of channel vectors $\bh_n^i(t)$ from each other and from the model updates at global round $t$, and by noting that $\frac{1}{c_n}\ex\bh_n^i(t)^H\bD_n\bh_n^i(t) = 1, \ \forall n$. $(b)$ is obtained by computing the expectations over the channels as follows
\begin{align*}
\ex\left(\frac{1}{c_n}\bh_n^i(t)^H\bD_n\bh_n^i(t)-1\right)^2 &= \ex\left(\frac{1}{c_n}\bh_n^i(t)^H\bD_n\bh_n^i(t)\right)^2- 2 \ex\left(\frac{1}{c_n}\bh_n^i(t)^H\bD_n\bh_n^i(t)\right)+1\\
&= \frac{\tr (\bR_n\bD_n \bR_n \bD_n)+ \left(\tr(\bR_n\bD_n)\right)^2}{c_n^2}-\frac{2\tr(\bR_n\bD_n)}{c_n}+1\\
&=\frac{\tr (\bR_n\bD_n \bR_n \bD_n)}{c_n^2}.
\end{align*}
And $(c)$ follows from the definition of $\gamma$.
%$$
%\tr (\bR_n\bD_n \bR_n \bD_n)\leq \left[\tr( \bR_n \bD_n)\right]^2,
%$$
%which is true since $\bR_n\bD_n$ is positive semi-definite. 
Similarly, we obtain for $d/2+1\leq i\leq d$,
\begin{align}
\ex \left[\left(\Delta\widehat\theta_{i,1}(t)-\Delta\theta_i(t)\right)^2\right] &= \ex \left[\left( \frac{1}{N} \sum_{n=1}^N  \left(\frac{1}{c_n}\bh_n^{i-d/2}(t)^H\bD_n\bh_n^{i-d/2}(t)-1\right)\Delta\theta_{n,i}(t)\right)^2\right]\nonumber\\
&{\leq} \frac{\gamma}{N^2} \sum_{n=1}^N \ex\left(\Delta\theta_{n,i}(t)\right)^2.
\label{eq3}
\end{align}
Lemma \ref{term1} follows readily from \eqref{eq2} and \eqref{eq3}.
\end{proof}

\begin{lemma}
\begin{align}
 \sum_{i=1}^d\ex \left[\left(\Delta\widehat\theta_{i,2}(t)\right)^2\right] \leq  \frac{\gamma(N-1)}{N^2}\sum_{n=1}^N  \ex \|\Delta\btt_n(t)\|^2
\end{align}
with $\gamma = \max_{n,n'}\frac{1}{c_n^2}\tr(\bR_n\bD_{n} \bR_{n'} \bD_{n})$.
\label{term2}
\end{lemma}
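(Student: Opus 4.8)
The plan is to mirror the argument used for Lemma \ref{term1}, treating the off-diagonal interference term as a zero-mean quadratic form in the channel vectors and reducing its second moment to a sum of per-pair variances. First I would substitute the definition of $\Delta\widehat\theta_{i,2}(t)$ from \eqref{yis}: for $1\le i\le d/2$ it equals $\frac{1}{N}\sum_{n}\sum_{n'\ne n}\frac{1}{c_n}\,\mathrm{Re}\big(\bh_n^i(t)^H\bD_n\bh_{n'}^i(t)(\Delta\theta_{n',i}(t)+j\Delta\theta_{n',i+d/2}(t))\big)$, with the analogous imaginary-part expression for $d/2+1\le i\le d$. Because the channels at round $t$ are independent of the local updates $\Delta\btt_{n'}(t)$, I would take the channel expectation first (conditionally on the updates) and then over the updates. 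Each summand is zero-mean since $\ex[\bh_n^i(t)^H\bD_n\bh_{n'}^i(t)]=0$ for $n\ne n'$, so expanding the square reduces, term by term, to the second moments of the individual summands.

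Second, I would carry out the per-pair variance computation. Writing $w=\bh_n^i(t)^H\bD_n\bh_{n'}^i(t)$, this is a circularly-symmetric complex variable with $\ex|w|^2=\var(\bh_n^H\bD_n\bh_{n'})=\tr(\bR_n\bD_n\bR_{n'}\bD_n)$ (the variance formula recorded in Section \ref{Proposed_implementation}), hence $\ex[\mathrm{Re}(w)^2]=\ex[\mathrm{Im}(w)^2]=\tfrac12\tr(\bR_n\bD_n\bR_{n'}\bD_n)$ and $\ex[\mathrm{Re}(w)\mathrm{Im}(w)]=0$. Expanding $\mathrm{Re}(w(a+jb))=a\,\mathrm{Re}(w)-b\,\mathrm{Im}(w)$ with $a=\Delta\theta_{n',i}(t)$, $b=\Delta\theta_{n',i+d/2}(t)$ then gives $\ex[(\cdot)^2]=\tfrac12\tr(\bR_n\bD_n\bR_{n'}\bD_n)(a^2+b^2)$, and the imaginary-part entries contribute the same factor. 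Summing the real half $i=1,\dots,d/2$ and the imaginary half $i=d/2+1,\dots,d$ reconstitutes $\|\Delta\btt_{n'}(t)\|^2$ from each half, producing $\frac{1}{N^2}\sum_n\sum_{n'\ne n}\frac{1}{c_n^2}\tr(\bR_n\bD_n\bR_{n'}\bD_n)\,\ex\|\Delta\btt_{n'}(t)\|^2$. Finally, bounding $\frac{1}{c_n^2}\tr(\bR_n\bD_n\bR_{n'}\bD_n)\le\gamma$ and observing that each fixed $n'$ admits $N-1$ indices $n\ne n'$ yields the stated factor $\gamma(N-1)/N^2$.

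The main obstacle is justifying that the cross second moments between distinct ordered pairs genuinely vanish. Most do, by independence and circular symmetry, but the conjugate pairing $(n,n')\leftrightarrow(n',n)$ shares both channel vectors and survives phase averaging, contributing a term proportional to $\tr(\bD_n\bR_{n'}\bD_{n'}\bR_n)$ that is in general a signed cross-correlation of $\Delta\btt_n(t)$ and $\Delta\btt_{n'}(t)$ and so cannot simply be dominated by the diagonal sum. I would dispose of it through the user-centric cluster structure: since $\tr(\bD_n\bR_{n'}\bD_{n'}\bR_n)=\sum_\ell\tr(\bD_{n,\ell}\bR_{n',\ell}\bD_{n',\ell}\bR_{n,\ell})$ receives contributions only from APs $\ell$ that serve \emph{both} $n$ and $n'$, it vanishes identically whenever the serving clusters of $n$ and $n'$ are disjoint. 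Establishing that this cross term is negligible for the remaining overlapping pairs — rather than silently discarding it as in the parallel treatment of ``interf 1'' in Section \ref{Proposed_implementation} — is the delicate point I expect to require the most care.
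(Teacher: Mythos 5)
Your diagonal computation --- the per-pair second moments $\ex[\mathrm{Re}(w)^2]=\ex[\mathrm{Im}(w)^2]=\tfrac{1}{2}\tr(\bR_n\bD_n\bR_{n'}\bD_n)$, the reconstitution of $\ex\|\Delta\btt_{n'}(t)\|^2$ from the two coordinate halves, and the $(N-1)$ counting --- is correct and coincides with the paper's. The gap is exactly the point you flagged yourself: the conjugate pairing $(n,n')\leftrightarrow(n',n)$. Your proposed resolution does not close it. Lemma \ref{term2} is a non-asymptotic inequality that must hold for every AP-UT association and every finite $N$ (it feeds into Theorem 1 at finite $N$), so there is no limit in which ``negligible'' can be given meaning. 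For clients whose serving clusters overlap, $\tr(\bR_n\bD_n\bR_{n'}\bD_{n'})$ is nonzero in general and, as you correctly observe, it multiplies a signed cross-correlation of $\Delta\btt_n(t)$ and $\Delta\btt_{n'}(t)$ that cannot be dominated by the diagonal sum. As written, your argument proves the lemma only under the unstated extra hypothesis that serving clusters are pairwise disjoint.

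What is missing is that no negligibility argument is needed: the pairing term cancels \emph{exactly} when summed over all $d$ coordinates. Let $w_1=\frac{1}{c_n}\bh_n^i(t)^H\bD_n\bh_{n'}^i(t)$, $w_2=\frac{1}{c_{n'}}\bh_{n'}^i(t)^H\bD_{n'}\bh_n^i(t)$, and $T_{nn'}=\frac{1}{c_nc_{n'}}\tr(\bR_n\bD_n\bR_{n'}\bD_{n'})$. Circular symmetry gives $\ex[w_1\bar{w}_2]=0$ but $\ex[w_1w_2]=T_{nn'}$, whence $\ex[\mathrm{Re}(w_1)\mathrm{Re}(w_2)]=T_{nn'}/2$, $\ex[\mathrm{Im}(w_1)\mathrm{Im}(w_2)]=-T_{nn'}/2$, and the mixed $\mathrm{Re}$/$\mathrm{Im}$ covariances vanish. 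Hence for a coordinate $1\le i\le d/2$ the pairing contributes (up to the common prefactor) $T_{nn'}\bigl(\Delta\theta_{n,i}(t)\Delta\theta_{n',i}(t)-\Delta\theta_{n,i+d/2}(t)\Delta\theta_{n',i+d/2}(t)\bigr)$, while for the matching coordinate $i+d/2$ it contributes the same quantity with the opposite sign, because there $\mathrm{Im}\{wz\}=\mathrm{Im}(w)a+\mathrm{Re}(w)b$ pairs the negative covariance $-T_{nn'}/2$ with the $a$-products and the positive one with the $b$-products. Summing over $i\in[d]$, the cross terms therefore vanish identically for every pair $(n,n')$, with no assumption on the clusters, leaving precisely your diagonal bound. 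This exact cancellation is what the paper's proof carries out --- it keeps the pairing term explicitly in both halves of the computation and lets it cancel in the final sum over $i$ --- and it is the single ingredient needed to turn your proposal into a complete proof.
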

\begin{proof}
For $1\leq i\leq d/2$, replacing $\Delta\widehat\theta_{i,2}(t)$ by its expression given in \eqref{theta_hat}, we have
\begin{align*}
\ex \left[\left(\Delta\widehat\theta_{i,2}(t)\right)^2\right] 
 &=\!\ex\!\left( \frac{1}{N} \sum_{n=1}^N\sum_{\substack{n'=1\\n'\neq n}}^N Re\left\{ \left(\frac{1}{c_n}\bh_n^i(t)^H\bD_n\bh_{n'}^i(t)\right)\left(\Delta\theta_{n',i}(t)+ j\Delta\theta_{n',i+d/2}(t)\right)\right\}\right)^2\\
  & \overset{(a)}{=}\!\ex\!\left[\frac{1}{N^2}\!\sum_{n=1}^N\!\sum_{\substack{n'=1\\n'\neq n}}^N\!\left(\!Re\left\{ \left(\frac{1}{c_n}\bh_n^i(t)^H\bD_n\bh_{n'}^i(t)\!\right)\left(\Delta\theta_{n',i}(t)+ j\Delta\theta_{n',i+d/2}(t)\right)\right\}\!\right)^2
  \right.\\
  & \left. \quad + Re\left\{ \left(\frac{1}{c_n}\bh_n^i(t)^H\bD_n\bh_{n'}^i(t)\right)\left(\Delta\theta_{n',i}(t)+ j\Delta\theta_{n',i+d/2}(t)\right)\right\}\right. \\ & \left. \quad \quad Re\left\{ \left(\frac{1}{c_{n'}}\bh_{n'}^i(t)^H\bD_{n'}\bh_{n}^i(t)\right)\left(\Delta\theta_{n,i}(t)+ j\Delta\theta_{n,i+d/2}(t)\right)\right\}\right]\\
  & \overset{(b)}{=}\ex \left[\frac{1}{2N^2} \sum_{n=1}^N\sum_{\substack{n'=1\\n'\neq n}}^N \frac{1}{c_n^2}\tr(\bR_n\bD_n \bR_{n'} \bD_n)\left(\Delta\theta_{n',i}(t)^2+ \Delta\theta_{n',i+d/2}(t)^2\right)
  \right.\\
  & \left. \quad + \frac{1}{c_nc_{n'}}\tr(\bR_n\bD_{n} \bR_{n'} \bD_{n'})\left(\Delta\theta_{n,i}(t)\Delta\theta_{n',i}(t)-\Delta\theta_{n,i+d/2}(t)\Delta\theta_{n',i+d/2}(t)\right)\right]
\end{align*}
where $(a)$ follows from the independence of the channels $\bh_{n}^i(t), \ \forall n,i$, and $(b)$ is obtained by taking the expectation over the channels.
Similarly, for $ d/2+1\leq i\leq d$, following the same steps, we get
\begin{align*}
\ex \left[\left(\Delta\widehat\theta_{i,2}(t)\right)^2\right]&=\!\ex\!\left(\!\frac{1}{N} \sum_{n=1}^N\sum_{\substack{n'=1\\n'\neq n}}^N Im\left\{ \left(\frac{1}{c_n}\bh_n^i(t)^H\bD_n\bh_{n'}^i(t)\right)\left(\Delta\theta_{n',i-d/2}(t)+ j\Delta\theta_{n',i}(t)\right)\right\}\right)^2\\ 
&= \ex\!\left[\!\frac{1}{N^2}\!\sum_{n=1}^N\!\sum_{\substack{n'=1\\n'\neq n}}^N\!\left( Im\left\{ \left(\!\frac{1}{c_n}\bh_n^i(t)^H\bD_n\bh_{n'}^i(t)\!\right)\left(\Delta\theta_{n',i-d/2}(t)+ j\Delta\theta_{n',i}(t)\right)\right\}\right)^2
  \right.\\
  & \left. \quad + Im\left\{ \left(\frac{1}{c_n}\bh_n^i(t)^H\bD_n\bh_{n'}^i(t)\right)\left(\Delta\theta_{n',i-d/2}(t)+ j\Delta\theta_{n',i}(t)\right)\right\}\right. \\ & \left. \quad \quad Im\left\{ \left(\frac{1}{c_{n'}}\bh_{n'}^i(t)^H\bD_{n'}\bh_{n}^i(t)\right)\left(\Delta\theta_{n,i-d/2}(t)+ j\Delta\theta_{n,i}(t)\right)\right\}\right]\\
  &= \ex \left[\frac{1}{2N^2} \sum_{n=1}^N\sum_{\substack{n'=1\\n'\neq n}}^N \left(\frac{1}{c_n^2}\tr(\bR_n\bD_n \bR_{n'} \bD_n)\left(\Delta\theta_{n',i-d/2}(t)^2+ \Delta\theta_{n',i}(t)^2\right)
  \right. \right.\\
  & \left.\left. \quad +\frac{1}{c_n c_{n'}}\tr(\bR_n\bD_{n} \bR_{n'} \bD_{n'})\!\left(\Delta\theta_{n,i}(t)\Delta\theta_{n',i}(t)-\Delta\theta_{n,i-d/2}(t)\Delta\theta_{n',i-d/2}(t)\right)\right)\!\right].
\end{align*}
Summing over all $i\in[d]$, it follows that 
\begin{align*}
\sum_{i=1}^d\ex \left[\left(\Delta\widehat\theta_{i,2}(t)\right)^2\right]& =\frac{1}{N^2}\sum_{n=1}^N\sum_{\substack{n'=1\\n'\neq n}}^N  \frac{1}{c_n^2}\tr(\bR_{n'}\bD_n \bR_{n} \bD_n)\ex \|\Delta\btt_n(t)\|^2\\
& \leq \frac{\gamma(N-1)}{N^2}\sum_{n=1}^N  \ex \|\Delta\btt_n(t)\|^2,
\end{align*}
where the last inequality follows from the definition $\gamma$.
\end{proof}

\begin{lemma}
\begin{align}
 \sum_{i=1}^d\ex \left[\left(\Delta\widehat\theta_{i,3}(t)\right)^2\right] \leq \frac{\tilde{\gamma}}{N}\sum_{n=1}^N  \ex \|\Delta\btt_n(t)\|^2,
\end{align}
where $\tilde\gamma = \max_{n,n'}\frac{1}{c_n^2}\tr(\bC_n\bD_{n} \bR_{n'} \bD_{n})$.
\label{term4}
\end{lemma}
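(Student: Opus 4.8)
The plan is to follow the template already used for Lemmas \ref{term1} and \ref{term2}: substitute the explicit expression for $\Delta\widehat\theta_{i,3}(t)$ read off from \eqref{yis} and \eqref{theta_hat}, expand the square, and evaluate the expectation by exploiting the statistics of the channels. For $1\le i\le d/2$ one writes
$$
\Delta\widehat\theta_{i,3}(t)=\frac{1}{N}\sum_{n=1}^N\sum_{n'=1}^N Re\left\{\frac{1}{c_n}\tilde\bh_n^i(t)^H\bD_n\bh_{n'}^i(t)\left(\Delta\theta_{n',i}(t)+j\Delta\theta_{n',i+d/2}(t)\right)\right\},
$$
with the analogous $Im\{\cdot\}$ expression for $d/2+1\le i\le d$. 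Squaring yields a quadruple sum over indices $(n_1,n_1',n_2,n_2')$, and the expectation is taken jointly over the estimation errors $\tilde\bh$, the true channels $\bh$, and the model updates, the latter being independent of all channel quantities at round $t$.

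The decisive step, and the point where this lemma departs from Lemma \ref{term2}, is the decoupling of the quadruple sum. Since the estimation errors $\{\tilde\bh_n^i(t)\}_n$ are independent across $n$, zero mean, and independent of every true channel $\bh_{n'}^i(t)$, conditioning on the true channels and averaging over $\tilde\bh$ forces $n_1=n_2$: any term with $n_1\neq n_2$ carries an isolated factor $\ex\tilde\bh_{n}^i(t)=\bzero$. A subsequent averaging over the independent, zero-mean true channels $\{\bh_{n'}^i(t)\}_{n'}$ then forces $n_1'=n_2'$. I will stress that, unlike in Lemma \ref{term2}, no index-swapped contribution (of the kind that produced the $\tr(\bR_n\bD_n\bR_{n'}\bD_{n'})$ cross terms) can arise here: the error $\tilde\bh$ occupies only the left slot and the true channel $\bh$ only the right slot, and because $\tilde\bh_n$ and $\bh_n$ are mutually independent with zero means, even the coincidence $n'=n$ contributes nothing off the diagonal. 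This is exactly why the bound carries a single factor $N$ in the denominator rather than the $(N-1)/N^2$ of Lemma \ref{term2}.

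On the surviving diagonal $n_1=n_2=n$, $n_1'=n_2'=n'$, I would set $w\triangleq\frac{1}{c_n}\tilde\bh_n^i(t)^H\bD_n\bh_{n'}^i(t)$ and observe that $w$ is circularly symmetric (rotating $\bh_{n'}^i$ by a phase leaves its law invariant while scaling $w$ by that phase), so that $\ex[Re\{w\}^2]=\ex[Im\{w\}^2]=\tfrac12\ex|w|^2$ and $\ex[Re\{w\}\,Im\{w\}]=0$. Evaluating the second moment by averaging first over $\bh_{n'}^i$ (which supplies $\bR_{n'}$) and then over $\tilde\bh_n^i$ (which supplies $\bC_n$), and using $\bD_n=\bD_n^H=\bD_n^2$, gives $\ex|w|^2=\frac{1}{c_n^2}\tr(\bC_n\bD_n\bR_{n'}\bD_n)$. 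The real/imaginary bookkeeping then pairs $\Delta\theta_{n',i}^2$ with $\Delta\theta_{n',i+d/2}^2$ so that, after summing over all $i\in[d]$, the two halves assemble into $\ex\|\Delta\btt_{n'}(t)\|^2$, leaving
$$
\sum_{i=1}^d\ex\left[\left(\Delta\widehat\theta_{i,3}(t)\right)^2\right]=\frac{1}{N^2}\sum_{n=1}^N\sum_{n'=1}^N\frac{\tr(\bC_n\bD_n\bR_{n'}\bD_n)}{c_n^2}\,\ex\|\Delta\btt_{n'}(t)\|^2.
$$
Bounding each coefficient by $\tilde\gamma=\max_{n,n'}\frac{1}{c_n^2}\tr(\bC_n\bD_n\bR_{n'}\bD_n)$ and collapsing the now-free sum over $n$ into a factor $N$ yields $\frac{\tilde\gamma}{N}\sum_{n'}\ex\|\Delta\btt_{n'}(t)\|^2$, which is the claim. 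I expect the only genuinely delicate point to be the careful verification that the off-diagonal terms vanish, in particular ruling out the index swap through the $\tilde\bh$--$\bh$ independence; the circular-symmetry argument and the moment evaluation are then routine once the decoupling is established.
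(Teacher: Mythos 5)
Your proposal is correct and follows essentially the same route as the paper's proof: substitute the expression for $\Delta\widehat\theta_{i,3}(t)$, kill all off-diagonal terms of the expanded square using the zero-mean, mutually independent statistics of $\tilde\bh$ and $\bh$ (this is the paper's step $(a)$), evaluate the diagonal second moments to get $\frac{1}{2c_n^2}\tr(\bC_n\bD_n\bR_{n'}\bD_n)$ per entry (step $(b)$), then sum over $i$ and bound by $\tilde\gamma$. Your treatment is in fact more explicit than the paper's terse version, particularly in ruling out the index-swapped cross terms that do survive in Lemma \ref{term2} but vanish here because the estimation error occupies only the left slot of the bilinear form.
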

\begin{proof}
Applying the same techniques as in the proof of the previous lemma, we have for $1\leq i\leq d/2$,
\begin{align*}
\ex \left[\left(\Delta\widehat\theta_{i,3}(t)\right)^2\right] 
 &= \ex \left[\left( \frac{\alpha_t}{N}\sum_{n=1}^N \sum_{\substack{n'=1}}^N Re \left\{ \frac{ \tilde\bh_{n}^i(t)^H\bD_n \bh_{n'}^i(t)}{c_n}  \left(\Delta \theta_{n',i}(t) +j \Delta \theta_{n',i+\frac{d}{2}}(t)\right)\right\}\right)^2\right]\\
 &\overset{(a)}{=} \ex \left[\frac{\alpha_t}{N}\sum_{n=1}^N \sum_{\substack{n'=1}}^N \left( Re \left\{ \frac{ \tilde\bh_{n}^i(t)^H\bD_n \bh_{n'}^i(t)}{c_n}  \left(\Delta \theta_{n',i}(t) +j \Delta \theta_{n',i+\frac{d}{2}}(t)\right)\right\}\right)^2\right]\\
   &\overset{(b)}{=}\ex \left[\frac{1}{2N^2} \sum_{n=1}^N\sum_{\substack{n'=1}}^N \frac{1}{c_n^2}\tr(\bC_n\bD_n \bR_{n'} \bD_n)\left(\Delta\theta_{n',i}(t)^2+ \Delta\theta_{n',i+d/2}(t)^2\right)
\right].
 \end{align*}
%\frac{\alpha_t}{N}\sum_{n=1}^N \sum_{\substack{n'=1}}^N \frac{ \tilde\bh^H_{n}\bD_n \bh_{n'}}{c_n}  \left(\Delta \theta_{n',s}(t) +j \Delta \theta_{n',s+\frac{d}{2}}(t)
Similarly, for $ d/2+1\leq i\leq d$,
\begin{align*}
\ex \left[\left(\Delta\widehat\theta_{i,3}(t)\right)^2\right] 
  &= \ex \left[\frac{1}{2N^2} \sum_{n=1}^N\sum_{\substack{n'=1}}^N \frac{1}{c_n^2}\tr(\bC_n\bD_n \bR_{n'} \bD_n)\left(\Delta\theta_{n',i-d/2}(t)^2+ \Delta\theta_{n',i}(t)^2\right)
%  \right.\\
%  & \left. \quad + \frac{1}{c_nc_{n'}}\tr(\bC_n\bD_{n} \bR_{n'} \bD_{n'})\left(\Delta\theta_{n,i}(t)\Delta\theta_{n',i}(t)-\Delta\theta_{n,i-d/2}(t)\Delta\theta_{n',i-d/2}(t)\right)
\right].
\end{align*}
Summing over $i\in[d]$, we get
\begin{align*}
\sum_{i=1}^d\ex \left[\left(\Delta\widehat\theta_{i,3}(t)\right)^2\right]& =\frac{1}{N^2}\sum_{n=1}^N\sum_{\substack{n'=1}}^N  \frac{1}{c_n^2}\tr(\bC_{n}\bD_n \bR_{n'} \bD_n)\ex \|\Delta\btt_n(t)\|^2\\
& \leq \frac{\tilde{\gamma}}{N}\sum_{n=1}^N  \ex \|\Delta\btt_n(t)\|^2.
\end{align*}
\end{proof}

\begin{lemma}
\begin{align}
 \sum_{i=1}^d\ex \left[\left(\Delta\widehat\theta_{i,4}(t)\right)^2\right] \leq  \frac{d}{2N\alpha_t^2}(\kappa+\tilde\kappa)\sigma_z^2,
\end{align}
with $\kappa =\frac{1}{N}\sum_{n=1}^N \frac{1}{c_n}$ and $\tilde\kappa = \frac{1}{N}\sum_{n=1}^N \frac{ \tr(\bC_n\bD_n)}{c_n^2}$.
\label{term3}
\end{lemma}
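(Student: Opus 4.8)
The plan is to bound the noise term $\Delta\widehat\theta_{i,4}(t)$, which comes from projecting the receiver-combined noise onto the real and imaginary axes. Recall from \eqref{yis} that
$$
y^s_4(t)=\frac{1}{N}\sum_{n=1}^N\frac{ \left(\bh^{s}_{n}(t)+\tilde\bh^s_{n}(t)\right)^H\bD_n\bz^s(t)}{c_n},
$$
so $\Delta\widehat\theta_{i,4}(t)$ is $\tfrac{1}{\alpha_t}$ times the real part of $y^i_4(t)$ for $1\leq i\leq d/2$ and $\tfrac{1}{\alpha_t}$ times the imaginary part of $y^{i-d/2}_4(t)$ for $d/2+1\leq i\leq d$. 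The key structural fact is that, conditioned on the channels $\bh^s_n(t)$ and estimation errors $\tilde\bh^s_n(t)$, the quantity $y^s_4(t)$ is a zero-mean complex Gaussian because it is linear in the noise $\bz^s(t)\sim\mathcal{CN}(\bzero,\sigma_z^2\bI)$, and the noise is independent across APs and subcarriers.

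First I would compute the conditional second moment of $y^s_4(t)$. Writing $\bw^s_n(t)=\bD_n(\bh^s_n(t)+\tilde\bh^s_n(t))/c_n$, the combined noise is $\tfrac{1}{N}\sum_n \bw^s_n(t)^H\bz^s(t)$; taking the expectation over $\bz^s(t)$ gives conditional variance $\tfrac{\sigma_z^2}{N^2}\|\sum_n\bw^s_n(t)\|^2$, but since the $\bD_n$ have disjoint block support for distinct served sets and the cross terms vanish in expectation over the independent channels, the cleaner route is to take the full expectation directly. Using $\ex[\bh^s_n\bh^{s\,H}_n]=\bR_n$ and $\ex[\tilde\bh^s_n\tilde\bh^{s\,H}_n]=\bC_n$ with independence between the true channel and the estimation error, I obtain
$$
\ex\bigl|y^s_4(t)\bigr|^2 = \frac{\sigma_z^2}{N^2}\sum_{n=1}^N \frac{\tr\bigl(\bD_n(\bR_n+\bC_n)\bD_n\bigr)}{c_n^2}.
$$
The second step is to pass from $\ex|y^s_4(t)|^2$ to $\ex[(\mathrm{Re}\,y^s_4)^2]$ and $\ex[(\mathrm{Im}\,y^s_4)^2]$. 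Because the conditional distribution is circularly symmetric complex Gaussian, the real and imaginary parts each carry exactly half of the total second moment, so each equals $\tfrac12\ex|y^s_4(t)|^2$.

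Next I would divide by $\alpha_t^2$ and sum over $i\in[d]$. Each subcarrier index $s\in[d/2]$ contributes both a real-part term (for $i=s$) and an imaginary-part term (for $i=s+d/2$), so the two halves combine to give $\ex|y^s_4(t)|^2/\alpha_t^2$ per subcarrier, and summing over the $d/2$ subcarriers yields
$$
\sum_{i=1}^d\ex\bigl[(\Delta\widehat\theta_{i,4}(t))^2\bigr]
= \frac{d}{2\alpha_t^2}\cdot\frac{\sigma_z^2}{N^2}\sum_{n=1}^N \frac{\tr\bigl(\bD_n(\bR_n+\bC_n)\bD_n\bigr)}{c_n^2}.
$$
Finally I would recognize the two pieces of the trace: using $\bD_n^2=\bD_n$ (since each $\bD_{n,\ell}$ is a projector) and $\tr(\bD_n\bR_n\bD_n)=\tr(\bD_n\bR_n)=c_n$, the first piece gives $\sum_n 1/c_n = N\kappa$, while the second gives $\sum_n \tr(\bC_n\bD_n)/c_n^2 = N\tilde\kappa$. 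Substituting produces exactly the claimed bound $\tfrac{d}{2N\alpha_t^2}(\kappa+\tilde\kappa)\sigma_z^2$.

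I do not expect a genuine obstacle here, as the lemma is essentially a Gaussian variance computation; the main point requiring care is the circular-symmetry argument that splits the complex second moment evenly between real and imaginary parts, together with the bookkeeping that correctly pairs each subcarrier's real and imaginary contributions across the index ranges $1\leq i\leq d/2$ and $d/2+1\leq i\leq d$. The only other subtlety is justifying that the cross-AP and true-channel/estimation-error cross terms vanish, which follows from the independence assumptions already stated in the model and the zero-mean property of $\bz^s(t)$.
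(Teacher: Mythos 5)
Your proof is correct and follows essentially the same route as the paper: both reduce the claim to a Gaussian second-moment computation in which independence (across clients, and between the true channels, the estimation errors, and the noise) kills all cross terms, and the factor $1/2$ per real/imaginary component comes from circular symmetry. The only cosmetic difference is that you compute the complex second moment $\ex|y_4^s(t)|^2$ per subcarrier and then split it evenly between real and imaginary parts, whereas the paper computes $\ex\left(Re\{\cdot\}\right)^2$ directly for each entry $i$ and each client $n$; both arguments in fact yield equality, which is slightly stronger than the stated bound.
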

\begin{proof}
For $1\leq i\leq d/2$, we have
\begin{align*}
\ex \left[\left(\Delta\widehat\theta_{i,4}(t)\right)^2\right] &= \ex\left(\frac{1}{N\alpha_t}\sum_{n=1}^N Re\left\{\frac{\left(\bh_{n}^i(t) +\tilde\bh_{n}^i(t)\ \right)^H\bD_n\bz^i(t)}{c_n }\right\}\right)^2\\
&\overset{(a)}{=}\ex\frac{1}{N^2\alpha_t^2}\sum_{n=1}^N\frac{1}{c_n^2} \left(Re\left\{{\bh_{n}^i(t)^H \bD_n\bz^i(t)}{}\right\}\right)^2+\left(Re\left\{{\tilde\bh_{n}^i(t)^H\bD_n\bz^i(t)}\right\}\right)^2\\
&\overset{(b)}{=} \frac{1}{2N\alpha_t^2}(\kappa+\tilde\kappa)\sigma_z^2,
\end{align*}
where $(a)$ follows from the independence of the channels $\bh_n^i(t)$, the channels error $\tilde\bh_n^i(t)$ and the noise, and $(b)$ follows from 
\begin{align*}
&\frac{1}{c_n^2}\ex\left(Re\left\{{\bh_{n}^i(t)^H\bD_n\bz^i(t)}\right\}\right)^2= \frac{\sigma_z^2}{2c_n^2}\tr(\bR_n\bD_n)= \frac{\sigma_z^2}{2c_n},\\
&\frac{1}{c_n^2}\ex\left(Re\left\{{\tilde\bh_{n}^i(t)^H\bD_n\bz^i(t)}{ }\right\}\right)^2= \frac{\sigma_z^2}{2c_n^2}\tr(\bC_n\bD_n).
\end{align*}
Similar analysis yields for for $d/2+1\leq i\leq d$
\begin{align*}
\ex \left[\left(\Delta\widehat\theta_{i,4}(t)\right)^2\right]
=\frac{1}{2N\alpha_t^2}(\kappa+\tilde\kappa)\sigma_z^2.
\end{align*}
\end{proof}
%\begin{lemma}
%\begin{align}
% \sum_{i=1}^d\ex \left[\left(\Delta\widehat\theta_{i,5}(t)\right)^2\right] \leq \frac{d}{2N\alpha_t^2}\kappa\sigma_z^2
%\end{align}
%with $\kappa = \max_n \frac{\tr(\bC_n\bD_n)}{c_n^2}$.
%\label{term5}
%\end{lemma}
%\begin{proof}
%For $1\leq i\leq d/2$,
%\begin{align*}
%\ex \left[\left(\Delta\widehat\theta_{i,5}(t)\right)^2\right] &= \ex\left(\frac{1}{N\alpha_t}\sum_{n=1}^N Re\left\{\frac{\tilde\bh^H_{n}\bD_n\bz}{c_n }\right\}\right)^2\\
%&\overset{(a)}{=} \ex\frac{1}{N^2\alpha_t^2}\sum_{n=1}^N \left(Re\left\{\frac{\tilde\bh^H_{n}\bD_n\bz}{c_n }\right\}\right)^2\\
%& \overset{(b)}{\leq} \frac{1}{2N\alpha_t^2}\kappa\sigma_z^2
%\end{align*}
%where $(a)$ due to the independence of $\bh_n, \ \forall n$ form each other and from the noise vector $\bz$, and $(b)$ follows from 
%$$
%\ex\left(Re\left\{\frac{\tilde\bh^H_{n}\bD_n\bz}{c_n }\right\}\right)^2= \frac{\sigma_z^2}{c_n^2}\tr(\bC_n\bD_n)= \frac{\sigma_z^2}{c_n}\leq \kappa{\sigma_z^2}
%$$
%
%
%\end{proof}
Combining the results of Lemmas \ref{term1}-\ref{term3}, it follows that
\begin{align}
\ex\left\|\btt(t+1) -\bnu(t+1)\right\|^2 &\leq \frac{\gamma+\tilde\gamma}{N} \sum_{n=1}^N  \ex \|\Delta\btt_n(t)\|^2 + \frac{d}{2N\alpha_t^2}(\kappa +\tilde\kappa)\sigma_z^2\nonumber\\
&\leq  \frac{\gamma+\tilde\gamma}{N}\eta_t^2\tau^2G^2+ \frac{d}{2N\alpha_t^2}(\kappa +\tilde\kappa)\sigma_z^2,
\end{align}
where the last result follows from
\begin{align}
\ex \|\Delta\btt_n(t)\|^2 = \ex \left\|\sum_{i=0}^{\tau-1} \eta_t  F_{n,\xi_{n,i}^t}'(\btt_{i}^n(t)) \right\|^2 \leq \eta_t^2\tau\sum_{i=0}^{\tau-1}\left\|   F_{n,\xi_{n,i}^t}'(\btt_{i}^n(t))\right\|^2 \leq \eta_t^2\tau^2G^2.
\end{align}

\bibliographystyle{IEEEtran}
\bibliography{IEEEabrv,IEEEconf,references}
\end{document}